\let\mathbb=\mathds
\def\d{{\text {\rm d}}}
\newcommand{\ccE}{\mathcal{E}}
\DeclareMathOperator{\Tr}{Tr}
\DeclareMathOperator{\e}{\mathrm{e}}
\DeclareMathOperator{\cT}{\mathcal{T}}
\DeclareMathOperator{\F}{\mathcal{F}}
\newcommand{\bE}{{\mathbf E}}
\newcommand{\be}{{\mathbf e}}
\newcommand{\tr}{\operatorname{Tr}}
\newcommand{\al}{{\alpha}}
\newcommand{\ten}{\otimes}
\newcommand{\la}{\lambda}
\newcommand{\pl}{\hspace{.1cm}}
\newcommand{\norm}[2]       {\left\lVert{{#1}}\right\lVert_{#2}} 
\newcommand{\lan}[1]{\langle #1 \vert}
\newcommand{\ran}[1]{\vert #1 \rangle }
\def\0{{\mathbf{0}}}
\def\1{{\mathbf{1}}}
\def\2{{\mathbf{2}}}
\def\3{{\mathbf{3}}}
\def\4{{\mathbf{4}}}
\def\5{{\mathbf{5}}}
\def\6{{\mathbf{6}}}
\def\7{{\mathbf{7}}}
\def\8{{\mathbf{8}}}
\def\9{{\mathbf{9}}}
\def\be{\begin{equation}}
\def\ee{\end{equation}}
\def\bea{\begin{eqnarray}}
\def\eea{\end{eqnarray}}
\def\eps{\varepsilon}
\newcommand{\id}{\operatorname{id}}
\theoremstyle{plain}
\newtheorem{theo}{Theorem} 
\newtheorem{prop}[theo]{Proposition} 
\newtheorem{lemm}[theo]{Lemma} 
\newtheorem{coro}[theo]{Corollary} 
\theoremstyle{definition}
\newtheorem{exam}[theo]{Example}
\theoremstyle{remark}
\newtheorem{remark}[theo]{Remark}
\numberwithin{equation}{section}
\newcommand{\opnorm}{\@ifstar\@opnorms\@opnorm}
\newcommand{\@opnorms}[1]{%
	\left|\mkern-1.5mu\left|\mkern-1.5mu\left|
	#1
	\right|\mkern-1.5mu\right|\mkern-1.5mu\right|
}
\newcommand{\@opnorm}[2][]{%
	\mathopen{#1|\mkern-1.5mu#1|\mkern-1.5mu#1|}
	#2
	\mathclose{#1|\mkern-1.5mu#1|\mkern-1.5mu#1|}
}
\begin{document}

\let\origmaketitle\maketitle
\def\maketitle{
	\begingroup
	\def\uppercasenonmath##1{} 
	\let\MakeUppercase\relax 
	\origmaketitle
	\endgroup
}

\title{\bfseries \Large{Joint State-Channel Decoupling and \\One-Shot Quantum Coding Theorem
}}

\author{  \normalsize \textsc{Hao-Chung Cheng$^{1\textrm{--}4}$, Fr{\'{e}}d{\'{e}}ric Dupuis$^{5}$, and Li Gao$^{6}$}}
\address{\small  	
	$^1$Department of Electrical Engineering and Graduate Institute of Communication Engineering,\\ National Taiwan University, Taipei 106, Taiwan (R.O.C.)\\
	$^2$Center for Quantum Science and Engineering,  National Taiwan University\\
	$^3$Hon Hai (Foxconn) Quantum Computing Center, New Taipei City 236, Taiwan (R.O.C.)\\
    $^4$Physics Division, National Center for Theoretical Sciences, Taipei 106, Taiwan (R.O.C.)\\
	$^5$D{\'e}partement d'informatique et de recherche op{\'e}rationnelle,
	Universit{\'e} de Montr{\'e}al,
	Montr{\'e}al, Qu{\'e}bec\\
	$^6$School of Mathematics and Statistics, Wuhan University, Hubei Province 430072, P.~R.~China
}

\email{\href{mailto:haochung.ch@gmail.com}{haochung.ch@gmail.com}}
\email{\href{mailto:dupuisf@iro.umontreal.ca}{dupuisf@iro.umontreal.ca}}
\email{\href{mailto:gao.li@whu.edu.cn}{gao.li@whu.edu.cn}}

\date{\today}

\begin{abstract}
In this work, we consider decoupling a bipartite quantum state via a general quantum channel. We propose a joint state-channel decoupling approach to obtain a one-shot error exponent bound without smoothing, in which trace distance is used to measure how good the decoupling is. The established exponent is expressed in terms of a sum of two sandwiched R{\'e}nyi entropies, one quantifying the amount of initial correlation between the state and environment, while the other characterizing the effectiveness of the quantum channel. This gives an explicit exponential decay of the decoupling error in the whole achievable region, which was missing in the previous results [\href{https://link.springer.com/article/10.1007/s00220-014-1990-4











        }{Commun.~Math.~Phys.~328, 2014}]. Moreover, it strengthens the error exponent  bound obtained in a recent work [\href{https://ieeexplore.ieee.org/document/10232924}{	IEEE Trans.~Inf.~Theory, 69(12), 2023}], for exponent from the channel part.
As an application, we establish a one-shot error exponent bound for quantum channel coding given by a sandwiched R\'enyi coherent information.
\end{abstract}

\maketitle

%


\section{Introduction} \label{sec:main}

Quantum decoupling is a substantial technique in quantum physics and quantum  information theory \cite{HHW+08, Dup10, DBW+14, MBD+17, LY21a}.
It concerns performing a decoupling map $\mathcal{T}_{A\to C}$ composed with a random unitary operation on a subsystem $A$ of a bipartite state $\rho_{AE}$ such that the resulting state is decoupled, i.e. ~uncorrelated with system $E$.
This task serves as a vital subroutine in achieving numerous quantum information-processing protocols such as quantum channel coding \cite{Dup10}, quantum state merging \cite{Horodecki2005}, quantum splitting \cite{Proc465}, quantum channel simulation \cite{LY21b} and so forth.

A fundamental question regarding quantum decoupling is how well can the final state be decoupled, i.e.,~close to a product from in the form of $\omega_C\ten \rho_E$, for some channel output state $\omega_C$.
Whereas a standard scenario is $\mathcal{T}$ being a partial trace channel, we in this work consider $\mathcal{T}$ to be an {\bf arbitrary quantum channel}. In this setting, the following error bound in terms of the conditional collision entropies \cite{Ren05} (also called conditional 2-entropies) was first obtained by Dupuis, Berta, Wullschleger and Renner \cite{DBW+14}.
\begin{theo}[
	Decoupling via conditional collision entropies {\cite[Theorem 3.3]{DBW+14}}] \label{theo:H2}
	For any quantum channel $\mathcal{T}_{A\to C}$ and any bipartite state $\rho_{AE}$,
	\[
	\mathbf{E}_{\mathds{U}(\mathsf{A})} \left\| \cT\left(U_A \rho_{AE} U_A^\dagger \right)- \omega_{C}\ten \rho_E\right\|_{1}
	\le \mathrm{e}^{-\frac{1}{2}H_2^*(A'{\,|\,}C)_{\omega}-\frac{1}{2}H_2^*(A{\,|\,}E)_{\rho} }.
	\]
\end{theo}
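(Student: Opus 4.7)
The plan is to carry out the standard two-step decoupling argument---trace-to-Hilbert--Schmidt reduction followed by Haar integration via the swap trick---adapted to the setting where $\mathcal{T}_{A\to C}$ is an arbitrary channel rather than a partial trace. \textbf{Step 1 (Trace-to-$2$-norm).} First I would invoke the Cauchy--Schwarz-type inequality
\[
\|Y_{CE}\|_1 \le \bigl\|(\sigma_C \otimes \sigma_E)^{-1/4}\, Y_{CE}\, (\sigma_C \otimes \sigma_E)^{-1/4}\bigr\|_2,
\]
valid for any states $\sigma_C,\sigma_E$ of suitable support, applied to
$Y_{CE} := \mathcal{T}(U_A \rho_{AE} U_A^\dagger) - \omega_C \otimes \rho_E$.
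Taking the Haar average over $U_A$ and using Jensen's inequality to commute expectation with the square root reduces the task to bounding the second moment
\[
M(\sigma_C,\sigma_E) \;:=\; \mathbf{E}_U \bigl\|(\sigma_C \otimes \sigma_E)^{-1/4}\, Y_{CE}\, (\sigma_C \otimes \sigma_E)^{-1/4}\bigr\|_2^2.
\]

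\textbf{Step 2 (Haar integration).} Next I would encode the channel via its normalized Choi state $\tau_{A'C} := \mathcal{T}_{A\to C}(\phi_{AA'})$, with $\phi_{AA'}$ maximally entangled, so that $M(\sigma_C,\sigma_E)$ can be rewritten as a Haar average of a quartic expression in $U_A$. The integral is then evaluated by the standard swap-moment formula
\[
\mathbf{E}_U (U_A \otimes U_A)\, Z_{AA}\, (U_A^\dagger \otimes U_A^\dagger) = \alpha\, I_{AA} + \beta\, F_{AA},
\]
with Weingarten coefficients $\alpha,\beta$ determined by $\Tr(Z)$ and $\Tr(Z F_{AA})$. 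The centering by $\omega_C \otimes \rho_E$ is designed precisely so that the identity-type contribution cancels (since $\mathbf{E}_U U_A \rho_{AE} U_A^\dagger = I_A/d_A \otimes \rho_E$ and $\omega_C = \mathcal{T}(I_A/d_A)$), leaving only the swap-type term. The latter factorizes across the $C$ and $E$ systems, yielding, up to a benign $d_A^2/(d_A^2-1)$ factor,
\[
M(\sigma_C,\sigma_E) \le \bigl\|\sigma_E^{-1/4}\, \rho_{AE}\, \sigma_E^{-1/4}\bigr\|_2^2 \cdot \bigl\|\sigma_C^{-1/4}\, \tau_{A'C}\, \sigma_C^{-1/4}\bigr\|_2^2.
\]

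\textbf{Step 3 (R\'enyi identification and obstacle).} To finish, I would minimize over $\sigma_E$ and $\sigma_C$ independently; by the variational definition of the sandwiched conditional 2-R\'enyi entropy,
\[
\inf_{\sigma_E} \bigl\|\sigma_E^{-1/4}\, \rho_{AE}\, \sigma_E^{-1/4}\bigr\|_2^2 = \mathrm{e}^{-H_2^*(A\,|\,E)_\rho},
\]
and analogously $\inf_{\sigma_C} \bigl\|\sigma_C^{-1/4}\, \tau_{A'C}\, \sigma_C^{-1/4}\bigr\|_2^2 = \mathrm{e}^{-H_2^*(A'\,|\,C)_\omega}$. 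Taking the square root of the resulting product gives exactly the target bound $\mathrm{e}^{-\frac{1}{2}H_2^*(A'|C)_\omega - \frac{1}{2}H_2^*(A|E)_\rho}$. The main obstacle I anticipate is the careful bookkeeping in Step 2: one has to verify that the centering annihilates the identity contribution exactly so that no spurious dimension factor survives, and that the remaining swap term factorizes cleanly between the $C$ and $E$ legs, producing a $\rho$-dependent piece on one side and a $\tau$-dependent piece on the other. Once those two points are pinned down, the identification with the sandwiched 2-R\'enyi entropies and the splitting $\sqrt{ab} = \sqrt{a}\sqrt{b}$ into the two half-exponents are routine.
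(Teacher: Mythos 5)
You are proving a statement that this paper does not prove itself: Theorem~\ref{theo:H2} is quoted from \cite[Theorem 3.3]{DBW+14}, and your outline reconstructs essentially that original argument — weighted trace-to-Hilbert--Schmidt reduction with $\|X\|_1\le\|\sigma^{-1/4}X\sigma^{-1/4}\|_2$ for a state $\sigma=\sigma_C\otimes\sigma_E$, Jensen, rewriting $\mathcal{T}(U\rho U^\dagger)=d^2\bra{\Phi_{AA'}}\omega_{A'C}\otimes U\rho_{AE}U^\dagger\ket{\Phi_{AA'}}$, the two-fold Haar/swap computation, and the identification $\inf_{\sigma_E}\|\sigma_E^{-1/4}\rho_{AE}\sigma_E^{-1/4}\|_2^2=\mathrm{e}^{-H_2^*(A|E)_\rho}$, which is correct since $D_2^*(\rho\|\mathds{1}\otimes\sigma)=\log\|\sigma^{-1/4}\rho\sigma^{-1/4}\|_2^2$. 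The closest material in the present paper is Lemma~\ref{lemm:L2}, which performs the same Haar computation but deliberately for arbitrary (non-positive) operators, at the price of a constant $4/3$, because it feeds into complex interpolation (see Remark~\ref{rem:4of3}); your route and the cited one instead exploit positivity to get constant $1$.

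That is exactly where your write-up has a genuine gap: Step 2 concedes a factor $d^2/(d^2-1)$ in the second-moment bound, yet Step 3 claims the bound of the theorem ``exactly''; these are incompatible, since after the square root you would retain $d/\sqrt{d^2-1}\le 2/\sqrt{3}$, while the stated inequality has no constant. The factor cannot be removed by bookkeeping alone. Writing $Y=\tilde\omega_{A'C}\otimes\tilde\rho_{AE}$ for the weighted operator and $\mathring{Y}=Y-\mathrm{E}_A(Y)$, the Haar average of the centered second moment equals $\frac{d^2}{d^2-1}\|\mathring{Y}\|_2^2-\frac{d}{d^2-1}\|\mathring{Y}_{ACE}\|_2^2$ (the $A'CE$ and $CE$ marginal terms vanish by centering, as in the proof of Lemma~\ref{lemm:L2}); discarding the negative term is what leaves the $\frac{d^2}{d^2-1}$ factor. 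To reach constant $1$ you must keep it, expand $\|\mathring{Y}\|_2^2=\|Y\|_2^2-\frac{1}{d}\|Y_{A'CE}\|_2^2$ and $\|\mathring{Y}_{ACE}\|_2^2=\|Y_{ACE}\|_2^2-\frac{1}{d}\|Y_{CE}\|_2^2$, and invoke the marginal inequalities $\|Y_{A'CE}\|_2^2\ge\frac{1}{d}\|Y\|_2^2$ and $\|Y_{CE}\|_2^2\le d\,\|Y_{ACE}\|_2^2$; the first of these is the lower bound in \eqref{eq:dim} and holds only because the weighted operator $Y$ is positive. This positivity step is the missing ingredient (and the precise point Remark~\ref{rem:4of3} highlights); with it added, your argument closes and yields the cited bound with constant $1$.
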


\noindent Here, $\mathbf{E}_{\mathds{U}(\mathsf{A})}$ denotes the integration over the Haar measure over the unitary group $\mathds{U}(\mathsf{A})$ on Hilbert space $\mathsf{A}$; $\omega_{A'C}$ is the Choi-Jamio{\l}kowski state of the channel $\mathcal{T}$, defined as $\omega_{A'C}:= (\mathcal{T}_{A\to C}\otimes \textnormal{id}_{A'})(\Phi_{AA'})$, where $\Phi_{AA'}$ is the maximally entangled state between system $A$ and $A'$, and $\|\cdot\|_1$ denotes the trace norm. The conditional collision entropies $H_2^*(A|B)$ is a special case of $\al$-R\'enyi sandwiched conditional entropy defined through the corresponding sandwiched relative entropy $D_\alpha^*$
\cite{MDS+13, WWY14}
\[ H_\alpha^*(A{\,|\,}B)_\rho := - \inf_{\sigma_B} D_\alpha^*( \rho_{AB}\Vert \mathds{1}_A\otimes \sigma_B)\pl, \pl D_\alpha^*(\rho\Vert \sigma) := \frac{1}{\alpha-1}\log \Tr\left[ (\sigma^{\frac{1-\alpha}{2\alpha}} \rho \sigma^{\frac{1-\alpha}{2\alpha}} )^\alpha \right]\pl.\]

This one-shot error bound of decoupling can be applied to the asymptotic scenario where the underlying state $\rho^{\ten n}$ and decoupling map $\cT^{\ten n}$ are prepared independently and identically distributed (i.i.d.). However, it does not directly give the fundamental achievability criterion of decoupling, i.e.,~positivity of the sum of conditional entropies:
\begin{align} \label{eq:first-order}
	H(A'{\,|\,}C)_{\omega} + H(A{\,|\,}E)_{\rho} > 0. \tag{1}
\end{align}
To reach the above achievability criterion, one has to invoke the \emph{smooth entropy framework}: first relaxing the conditional collision entropies estimate via the smooth conditional 2-entropies \cite[Theorem 3.1]{DBW+14},
\footnote{
A recent work by {Colomer Saus} and Winter \cite{SW23} established \eqref{eq:smooth2}
with $H_2^{\varepsilon}(A|E)_\rho $ being improved by the sandwiched R\'enyi conditional entropy $H_{\alpha}^*(A|E)_\rho$.
}
\begin{align}\label{eq:smooth2}
	\mathbf{E}_{\mathds{U}(\mathsf{A})} \left\| \cT\left( U_A \rho_{AE} U_A^\dagger  \right)- \omega_{C}\ten \rho_E\right\|_{1}
	\le \mathrm{e}^{-\frac{1}{2}H_2^\eps (A'{\,|\,}C)_{\omega}-\frac{1}{2}H_2^\eps (A{\,|\,}E)_{\rho} }+12\eps.
\end{align}
where $\eps \in (0,1)$ and $H_2^\eps(A{\,|\,}E)_{\rho}:=\sup_{\rho'\sim_\eps\rho}  H_2(A{\,|\,}E)_{\rho'}$ is the smooth conditional 2-entropy over $\eps$-ball of purified distance.
In the i.i.d. setting, the criterion \eqref{eq:first-order} can then be derived using the Quantum Asymptotic Equipartition Property (AEP) of the smooth conditional min-entropy \cite{TCR10, TH13}
\begin{align} H_2^\eps(A^n{\,|\,}E^n)_{\rho^{\ten n}}= n H(A{\,|\,}E)_{\rho}+o(n)\pl.  \label{AEP}\end{align}
Nevertheless, the resulting one-shot error exponent bound is significantly weakened.
This magnifies that the ``smoothed" one-shot error bound may not be tight in the so-called \emph{large deviation regime},  as the ``smoothing" technique is artificial and it  introduces unnecessary costs and complications.\footnote{ Large deviation regime concerns the exponential behaviors of the trace distance error, i.e.~the left-hand side of \eqref{eq:smooth2}, in the i.i.d.~asymptotic limit.
We refer the readers to Ref.~\cite{SGC24} for the discussions of the \emph{small deviation regime}, i.e.,~how large is the remainder dimension of system $C$ given a fixed trace distance error.
}

In this work, we establish the following one-shot quantum decoupling theorem using $\al$-R\'enyi sandwiched conditional entropy, which improves upon the previous bounds (e.g.,~Theorem~\ref{theo:H2}).

\begin{shaded_theo}[One-shot achievability bound] \label{theo:exp}
	For any bipartite density operator $\rho_{AE}$ and any quantum channel $\mathcal{T}_{A\to C}$ with Choi state $\omega_{A'C}:= (\mathcal{T}_{A\to C}\otimes \textnormal{id}_{A'})(\Phi_{AA'})$,
	\begin{align}
	\frac12\mathbf{E}_{\mathds{U}(\mathsf{A})} \left\|\mathcal{T}_{A\to C}\left(U_A \rho_{AE} U_A^\dagger \right) - \omega_{C}\otimes\rho_E \right\|_1
	\leq \e^{\frac{1-\alpha}{\alpha} \left( H_\alpha^*(A'{\,|\,}C)_{\omega} +H_\alpha^*(A{\,|\,}E)_\rho
 + \log 3 \right)}, \quad \forall \alpha \in [1,2].
	\end{align}
Moreover, the error exponent $\sup_{\alpha\in[1,2]} \frac{\alpha-1}{\alpha} \left( H_\alpha^*(A{\,|\,}E)_\rho + H_\alpha^*(A'{\,|\,}C)_{\omega} \right)>0$ is positive if and only if
             $ H(A'{\,|\,}C)_{\omega} + H(A{\,|\,}E)_{\rho} > 0$ is satisfied.
\end{shaded_theo}

In the i.i.d.~ scenario, our one-shot result holds for any number $n \in \mathds{N}$ of copies, and thereby it demonstrates exponential decay of trace distance for the achievability criterion \eqref{eq:first-order}, by the limit $H_\alpha^*(A{\,|\,}E)_\rho\overset{\al\to 1}{\longrightarrow} H(A{\,|\,}E)_\rho$  \emph{without smoothing}.
The proposed argument is a \emph{joint state-channel decoupling} that places an equal and joint role to the state $\rho$ and the channel $\cT$ in the decoupling. Based on that, we use complex interpolation of noncommutative $L_p$-spaces to ``R\'enyify" the contributions of the state $\rho$ and the channel $\mathcal{T}$ simultaneous.
In \cite{Dup23}, one of the present authors proved a one-shot decoupling theorem that ``R\'enyifies" only the state $\rho$ but not the channel $\mathcal{T}$, whose error exponent is in spirit as $\frac{\alpha-1}{\alpha} H_\alpha^*(A{\,|\,}E)_\rho+\frac{1}{2} H_2^*(A'{\,|\,}C)_{\omega}$. Without smoothing, this can only give exponential achievability for the channel $\cT$ being the partial traces. In contrast, our Theorem \ref{theo:exp} gives an explicit achievable error exponent for all quantum channels $\cT$ and states $\rho$.
Remarkably, the proposed joint state-channel decoupling framework works beyond independence $\omega_{A'C} \otimes \rho_{AE}$; namely, our analysis applies to the decoupling scenario that the channel could be correlated to the state.
We refer the readers to Theorem~\ref{thm:main} of Section~\ref{sec:achievability} for a more precise statement and to Section~\ref{sec:randomness-assistance} for the setting of \emph{randomness-assisted decoupling}.

Based on the joint state-channel decoupling and the pretty-good measurement method developed in \cite{CG22,shen2022strong, Cheng2022b}, we also obtain an one-shot exponential strong converse for quantum decoupling
\begin{shaded_theo}[One-shot strong converse bound]\label{theo:converse}
For any bipartite density operator $\rho_{AE}$ and for any quantum channel $\mathcal{T}_{A\to C}$ with Choi state $\omega_{A'C}$,
\begin{align}
\frac12 \bE_{\mathds{U}(\mathsf{A})}  \left\| \mathcal{T}\left( U_A \rho_{AE} U_A^\dagger \right) - \omega_C \otimes \rho_E \right\|_1 \geq 1-	2   \mathrm{e}^{ (1 - \alpha ) \left( H_{\alpha}^{\downarrow}(A' {\,|\,} C)_\omega + H_{\alpha}^{\downarrow}(A {\,|\,} E)_\rho+\log \frac{4}{3} \right) },
\quad \alpha \in (0,1).
\end{align}
In particular, the strong converse exponent $\sup_{\alpha\in (0,1) } (\alpha-1) \left( H_{\alpha}^{\downarrow}(A' {\,|\,} C)_\omega + H_{\alpha}^{\downarrow}(A {\,|\,} E)_\rho\right)>0$ is positive if and only if $ H(A'{\,|\,}C)_{\omega} + H(A{\,|\,}E)_{\rho} < 0$ is satisfied.
\end{shaded_theo}
Here, the Petz--R\'enyi conditional $\al$-entropy is defined as $H_{\alpha}^{\downarrow}( A|E )_\tau=\frac{1}{1-\alpha}\log \tr\left[ \rho_{AE}^\alpha (\mathds{1}_A\ten \rho_{E})^{1-\alpha}\right], \alpha\in (0,1)$.
By the limit $H_{\alpha}^{\downarrow}( A|E )_\rho \overset{\al\to 1}{\longrightarrow} H( A|E )_\rho$, we establish the exponential strong converse for any pair of state-channel $(\rho_A, \cT_{A\to C})$ outside the (closure of) achievable region \eqref{eq:first-order}.


As an application, we obtain the following one-shot coding theorem for entanglement-assisted quantum communication \cite{Llo97, Dev05, HHW+08, BSS+02}. We refer to Section \ref{sec:coding} for detail definition of code and error for entanglement-assisted quantum communication.
\begin{shaded_theo}[One-shot error bound for quantum communication] \label{theo:communication0}
	For any quantum channel $\mathcal{N}_{A'\to C}$ and any pure state $\sigma_{AA'}$ with $\mathsf{A}'\cong \mathsf{A}$,
	there exists a $(\mathrm{Q}, \mathrm{E}, \eps)$-entanglement-assisted quantum communication code for $\mathcal{N}_{A'\to C}$ satisfying
	\begin{align}\label{eq:error}
		\eps \leq \sqrt{ 2 \sqrt{ 2\delta_1 } + 2 \delta },
	\end{align}
	where the error terms are
	\begin{align}
		\delta_1 &= 6 \, \mathrm{e}^{ - \sup_{\alpha\in[1,2]} \frac{\alpha-1}{\alpha}\left( H_{\alpha}^*(A)_\sigma - (\mathrm{Q}+\mathrm{E}) \right) }; \\
		\delta_2 &= 6 \, \mathrm{e}^{ - \sup_{\beta \in [\sfrac23, 1]} \frac{1-\beta}{\beta} \left( I_{\beta}^*(A\rangle C)_{\mathcal{N}(\sigma)} - (\mathrm{Q} - \mathrm{E}) \right) }.
	\end{align}
	Here, $I_{\beta}^*(A\rangle B)_\rho := \inf_{\sigma_B} D_\beta^*(\rho_{AB}\Vert \mathds{1}_A\otimes \sigma_B)$ denotes the sandwiched R\'enyi coherent information
	\cite{KW20}.
	
	Moreover, the achievable error exponent (for $\eps$) is positive if and only if
	\begin{align} \label{eq:criterion}
		\mathrm{Q} + \mathrm{E} < H(A)_\sigma, \quad \text{and} \quad
		\mathrm{Q} - \mathrm{E} < I\left(A\rangle C\right)_{\mathcal{N}(\sigma)},
	\end{align}
    where $\mathrm{Q}$ stands for the number of qubits to be transmitted and $\mathrm{E}$ stands for the number of ebits to be consumed.
\end{shaded_theo}

The above coding theorem was obtained in \cite[Theorem 3.14]{dupuis2010decoupling} with smooth conditional 2-entropies, which implies the achievable rate region \eqref{eq:criterion} in the i.i.d setting by AEP \eqref{AEP}.
In contrast, our result demonstrates explicit exponential decays of the communication error for all the rate pairs $(\mathrm{Q},\mathrm{E})$ in the region \eqref{eq:criterion}, as both the R\'enyi entropy $H_{\alpha}^*(A)_\sigma$ and the sandwiched R\'enyi coherent information $I_{\beta}^*(A\rangle C)_{\mathcal{N}(\sigma)}$ are additive under product states (Proposition \ref{lemm:multiplicative}).  We emphasize that our results hold for arbitrary blocklength $n\in\mathds{N}$ without any assumption about asymptotics. It has the advantage that given a pair $(\mathrm{Q},\mathrm{E})$ and a error threshold $\eps\in (0,1)$, it is easy to find a concrete $n$ such that one can construct a $(\mathrm{Q},\mathrm{E},\eps)$-code on $n$ uses of channel $\mathcal{N}^{\ten n}$, without going through the detailed estimate in AEP \eqref{AEP}.

The rest of the paper is organized as follows: Section \ref{sec:preliminaries} reviews the definitions of the entropic quantities and related vector-valued noncommutative $L_p$ norms. The latter is the key mathematical technique in our proof of achievability. In Section 3, we prove the Theorem \ref{theo:exp} and Theorem \ref{theo:converse} through joint state-channel decoupling. Section 4 is devoted to the application to quantum coding theorem \ref{theo:communication}. We conclude the paper with a discussion on the error exponent in our results.\\

\section{Preliminaries on entropic quantity and vector-valued noncommutative $L_p$ norms} \label{sec:preliminaries}
We denote by $\mathcal{B(H)}$ the set of bounded operators on a
(possibly infinite-dimensional) separable
Hilbert space $\mathcal{H}$ and by $\mathcal{S(H)}$ the set of density operators on $\mathcal{H}$.  For any $\alpha>0$,  the Schatten $\al$-norm of an operator $X$ is defined
\begin{align} \label{eq:Schatten}
	\begin{split}
		\|X\|_{\al} &\equiv \|X\|_{S_\alpha(\mathcal{H})} := \left(\tr\left[|X|^\al\right]\right)^{\frac{1}{\al}}
	\end{split}
\end{align}
We denote $S_\alpha\left( \mathcal{H} \right) := \left\{ X \in \mathcal{B(H)} : \|X\|_\alpha < \infty \right\}$ as the Schatten $\al$-class on a Hilbert space $\mathcal{H}$.
Throughout this paper, we will use $A, B, C, \cdots$ to label quantum systems, and use $\mathsf{A}, \mathsf{B}, \mathsf{C}, \cdots$ to denote the Hilbert spaces associated to quantum systems $A, B, C, \cdots$.
We write $|A|$ as the dimension of $\mathsf{A}$.
The set of density operators on quantum system $A$ is denoted as $\mathcal{S}(\mathsf{A})$.
We use $\mathds{1}_A$ as the identity operator on $\mathsf{A}$, and $\id_{A}$ as the identity super-operator on $\mathcal{B}(\mathsf{A})$.

Let $\al\in (0,1)\cup (1,\infty]$ and $\frac{1}{\al}+\frac{1}{\al'}=1$.
For a bipartite density operator $\rho_{AB}$ on $\mathsf{A}\otimes \mathsf{B}$, the order-$\alpha$ \emph{sandwiched conditional entropy} $H_\al^*$ \cite{MDS+13 , WWY14} and
a variant \emph{Petz--R\'enyi conditional entropy} $H_\al^{\downarrow}$ \cite{Pet86} are
\begin{align}
	&H_\al^*(A {\,|\,} B)_\rho=
-\al'\log \inf_{\sigma_B \in\mathcal{S}( \mathsf{B} ) } \left\| \left(\mathds{1}_A\ten\sigma_B^{-\frac{1}{2\al'}} \right) \rho_{AB} \left(\mathds{1}_A\ten\sigma_B^{-\frac{1}{2\al'}}\right)\right\|_{\al}\pl, \label{eq:conditional_sandwiched}
\\
&H_\al^{\downarrow}(A {\,|\,} B)_\rho=
-\frac{1}{\al-1}\log \tr \left[\rho_{AB}^{\al}
 (\mathds{1}_A\ten\rho_B)^{1-\al} \right]\pl,
\label{eq:conditional_Petz}
\end{align}
where we have used the Schatten $\al$-norm defined in \eqref{eq:Schatten}.
Note that both the quantities converge to the conventional conditional entropy, i.e.,
\begin{align} \label{eq:conditional}
    \lim_{\alpha \to 1} H_\al^*(A {\,|\,} B)_\rho
    = \lim_{\alpha \to 1} H_\al^{\downarrow}(A {\,|\,} B)_\rho
    = - \Tr\left[\rho_{AB} \left(\log \rho_{AB} - \log \mathds{1}_A \otimes \rho_B \right) \right]
    =: H(A{\,|\,} B)_{\rho}.
\end{align}
Moreover, it is well known that both $H_\al^*(A {\,|\,} B)_\rho$ and $H_\al^{\downarrow}(A {\,|\,} B)_\rho$ are monotonically non-increasing in $\alpha$.

Note that the  order-$\alpha$ sandwiched conditional entropy is essentially the scaled logarithmic $S_1(\mathsf{B},S_\al(\mathsf{A}))$-norm
\begin{align}
	\norm{\rho}{S_1(\mathsf{B},S_\al(\mathsf{A}))}:=\inf_{\sigma_B \in\mathcal{S}( \mathsf{B} ) }\left\|( \mathds{1}_A\ten\sigma_B)^{-\frac{1}{2\al'}} \rho_{AB}( \mathds{1}_A\ten\sigma_B)^{-\frac{1}{2\al'}}\right\|_{\al},
\end{align}
which is a special case of a \emph{noncommutative vector-valued $L_p$-norm} introduced by Pisier \cite{pisier1998non}. For a general operator $X_{AB}\in \mathcal{B}(\mathsf{A}\otimes \mathsf{B})$, this norm is defined, for all $1\leq \alpha,r \leq \infty$ and $\frac{1}{\beta} = \left| \frac{1}{\alpha} - \frac{1}{r}\right|$, as
\begin{align}
	\norm{X_{AB}}{S_r(\mathsf{B},S_\al(\mathsf{A}))} :=
	\begin{dcases}
		\sup_{ \|a\|_{2\beta} \|b\|_{ 2\beta} \leq 1 } \left\|(\mathds{1}_A \otimes a) X_{AB} ( \mathds{1}_A \otimes b ) \right\|_{ S_\alpha (\mathsf{A}\otimes \mathsf{B}) } & \alpha \leq r; \\
		\inf_{X_{AB} =( \mathds{1}_A\ten a )Y_{AB}(\mathds{1}_A\ten b)}\norm{a}{2\beta}\norm{Y_{AB}}{\al}\norm{b}{2\beta} & \alpha \geq r. \\
	\end{dcases}
\end{align}
For the case of $\alpha \geq r = 1$,
\begin{align} \norm{X_{AB}}{S_1(\mathsf{B},S_\al(\mathsf{A}))}
	&=\inf_{X_{AB}=( \mathds{1}_A\ten a)Y_{AB}(\mathds{1}_A\ten b)}\norm{a}{2\al'}\norm{Y_{AB}}{\al}\norm{b}{2\al'}\\
	&=\inf_{\sigma_B, \,\omega_B \in\mathcal{S}( \mathsf{B} )} \left\| \left(\mathds{1}_A\ten\sigma_B^{-\frac{1}{2\al'}} \right) \rho_{AB} \left(\mathds{1}_A\ten\omega_B^{-\frac{1}{2\al'}}\right)\right\|_{\al},
\end{align}
where the first infimum is for all $a,b\in \mathcal{B}(\mathsf{B})$ with factorization $X=(1\ten a)Y(1\ten b)$ and the second infimum is for all density operators $\sigma_B$ and $\omega_B$ on $\mathsf{B}$. When $X$ is positive, it suffices to take $\sigma=\omega$ and it gives the relation
\[
H_\al^*(A {\,|\,} B)_\rho=
-\al'\log \norm{\rho}{S_1(\mathsf{B},S_\al(\mathsf{A}))}.
\]
In particular, for $\al=1$, one has
\begin{align} \label{eq:L1_reduction}
	S_1(\mathsf{B},S_1(\mathsf{A}))\cong S_1(\mathsf{A}\ten \mathsf{B})
\end{align}
being the trace class, and for $\al=\infty$ and positive $\rho_{AB}\ge 0$,
\begin{align}
\norm{\rho_{AB}}{S_1(\mathsf{B}, \,S_\infty(\mathsf{A}))}=\inf_{\sigma_B \in\mathcal{S}( \mathsf{B} ) }\{\lambda\in \mathds{R} : \rho_{AB}\le \lambda \mathds{1}_A\ten \sigma_B\}\ .
\end{align}
which corresponds to the conditional min-entropy $H_{\text{min}}(A\,|\, B)_{\rho} := - \log \norm{\rho_{AB}}{S_1(\mathsf{B}, \,S_\infty(\mathsf{A}))}$ \cite{KRS09}.
These $L_p$-spaces satisfies the duality relation
\[ S_r(\mathsf{B},S_\al (\mathsf{A}))^*=S_{r'}(\mathsf{B},S_{\al'} (\mathsf{A}))\pl, \quad  \frac{1}{\al}+\frac{1}{\al'}=\frac{1}{r}+\frac{1}{r'}=1 \]
with trace pairing $\langle X_{AB},Y_{AB} \rangle =\tr_{AB}[XY]$. In this paper, we will mostly uses $S_1(\mathsf{B}, \,S_\al(\mathsf{A}))$ and its dual space $S_\infty(\mathsf{B}, \,S_{\al'}(\mathsf{A}))$. Another useful properties about these $L_p$-spaces is that they satisfy the complex interpolation relation that for $\frac{1}{p}=\frac{1-\theta}{p_0}+\frac{\theta}{p_1}, \frac{1}{q}=\frac{1-\theta}{q_0}+\frac{\theta}{q_1}$, and $\theta\in[0,1], p_0,p_1,q_0,q_1\in [1,\infty]$,
\begin{align} S_q(\mathsf{B},S_{p}(\mathsf{A}))=[S_{q_0}(\mathsf{B},S_{p_0}(\mathsf{A})),S_{q_1}(\mathsf{B},S_{p_1}(\mathsf{A}))]_{\theta}\ ,
\end{align}
We refer to Appendix \ref{sec:interpolation} for more information about complex interpolation.

\section{Joint State-Channel Decoupling} \label{sec:joint}
\subsection{One-shot achievability} \label{sec:achievability}
In this section, we prove our main result of Theorem~\ref{theo:exp} and beyond. Let $\cT_{A\to C}:\mathcal{S}(\mathsf{A}) \to \mathcal{S}(\mathsf{C})$ be a quantum channel (i.e.,~a completely positive and trace-preserving map) and let
\[\omega_{A'C}=\id_{A'}\ten \cT_{A\to C}(\Phi_{A'A}).
\]
be its (normalized) Choi state \cite{Jam72, Cho74}. Here, we write $\Phi_{A'A}=\frac{{1}}{|A|}\sum_{i,j}\ket{i}\bra{j}_{A'}\ten \ket{i}\bra{j}_A$ as the maximally entangled state between $A$ and $A'$ and $\ket{\Phi_{A'A}}=\frac{{1}}{\sqrt{|A|}}\sum_{i}\ket{i}_{A'}\ket{i}_{A}$ as the corresponding vector.
Note that the reduced state on $\mathsf{C}$, $\omega_{C}=\cT(\frac{\mathds{1}_A}{|A|})$, is the output of maximally mixed state.
Write $d=|A|$ for short notation.
We recall that for any $X_A$ and $Y_{A'}$
\[ \bra{\Phi_{AA'}}X_A\ten Y_{A'} \ket{\Phi_{AA'}}=\frac{1}{d^2}\tr [ X^tY]\pl,\]
where $X^t$ is the transpose of $X$ with respect to the basis $\ket{i}_A$. It follows that for any density $\rho_{AE}$,
\begin{align}\label{eq:tensor}
	\cT_{A\to C}(\rho_{AE})=d^2\bra{\Phi_{AA'}}\omega_{A'C}\ten\rho_{AE}\ket{\Phi_{AA'}}.
\end{align}
Given a bipartite operator $X_{AE}$ and a unitary $U_A \in \mathds{U}(\mathsf{A})$, we write
\begin{align}
	X^U_{AE}:=(U_A\ten \mathds{1}_E)X_{AE}(U_A^\dagger\ten \mathds{1}_{E})\pl,
\end{align}
where we drop the subscript `$A$' of $U$ in $X^U_{AE}$ for simplicity. The observation \eqref{eq:tensor} leads to our key construction of joint state-channel decoupling.

For Hilbert spaces $\mathsf{C}, \mathsf{E}$ and $\mathsf{A}\simeq \mathsf{A'}$ with dimension $d = |\mathsf{A}|$, we define:
\begin{align}
	&\Theta : \mathcal{B}(\mathsf{A'}\otimes \mathsf{C}\otimes\mathsf{A}\otimes \mathsf{E}) \to L_\infty ( \mathds{U}(\mathsf{A}), \mathcal{B}(\mathsf{C}\otimes \mathsf{ E}) );\\
	&\Theta(Y_{A'ACE})(U_A) := d^2\bra{\Phi_{A'A}}  U_A \, Y_{A'ACE} \,  U_A^\dagger \ket{\Phi_{A'A}},  
\end{align}
where $L_\infty ( \mathds{U}(\mathsf{A}), \mathcal{B}(\mathsf{C}\otimes \mathsf{ E}) )$ is the space of $\mathcal{B}(\mathsf{C}\otimes \mathsf{ E})$-valued random variables on the unitary group $\mathds{U}(\mathsf{A})$. Its vector-valued $L_\alpha$ norm for $\alpha\geq 1$ is that for $X: \mathds{U}(\mathsf{A})\to \mathcal{B}(\mathsf{C}\otimes \mathsf{ E})  $
\begin{align}
\left\| X\right\|_{ L_\alpha( \mathds{U}(\mathsf{A}), S_\alpha(\mathsf{C}\otimes \mathsf{ E}) ) } := \left(
\mathbf{E}_{ \mathds{U}(\mathsf{A}) } \left\| X(U) \right\|_\alpha^\alpha \right)^{\frac{1}{\alpha}}\pl.
\end{align}
where $\mathbf{E}_{ \mathds{U}(\mathsf{A}) } $ is the integration on $\mathds{U}(\mathsf{A}) $ with respect to the Haar measure.
We remark that in the following mathematical derivations, the joint system $CE$ (associated with Hilbert space $\mathsf{C}\otimes \mathsf{E}$) can be viewed as one reference system $R$. Nevertheless, we will keep the tensor notation $\mathsf{C}\otimes \mathsf{ E}$ for the correspondence to the decoupling setting, although the tensor structure will not be used until Lemma \ref{lemm:multiplicative}.

For any $Y \in \mathcal{B}(\mathsf{A'}\otimes \mathsf{C}\otimes\mathsf{A}\otimes \mathsf{E})$, we have
\begin{align}
\mathbf{E}_{ \mathds{U}(\mathsf{A}) }\Theta(Y)= d^2 \mathbf{E}_{ \mathds{U}(\mathsf{A}) }\bra{\Phi_{A'A}}  U_AYU_A^\dagger \ket{\Phi_{A'A}}=d^2\bra{\Phi_{A'A}}  \mathrm{E}_A(Y) \ket{\Phi_{A'A}},
\end{align}
where $\mathrm{E}_A$ is the {completely depolarizing map} on system $\mathsf{A}$, i.e.
\begin{align}
&\mathrm{E}_A(X_{AB})
:=\int_{\mathds{U}(\mathsf{A})} U_AX_{AB}U_A^\dagger \, \mathrm{d}U_A
\equiv \mathbf{E}_{\mathds{U}(\mathsf{A})} \left[ U_A X_{AB} U_A^\dagger \right]
=\frac{\mathds{1}_A}{|A|}\ten \tr_A[X_{AB}]\pl, \quad X_{AB} \in \mathcal{B}(\mathsf{A}\otimes \mathsf{B}).
\end{align}
Using the above maps, we obtain the following expression for quantum decoupling
\begin{align}\label{eq:decouplemap}
\cT_{A\to C}(\rho^U_{AE})- \omega_{C}\ten \rho_E=\Theta(\omega_{A'C}\ten \rho_{AE}) (U)- \Theta\left(\omega_{A'C}\ten \mathrm{E}_A(\rho_{AE})\right)(U)=\Theta\circ (\id -\mathrm{E}_A) (\omega_{A'C}\ten \rho_{AE} ) (U)\pl.
\end{align}
From this perspective, it is natural to care about the norm of $\Theta\circ (\id - \mathrm{E}_A)$ as a map. We call $\Theta$ the decoupling map and $\Theta\circ (\id -\mathrm{E}_A)=\Theta-\mathbf{E}_{ \mathds{U}(\mathsf{A}) } \circ \Theta$ the decoupling error map.

Our first claim is the following $L_2$ lemma.
\begin{lemm}[Map norm for $L_2$]\label{lemm:L2}
	For any Hilbert space $\mathsf{C}, \mathsf{E}$ and $\mathsf{A}\simeq \mathsf{A'}$, 
	\begin{align} \norm{\Theta\circ (\id-\mathrm{E}_A): S_2(\mathsf{A'}\otimes \mathsf{C}\ten \mathsf{A}\otimes \mathsf{E})\to L_2(\mathds{U}(\mathsf{A}), S_2(\mathsf{C}\otimes \mathsf{E}))  }{}\le \frac{2}{\sqrt{3}}.
	\end{align}
	Namely, for any operator $Y_{AA'CE}$,
	\begin{align}
		\mathbf{E}_{\mathds{U}(\mathsf{A})} \left\| \Theta(Y) - \bE_{\mathds{U}{(\mathsf{A})}} \circ \Theta(Y)\right\|_{2}^2
	\le \frac{4}{3}\left\|Y\right\|_{2}^2.
	\end{align}
\end{lemm}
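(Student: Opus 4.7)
The plan is a direct second-moment Haar computation followed by discarding nonpositive terms. First, since $\mathrm{E}_A(Y) = \frac{\mathds{1}_A}{d} \otimes \tr_A Y$ is fixed by conjugation by any $U_A$, the operator $\Theta(\mathrm{E}_A Y)(U)$ is constant in $U$ and hence equals $\mathbf{E}_{\mathds{U}(\mathsf{A})} \Theta(Y)(U)$ (cf.~\eqref{eq:decouplemap}). By linearity of $\Theta$, setting $Z := (\id - \mathrm{E}_A)(Y)$, the claim reduces to showing
\begin{align*}
    \mathbf{E}_{\mathds{U}(\mathsf{A})} \|\Theta(Z)(U)\|_2^2 \le \tfrac{4}{3}\|Y\|_2^2.
\end{align*}

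Next, I would evaluate $\mathbf{E}_{\mathds{U}(\mathsf{A})} \|\Theta(Z)(U)\|_2^2$ explicitly. Introducing a spectator copy $\tilde A', \tilde A$ of $A', A$ for the second factor of $\Theta(Z)^\dagger \Theta(Z)$, the expectation becomes
\begin{align*}
    \mathbf{E}_{\mathds{U}(\mathsf{A})} \|\Theta(Z)(U)\|_2^2 = d^4 \langle\Phi_{A'A}\Phi_{\tilde A'\tilde A}|\mathbf{E}_{\mathds{U}(\mathsf{A})}\bigl[(U_A \otimes U_{\tilde A})W(U_A^\dagger \otimes U_{\tilde A}^\dagger)\bigr]|\Phi_{A'A}\Phi_{\tilde A'\tilde A}\rangle,
\end{align*}
where $W := \tr_{CE}[Z^\dagger_{A'ACE}Z_{\tilde A' \tilde A CE}] \in \mathcal{B}(\mathsf{A'\otimes A\otimes \tilde A'\otimes \tilde A})$. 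The 2-fold Haar twirl on $\mathsf{A}\otimes \mathsf{\tilde A}$ outputs $c_1(W)\otimes \mathds{1}_{A\tilde A} + c_2(W)\otimes F_{A\tilde A}$, with $c_1, c_2$ linear combinations of the partial traces $\tr_{A\tilde A}W$ and $\tr_{A\tilde A}[F_{A\tilde A}W]$ carrying the standard Weingarten coefficients; sandwiching with $\ket{\Phi_{A'A}}\ket{\Phi_{\tilde A'\tilde A}}$ then uses the identities $\langle\Phi_{A'A}|\mathds{1}_A\otimes M_{A'}|\Phi_{A'A}\rangle = \tfrac{1}{d}\tr M$ and $\langle\Phi_{A'A}\Phi_{\tilde A'\tilde A}|M_{A'\tilde A'}\otimes F_{A\tilde A}|\Phi_{A'A}\Phi_{\tilde A'\tilde A}\rangle = \tfrac{1}{d^2}\tr[F_{A'\tilde A'}M]$. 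A direct computation identifies the four resulting contributions with $\|Z\|_2^2$, $\|\tr_{A'A}Z\|_2^2$, $\|\tr_{A'}Z\|_2^2$, and $\|\tr_A Z\|_2^2$, giving the clean identity
\begin{align*}
    \mathbf{E}_{\mathds{U}(\mathsf{A})} \|\Theta(Z)(U)\|_2^2 = \frac{d}{d^2-1}\bigl[\,d\|Z\|_2^2 + d\|\tr_{A'A}Z\|_2^2 - \|\tr_{A'}Z\|_2^2 - \|\tr_A Z\|_2^2\,\bigr].
\end{align*}

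Finally, specializing to $Z = (\id - \mathrm{E}_A)Y$ immediately forces $\tr_A Z = 0$, and consequently $\tr_{A'A}Z = 0$. Dropping the remaining nonpositive term $-\|\tr_{A'}Z\|_2^2$ and using that $\id - \mathrm{E}_A$ is an orthogonal projection on the Hilbert--Schmidt space (so $\|Z\|_2 \le \|Y\|_2$) yields
\begin{align*}
    \mathbf{E}_{\mathds{U}(\mathsf{A})} \|\Theta(Z)(U)\|_2^2 \le \tfrac{d^2}{d^2-1}\|Y\|_2^2 \le \tfrac{4}{3}\|Y\|_2^2
\end{align*}
for all $d \ge 2$; the case $d = 1$ is trivial since $(\id - \mathrm{E}_A)Y = 0$. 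Taking square roots recovers the operator-norm bound $2/\sqrt{3}$. The main technical obstacle is the bookkeeping in the Haar twirl step: tracking which $A$-factor each $U$ acts on, combining the correct Weingarten coefficients, and verifying that the four sandwiched terms against $\mathds{1}\otimes\mathds{1}$, $\mathds{1}\otimes F$, $F\otimes\mathds{1}$, and $F\otimes F$ reduce precisely to the four partial-trace $S_2$-norms of $Z$. Sanity checks on simple inputs (e.g.\ $Z = \ket{\Phi_{A'A}}\bra{\Phi_{A'A}}\otimes X_{CE}$, for which $\mathbf{E}_{\mathds{U}(\mathsf{A})}\|\Theta(Z)\|_2^2 = 2\|X\|_2^2$ can be verified both directly and via the displayed identity) help guard against normalization errors.
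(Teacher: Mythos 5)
Your proposal is correct and follows essentially the same route as the paper: reduce to the centered operator $Z=(\id-\mathrm{E}_A)Y$, evaluate the second moment exactly via the two-fold Haar twirl (the paper's Lemma~\ref{lemm:phi_twirling}), use $\tr_A Z=0$ (hence $\tr_{AA'}Z=0$) and drop the remaining nonpositive term, and finish with $\|Z\|_2\le\|Y\|_2$ and $\frac{d^2}{d^2-1}\le\frac43$; your exact identity for $\mathbf{E}_U\|\Theta(Z)(U)\|_2^2$ coincides with the paper's. The only difference is bookkeeping—you contract $CE$ first and twirl $W=\tr_{CE}[Z^\dagger Z]$, whereas the paper keeps $CE$ and uses the swap operator $F_{A'ACE}$ on the doubled space—which is purely cosmetic.
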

\begin{proof}

Write $\mathring{Y}=Y- \mathrm{E}_A(Y) \in \mathcal{B}(\mathsf{A'}\otimes \mathsf{C} \otimes \mathsf{A} \otimes \mathsf{E})$. We have for each $U\in \mathds{U}(\mathsf{A})$,
\begin{align*}
	\norm{\Theta(Y-\mathrm{E}_A(Y))(U)}{2}^2
	&= d^4\tr\Big[\Phi_{A'A}U_A \mathring{Y}^\dagger U_A^\dagger  \Phi_{A'A} U_A \mathring{Y} U_A^\dagger \Big]\\
	&= d^4\tr\Big[U_A^\dagger\Phi_{A'A}U_A \mathring{Y}^\dagger U_A^\dagger \Phi_{A'A} U_A \mathring{Y}  \Big]
    \\
	&= d^4\tr\Big[\Phi_{A'A}^{U^\dagger} \mathring{Y}^\dagger   \Phi_{A'A}^{U^\dagger} \mathring{Y} \Big]\\
	&=d^4\tr\left[F_{A'ACE} (\Phi_{A'A}^{U^\dagger}\ten\Phi_{\tilde{A}'\tilde{A}}^{U^\dagger}) (\mathring{Y}_{A'ACE}^\dagger \ten \mathring{Y}_{\tilde{A}'\tilde{A}\tilde{C}\tilde{E}}) \right]
\end{align*}
where $F_{A'ACE}$ is the swap operator between $A'ACE$ and its copy $\tilde{A}'\tilde{A}\tilde{C}\tilde{E}$. Here and in the following, we always use the ``tilde'' notation to denote copies of systems coming from the trick of swap operator (Lemma~\ref{lemm:swap} of Appendix~\ref{sec:lemmas}). Using the identity (Lemma~\ref{lemm:phi_twirling}),
\begin{align*}
	\mathbf{E}_{\mathds{U}(\mathsf{A})} \Phi_{A'A}^{U^\dagger}\ten \Phi_{\tilde{A}\tilde{A}'}^{U^\dagger}
	=\frac{1}{d^2}\left( \frac{1}{d^2-1} \mathds{1}_{AA'\tilde{A}\tilde{A}'}-\frac{1}{d^3-d}F_A\ten \mathds{1}_{A'\tilde{A}'}-\frac{1}{d^3-d} F_{A'}\ten \mathds{1}_{A\tilde{A}}+\frac{1}{d^2-1}F_{AA'}\right),
\end{align*}
we have
\begin{align*}
	\mathbf{E}_{\mathds{U}(\mathsf{A})}\norm{\Theta(\mathring{Y})(U)}{2}^2
	&=
	\frac{d^2}{d^2-1}\norm{\mathring{Y}_{A'ACE}}{2}^2-\frac{d^2}{d^3-d}
	\norm{\mathring{Y}_{A'CE}}{2}^2-\frac{d^2}{d^3-d} \norm{\mathring{Y}_{ACE}}{2}^2+\frac{d^2}{d^2-1}\norm{\mathring{Y}_{CE}}{2}^2
\end{align*}
Note that
\begin{align}
	\mathring{Y}_{A' CE}&=\tr_{A}\left[Y_{AA'CE}  -\mathrm{E}_A(Y_{AA'CE}) \right]\\
	&=\tr_{A}\circ (\id_A - \mathrm{E}_A )(Y)=0\pl.
\end{align}
We have
\begin{align*}
	\mathbf{E}_{\mathds{U}(\mathsf{A})}\norm{\Theta(\mathring{Y})(U)}{2}^2 &= \frac{d^2}{d^2-1}\norm{\mathring{Y}_{A'ACE}}{2}^2-\frac{d^2}{d^3-d} \norm{\mathring{Y}_{ACE}}{2}^2\\
	&\le\frac{d^2}{d^2-1} \left\|\mathring{Y}_{A'ACE}\right\|_{2}^2 \\
	&\le \frac{4}{3}\left\|Y_{A'ACE}\right\|_{2}^2.
\end{align*}
Here, we used the fact that
\begin{align}
	 \left\|\mathring{Y}_{A'ACE}\right\|_{2}\le \left\|Y_{A'ACE}\right\|_{2}.
\end{align}
and $\frac{d^2}{d^2-1}\le \frac{4}{3}$ for $d\ge 2$.
\end{proof}
\begin{remark}\label{rem:4of3}
It was proved in \cite[Theorem 3.3]{dupuis2014one} that for positive $Y$,
\begin{align}
		\mathbf{E}_{\mathds{U}(\mathsf{A})} \left\| \Theta(Y) - \bE_{\mathds{U}(A)}\Theta(Y)\right\|_{2}^2
	\leq \left\|Y\right\|_{2}^2.
	\end{align}
Their argument uses a fact \cite[Theorem 3.6]{dupuis2014one} that \begin{align} \label{eq:dim}\frac{1}{|A|}\norm{X_{AB}}{2}^2 \;\leq\;  \norm{X_{B}}{2}^2 \;\leq\; |A| \cdot \norm{X_{AB}}{2}^2, \end{align} where the lower bound only holds for $X \ge 0$. Indeed, an element $X_{AB}\neq 0$ but $X_{B}=0$ violates \eqref{eq:dim}. Here, for the purpose of complex interpolation, we need the $L_2$-estimate for all elements in the \textit{complex} vector space $\mathcal{B}(\mathsf{A'}\otimes \mathsf{C}\otimes\mathsf{A}\otimes \mathsf{E})$, for which the price we pay is a factor of $4/3$.
\end{remark}

Our second lemma is a simple estimate for the map norm of $\Theta\circ (\id-\mathrm{E}_A)$ between $L_1$-spaces.
\begin{lemm}[Map norm for $L_1$]\label{lemm:p=1}
	For any Hilbert space $\mathsf{C}, \mathsf{E}$ and $\mathsf{A}\simeq \mathsf{A'}$,
	\begin{align}
&\norm{\Theta: S_1(\mathsf{A'}\otimes \mathsf{C}\ten \mathsf{A}\otimes \mathsf{E})\to L_1(\mathds{U}(\mathsf{A}), S_1(\mathsf{C}\otimes \mathsf{E}))}{}\le 1,\\
&\norm{\Theta\circ (\id-\mathrm{E}_A): S_1(\mathsf{A'}\otimes \mathsf{C}\ten \mathsf{A}\otimes \mathsf{E})\to L_1(\mathds{U}(\mathsf{A}), S_1(\mathsf{C}\otimes \mathsf{E}))  }{}\le 2.
	\end{align}
	Namely, for any operator $Y_{AA'CE}$,
	\begin{align}
		\bE_{\mathds{U}(\mathsf{A})}\left\|\Theta(Y)-\bE_{\mathds{U}(\mathsf{A})} \cdot \Theta(Y) \right\|_{1}
	\le 2\left\| Y \right\|_{1}.
	\end{align}
\end{lemm}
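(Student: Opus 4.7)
The plan is to derive both inequalities directly from the definition of $\Theta$ together with the single Haar-twirling identity
\[
\bE_{\mathds{U}(\mathsf{A})}\!\left[ U_A^\dagger\,\bigl(\proj{\Phi_{A'A}}\otimes \mathds{1}_{CE}\bigr)\, U_A \right]
= \mathrm{E}_A\!\bigl(\proj{\Phi_{A'A}}\bigr)\otimes \mathds{1}_{CE}
= \frac{\mathds{1}_{A'ACE}}{d^2},
\]
which follows from $\mathrm{E}_A(X_{A'A}) = (\mathds{1}_A/d)\otimes \tr_A[X]$ applied to $X = \proj{\Phi_{A'A}}$, using $\tr_A[\proj{\Phi_{A'A}}] = \mathds{1}_{A'}/d$. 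No complex interpolation or dimensional estimate of the kind \eqref{eq:dim} is needed here.

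For the first bound I would take the singular-value decomposition $Y = \sum_k \sigma_k \ket{u_k}\!\bra{v_k}$ with $\{\ket{u_k}\}, \{\ket{v_k}\}$ orthonormal in $\mathsf{A}'\otimes \mathsf{C}\otimes \mathsf{A}\otimes \mathsf{E}$ and $\|Y\|_1 = \sum_k \sigma_k$. Define $\ket{z_k(U)} := \bra{\Phi_{A'A}}(\mathds{1}_{A'C}\otimes U_A\otimes \mathds{1}_E)\ket{u_k}\in\mathsf{C}\otimes\mathsf{E}$, and $\ket{w_k(U)}$ analogously from $\ket{v_k}$; the definition of $\Theta$ then gives
\[
\Theta(Y)(U) = d^2 \sum_k \sigma_k \ket{z_k(U)}\!\bra{w_k(U)}.
\]
Applying the triangle inequality pointwise in $U$, followed by Cauchy--Schwarz in $L_2(\mathds{U}(\mathsf{A}))$,
\[
\bE_{\mathds{U}(\mathsf{A})}\|\Theta(Y)(U)\|_1
\;\le\; d^2 \sum_k \sigma_k\, \bE_U\|z_k(U)\|\,\|w_k(U)\|
\;\le\; d^2 \sum_k \sigma_k\, \sqrt{\bE_U\|z_k(U)\|^2}\,\sqrt{\bE_U\|w_k(U)\|^2}.
\]
The twirling identity above gives $\bE_U\|z_k(U)\|^2 = \bra{u_k}\tfrac{\mathds{1}}{d^2}\ket{u_k} = 1/d^2$, and likewise for $w_k$, so the prefactor $d^2$ cancels exactly and $\bE_U\|\Theta(Y)(U)\|_1 \le \sum_k \sigma_k = \|Y\|_1$.

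For the second bound, observe that $\mathrm{E}_A(Y) = (\mathds{1}_A/d)\otimes \tr_A[Y]$ is invariant under conjugation by $U_A$, so $\Theta\circ\mathrm{E}_A(Y)(U) = d^2\bra{\Phi_{A'A}}\mathrm{E}_A(Y)\ket{\Phi_{A'A}}$ is constant in $U$, and by linearity of the Haar integral this constant equals $\bE_{\mathds{U}(\mathsf{A})}\Theta(Y)$. Hence $(\Theta\circ(\id-\mathrm{E}_A))(Y)(U) = \Theta(Y)(U) - \bE_V\Theta(Y)(V)$, and the triangle inequality followed by Jensen's inequality for the convex norm $\|\cdot\|_1$ yields
\[
\bE_U \bigl\|\Theta(Y)(U) - \bE_V\Theta(Y)(V)\bigr\|_1
\;\le\; \bE_U\|\Theta(Y)(U)\|_1 + \bigl\|\bE_V\Theta(Y)(V)\bigr\|_1
\;\le\; 2\,\bE_U\|\Theta(Y)(U)\|_1 \;\le\; 2\|Y\|_1,
\]
where the final inequality uses the first bound.

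The main technical step is the Cauchy--Schwarz/Haar calculation producing the factor $1/d^2$, which is exactly what absorbs the prefactor $d^2$ in the definition of $\Theta$; everything else is formal. Note that the constant $1$ in the first bound is sharp on positive inputs (since $\Theta$ preserves the trace on density operators, as already used to write $\cT(\rho_{AE}) = d^2\bra{\Phi_{A'A}}\omega_{A'C}\otimes\rho_{AE}\ket{\Phi_{A'A}}$), and in contrast to the $L_2$ setting of Lemma \ref{lemm:L2} no $4/3$ correction arises when extending from positive to arbitrary complex elements---the SVD argument handles complex $Y$ directly.
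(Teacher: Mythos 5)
Your proposal is correct, and both of your bounds are established rigorously; the route differs from the paper's mainly in how the first inequality is obtained. The paper observes that $\Theta$ is a completely positive trace-preserving map into the classical-quantum algebra $L_\infty(\mathds{U}(\mathsf{A}),\mathcal{B}(\mathsf{C}\otimes\mathsf{E}))$ (trace preservation is checked by the same Haar identity $\bE_{\mathds{U}(\mathsf{A})}\Phi_{A'A}^U=\mathds{1}_{A'A}/d^2$ that you use) and then invokes the standard fact that CPTP maps are trace-norm contractions, whereas you prove the contraction by hand: singular value decomposition of $Y$, rank-one trace norms $\|\,|z\rangle\langle w|\,\|_1=\|z\|\,\|w\|$, Cauchy--Schwarz over the Haar measure, and the twirling identity to produce the factor $1/d^2$ that cancels the $d^2$ in the definition of $\Theta$. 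Your argument is more elementary and self-contained, and it handles arbitrary complex $Y$ without appealing to any positivity machinery (the CPTP route needs complete positivity, or a Russo--Dye type duality, to get contraction beyond Hermitian inputs); the paper's route is shorter and makes the structural reason (CPTP) explicit. For the second inequality the two arguments are essentially the same: the paper bounds $\|\Theta\circ(\id-\mathrm{E}_A)\|\le\|\Theta\|\,(\|\id\|+\|\mathrm{E}_A\|)\le 2$ using that $\id$ and $\mathrm{E}_A$ are $S_1$-contractions, while you use the identity $\Theta\circ(\id-\mathrm{E}_A)=\Theta-\bE_{\mathds{U}(\mathsf{A})}\circ\Theta$ (which the paper records at the display defining the decoupling error map) together with the triangle inequality and Jensen; these are interchangeable. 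Your closing remark is also consistent with the paper's Remark~\ref{rem:4of3}: the $4/3$ correction is a feature of the $L_2$ estimate for non-positive elements and indeed plays no role at the $L_1$ level.
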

\begin{proof} The first inequality is a consequence of that $\Theta$ is a completely positive trace-preserving map. Indeed, the complete positivity is clear from the definition and for any $Y_{A'ACE}$, we have that
\begin{align*}
\bE_{\mathds{U}(\mathsf{A})} \tr_{AA'CE}\left[ \Theta(Y)(U) \right]
&=d^2\tr_{AA'}\left[\bE_{U}(\Phi_{AA'}^U)Y_{AA'}\right]
\\
&=d^2\tr_{AA'}\left[\frac{1}{d^2} \mathds{1}_{AA'}Y_{AA'}\right]\\
&=\tr[Y].
\end{align*}
The second inequality follows from the fact that both $\id$ and $\mathrm{E}_A$ are contractions from $S_1(\mathsf{A'}\otimes \mathsf{A}\otimes \mathsf{C} \otimes \mathsf{E})$.
\end{proof}

We now apply complex interpolation to obtain our main technical theorem.

\begin{shaded_theo}\label{thm:main} For any Hilbert space $\mathsf{C}, \mathsf{E}$ $\mathsf{A}\simeq \mathsf{A'}$, and any operator $Y \in S_1(\mathsf{C}\otimes \mathsf{E}\otimes \mathsf{A}\otimes\mathsf{A'})$,
\begin{align}
\bE_{\mathds{U}(\mathsf{A})}  \left\|\Theta(Y)- \bE_{\mathds{U}(\mathsf{A})} \circ \Theta(Y)\right\|_{1}
\le 2  \cdot 3^{\frac{1-\al}{\al}} \left\|Y\right\|_{S_1(\mathsf{C}\otimes \mathsf{E},S_\al(\mathsf{A}\otimes \mathsf{A'}))},
\quad \alpha \in [1,2].
\end{align}
In particular, for any positive operator $\tau_{A'ACE}$,
\begin{align}
	\bE_{\mathds{U}(\mathsf{A})}  \left\|
	\Theta(\tau^U)- \bE_{\mathds{U}(\mathsf{A})} \circ \Theta(\tau) \right\|_{1}
	\le 2 \cdot \mathrm{e}^{\frac{1-\al}{\al}(H_\al^*(AA' {\,|\,} CE)_\tau+\log 3 )},
	\quad \alpha \in [1,2],
\end{align}
where $H_\al^*$ is defined in \eqref{eq:conditional_sandwiched}.
\end{shaded_theo}
\begin{proof}

We adopt the short notations $S_\al:=S_\al (\mathsf{C}\otimes \mathsf{E}\otimes \mathsf{A}\otimes \mathsf{A'})$ and $L_\al(S_\al):=L_\al( \mathds{U}(\mathsf{A}), S_\al(\mathsf{C}\otimes \mathsf{E}))$.
Using complex interpolation and the norm bounds established in Lemmas~\ref{lemm:p=1} and \ref{lemm:L2}, we have for $\frac{1}{\al}=\frac{1-\theta}{1}+\frac{\theta}{2}$ and $\theta=\frac{2(\al-1)}{\al}$,
\begin{align*}
\norm{\Theta\circ (\id -\mathrm{E}_A): S_\al \to L_\al(S_\al)}{}
&\le \norm{\Theta\circ (\id -\mathrm{E}_A): S_1 \to L_1(S_1)}{}^{1-\theta}\norm{\Theta\circ (\id -\mathrm{E}_A): S_2 \to L_2(S_2)}{}^{\theta}
\\
&= 2^{1-\theta}\cdot  (2/\sqrt{3})^{\theta}
= 2\cdot 3^{\frac{1-\al}{\al}}.
\end{align*}
This implies that for any $Y_{A'ACE}$ and $\alpha \in [1,2]$,
\begin{align}\label{eq:estimateal}
\left(\bE_{\mathds{U}(\mathsf{A})}  \left\|\Theta(Y)- \bE_{\mathds{U}(\mathsf{A})} \circ \Theta(Y)\right\|_{\al}^{\alpha} \right)^{\frac{1}{\al}}
\le 2 \cdot 3^{\frac{1-\al}{\al}} \left\|Y\right\|_{\al },
\end{align}
Now take $\frac{1}{\al'}+\frac{1}{\al}=1$ and $\tilde{Y}= \sigma_{CE}^{-\frac{1}{2\al'}} {Y}\gamma_{CE}^{-\frac{1}{2\al'}}$ for some arbitrary density operators $\sigma_{CE}$ and $\gamma_{CE}$ on system $CE$.
Applying H\"older's inequality and \eqref{eq:estimateal} to $\tilde{Y}$, we have
\begin{align*}
&\quad\; \bE_{\mathds{U}(\mathsf{A})}  \left\|\Theta(Y)- \bE_{\mathds{U}(\mathsf{A})} \circ \Theta(Y)\right\|_{1}
\\
&= \bE_{\mathds{U}(\mathsf{A})}  \left\|\Theta\circ (\id-\mathrm{E}_A)(Y)\right\|_{1}
\\
&\overset{\textrm{(a)}}{\le} \bE_{\mathds{U}(\mathsf{A})}  \left\| \sigma_{CE}^{\frac{1}{2\al'}} \right\|_{2\alpha'}  \left\|\sigma_{CE}^{-\frac{1}{2\al'}}\Theta\circ (\id-\mathrm{E}_A)(Y)\gamma_{CE}^{-\frac{1}{2\al'}}\right\|_{\al} \left\| \gamma_{CE}^{\frac{1}{2\al'}}\right\|_{2\alpha'}
\\
&\overset{\textrm{(b)}}{=} \bE_{\mathds{U}(\mathsf{A})}  \left\|\sigma_{CE}^{-\frac{1}{2\al'}}\Theta\circ (\id-\mathrm{E}_A)(Y)\gamma_{CE}^{-\frac{1}{2\al'}}\right\|_{\al}
\\
&\overset{\textrm{(c)}}{=} \bE_{\mathds{U}(\mathsf{A})}  \left\|\Theta\circ (\id-\mathrm{E}_A)(\tilde{Y})\right\|_{\al}
\\
&\leq \left(\bE_{\mathds{U}(\mathsf{A})}  \left\|\Theta\circ (\id-\mathrm{E}_A)(\tilde{Y})\right\|_{\al}^{\alpha} \right)^{\frac{1}{\al}}
\\
&\le 2\cdot 3^{\frac{1-\al}{\al}} \left\|\tilde{Y}\right\|_{\al }
\\
&=  2\cdot 3^{\frac{1-\al}{\al}} \left\|\sigma_{CE}^{-\frac{1}{2\al'}}Y\gamma_{CE}^{-\frac{1}{2\al'}}\right\|_{\al },
\end{align*}
where (a) follows from H\"older's inequality;
(b) is because $\sigma_{CE}$ and $\gamma_{CE}$ are densities;
in (c) we used the property $\Theta\circ (\id-\mathrm{E}_A)$ does not acting (i.e.,~an identity map) on $CE$.
Taking infimum over all densities $\sigma_{CE}$ and $\gamma_{CE}$ yields the first assertion. The second assertion follows from the definition of $H_\al^*(AA'|CE)_\tau= \al' \log \norm{ \tau }{S_1(\mathsf{C}\otimes \mathsf{E},S_\al(\mathsf{A}\otimes \mathsf{A'}))}$.
\end{proof}

\begin{remark}
In the above proof, a key property of the decoupling map $\mathring{\Theta}:=\Theta\circ (\id-\mathrm{E}_A)$ is that it only acts on $AA'$. Indeed, $\mathring{\Theta}$ can viewed as $ \mathring{\Theta}_{AA'}\ten \id_{CE}$, where
\[ \mathring{\Theta}_{AA'}:\mathcal{B}(\mathsf{A}\otimes \mathsf{A'})\to L_\infty(\mathds{U}(\mathsf{A}))\ , \ \mathring{\Theta}_{AA'}(Y_{AA'})(U)= d^2\bra{\Phi_{AA'}}Y^U\ket{\Phi_{AA'}}\pl .\]
Since $\mathsf{C}\otimes \mathsf{E}$ can be any Hilbert space, in terms of completely bounded norm $\norm{\cdot}{\textrm{cb}}$, Lemma \ref{lemm:L2} \& \ref{lemm:p=1} and Theorem \ref{thm:main} actually shows that
\[ \norm{\mathring{\Theta}_{AA'}: S_\al(\mathsf{A}\otimes \mathsf{A'})\to L_\al(\mathds{U}(\mathsf{A}))}{cb}\le 2\cdot 3^{\frac{1-\al}{\al}}\le 2\pl,\]
where the case $\al\in [1,2]$ can follow from for $\al=1,2$ by operator space interpolation. From this perspective, the argument in the proof of Theorem~\ref{thm:main} is basically Pisier's Lemma \cite[Lemma 1.7]{pisier1998non}
\begin{align*} &\norm{\mathring{\Theta}_{AA'}: S_\al(\mathsf{A}\otimes \mathsf{A'})\to L_\al(\mathds{U}(\mathsf{A}))}{\textrm{cb}}
\\ &\ge \norm{\mathring{\Theta}_{AA'}: S_\al(\mathsf{A}\otimes \mathsf{A'})\to L_1(\mathds{U}(\mathsf{A}))}{\textrm{cb}}
\\
&=  \sup_{\mathsf{R}} \norm{\id_R \ten \mathring{\Theta}_{AA'}: S_1(\mathsf{R} , S_1(\mathsf{A}\otimes \mathsf{A'}
)\to L_\al(\mathds{U}(\mathsf{A}),S_1(\mathsf{R}))}{}
\end{align*}
where the supremum is over all Hilbert space $\mathsf{R}$.
\end{remark}

To see our main Theorem \ref{theo:exp}, we need the following additivity of sandwiched R\'enyi conditional entropy that for $\al\in (0,1)\cup (1,\infty)$ and density operators $\omega\in \mathcal{S}(\mathsf{A'}\otimes \mathsf{C})$ and $\rho\in \mathcal{S}( \mathsf{A} \otimes \mathsf{E})$,
\begin{align} \label{eq:additivity_entropy}
	H_\al^*(A' A {\,|\,} C E)_{\omega \otimes \rho}
	=
	H_\al^*(A' {\,|\,} C)_\omega + H_\al^*(A {\,|\,} E)_\rho.
\end{align}
This additivity property of the sandwiched conditional R\'enyi entropy $H^*_{\alpha}$ has been proved via a duality relation \cite{MDS+13, Bei13} (see e.g.,~\cite[Corollary 5.9]{Tom16}. We provide a self-contained proof via multiplicativity of $S_1(S_\al)$-norm in Appendix \ref{sec:multi}, which holds for general operators.

As a consequence, we obtain Theorem \ref{theo:exp} as a special case of Theorem \ref{thm:main}, which we restate here.
\begin{coro}[c.f. Theorem \ref{theo:exp}]\label{cor:product}
For any density operator $\rho_{AE} \in \mathcal{S}(\mathsf{A}\otimes \mathsf{E})$ and any completely positive map $\mathcal{T}_{A\to C}$ with Choi state $\omega_{A'C}$, we have
\begin{align}
\bE_{\mathds{U}(\mathsf{A})}  \left\| \mathcal{T}\left( U_A \rho_{AE} U_A^\dagger \right) - \omega_C \otimes \rho_E \right\|_1 \leq 	2 \mathrm{e}^{\frac{1-\al}{\al} \left( H_\al^*(A' {\,|\,} C)_\omega + H_\al^*(A {\,|\,} E)_\rho+\log 3 \right) },
\quad \alpha \in [1,2].
\end{align}
In particular, the error exponent $\sup_{\alpha\in[1,2]} \frac{\alpha-1}{\alpha} \left( H_\alpha^*(A{\,|\,}E)_\rho + H_\alpha^*(A'{\,|\,}C)_{\omega} \right)>0$ is positive if and only if
             $ H_1^*(A'{\,|\,}C)_{\omega} + H_1*(A{\,|\,}E)_{\rho} > 0$
\end{coro}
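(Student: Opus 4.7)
The plan is a direct specialization of Theorem \ref{thm:main} to the tensor-product input $\tau = \omega_{A'C} \otimes \rho_{AE}$. First I would invoke the key identity \eqref{eq:decouplemap} already derived in the excerpt, which rewrites the decoupling error exactly as
\[
\mathcal{T}_{A\to C}(\rho_{AE}^U) - \omega_C \otimes \rho_E \;=\; \Theta \circ (\id - \mathrm{E}_A)(\omega_{A'C} \otimes \rho_{AE})(U) \;=\; \bigl(\Theta(\tau^U) - \mathbf{E}_{\mathds{U}(\mathsf{A})}\Theta(\tau)\bigr)(U).
\]
Taking the trace norm and averaging over $\mathds{U}(\mathsf{A})$, the left-hand side of the claimed bound is exactly the quantity controlled by the positive-operator part of Theorem \ref{thm:main}.

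Applying that theorem to the positive operator $\tau$ immediately gives, for every $\alpha \in [1,2]$,
\[
\mathbf{E}_{\mathds{U}(\mathsf{A})}\left\|\mathcal{T}(U_A\rho_{AE}U_A^\dagger) - \omega_C\otimes\rho_E\right\|_1 \;\leq\; 2\,\mathrm{e}^{\frac{1-\alpha}{\alpha}\left(H_\alpha^*(AA'\,|\,CE)_{\omega\otimes\rho} + \log 3\right)}.
\]
At this point the only work left is to identify the joint conditional entropy in terms of the two single ones. This is supplied by the tensor-product additivity \eqref{eq:additivity_entropy}, which is stated in the excerpt (and is proved via multiplicativity of the $S_1(S_\alpha)$-norm in the appendix): $H_\alpha^*(A'A\,|\,CE)_{\omega\otimes\rho} = H_\alpha^*(A'\,|\,C)_\omega + H_\alpha^*(A\,|\,E)_\rho$. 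Substituting this into the exponent yields the bound as stated.

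For the equivalence claim on the positivity of the achievable exponent, I would rely on two standard facts about sandwiched R\'enyi conditional entropies, both recorded earlier in the paper: the limit $\lim_{\alpha\to 1^+} H_\alpha^*(\cdot\,|\,\cdot) = H(\cdot\,|\,\cdot)$ from \eqref{eq:conditional}, and monotonic non-increasingness of $\alpha \mapsto H_\alpha^*$. If $H(A'\,|\,C)_\omega + H(A\,|\,E)_\rho > 0$, then continuity at $\alpha = 1$ produces some $\alpha \in (1,2]$ for which the $H_\alpha^*$-sum is still strictly positive, making $\tfrac{\alpha-1}{\alpha}$ times that sum strictly positive as well, hence the supremum over $[1,2]$ is positive. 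Conversely, if the supremum is strictly positive, then there exists $\alpha \in (1,2]$ at which $H_\alpha^*(A'\,|\,C)_\omega + H_\alpha^*(A\,|\,E)_\rho > 0$, and monotonicity lifts this inequality to the $\alpha \searrow 1$ limit, yielding $H(A'\,|\,C)_\omega + H(A\,|\,E)_\rho > 0$.

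I do not anticipate any genuine obstacle: once Theorem \ref{thm:main} and the additivity identity \eqref{eq:additivity_entropy} are in hand, the corollary is essentially a two-line specialization, and the positivity equivalence reduces to a routine continuity-and-monotonicity argument. The only mildly non-trivial input is the additivity of $H_\alpha^*$ under tensor products, which is deferred to the appendix and follows from the multiplicativity of the vector-valued norm $\|\cdot\|_{S_1(\mathsf{B},S_\alpha(\mathsf{A}))}$.
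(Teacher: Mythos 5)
Your proposal matches the paper's own proof: it, too, obtains the corollary by applying Theorem~\ref{thm:main} to the product state $\tau_{A'ACE}=\omega_{A'C}\otimes\rho_{AE}$ (via the identity \eqref{eq:decouplemap}) and then splitting the joint entropy with the additivity property of Proposition~\ref{lemm:multiplicative}. Your continuity-plus-monotonicity argument for the positivity equivalence is also the same routine reasoning the paper uses (cf.\ Section~\ref{sec:interpretation}), so there is nothing to add.
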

\begin{proof}This is a consequence of the bound established in Theorem \ref{thm:main} and the additivity property (Proposition \ref{lemm:multiplicative}) applied to the product state $\tau_{A'ACE} = \omega_{A'C}\otimes \rho_{AE}$.
\end{proof}

\subsection{One-shot strong converse}
We shall now discuss the converse bound to Theorem \ref{thm:main}.
Given $\tau_{AA'CE} \in \mathcal{B}(\mathsf{A'\ten A\ten C\ten E})$ and a unitary $U_A\in \mathds{U}(\mathsf{A})$, we write
\[ \tau^U=U_A \tau U_{A}^\dagger.\]
Recall the notation $\ran{\Phi_{AA'}}=\frac{1}{\sqrt{d}} \sum_{i,j} \ket{i}\bra{j}\ten \ket{i}\bra{j}$ is a maximally entangled state on $\mathsf{A\ten A'}$ and $d=|\mathsf{A}|=|\mathsf{A'}|$.
Denote
\[\chi(\tau)=d^2\lan{\Phi_{AA'}} \tau\ran{\Phi_{AA'}}\pl\]
 We also write $\Phi_{AA'}=\ran{\Phi_{AA'}}\lan{\Phi_{AA'}}$ for the rank one projection.
 It is clear that
\[ \bE_{  \mathds{U}(\mathsf{A})}\chi(\tau^U)=\bE_{  \mathds{U}(\mathsf{A})}\Theta(\tau^U) =\tau_{CE}\]
In particular, if $\tau=\omega_{A'C}\ten\rho_{AE}$, with $\omega_{A'C}=\id_{A'}\ten \cT_{A\to C}(\Phi_{AA'})$ being the Choi state of a channel $\cT_{A\to C}$, then
\[ \chi(\omega_{A'C}\ten\rho_{AE}^U )=\cT_{A\to C}(\rho_{AE}^U)\ ,\;
\bE_{  \mathds{U}(\mathsf{A}) }\chi(\omega_{A'C}\ten\rho_{AE}^U) =\bE_{  \mathds{U}(\mathsf{A})}\cT_{A\to C}(\rho_{AE}^U)
\]
which is the decoupling map.
The following is a one-shot strong converse theorem for joint state-channel decoupling.

\begin{shaded_theo}\label{thm:converse}Let $\tau=\tau_{A'ACE}$ be a quantum state. Then for any $\al\in (0,1)$
\[ \frac{1}{2}\bE_{\mathds{U}(\mathsf{A})}\norm{\chi(\tau^U)-\tau_{CE} }{1}\ge 1-2 \mathrm{e}^{(\alpha - 1)\left(H_{\alpha}^{\downarrow}(AA'|CE)_\tau+\log \frac{4}{3}\right)}\pl,\]
where $H_{\alpha}^{\downarrow}$ is defined in \eqref{eq:conditional_Petz}.

\end{shaded_theo}
\begin{proof}
We use the short notation $\Phi:=\Phi_{AA'}$, $\bar{\tau}:=\tau_{CE}$, $\bE_{U}=\bE_{  \mathds{U}(\mathsf{A})}$ and $d=|\mathsf{A}|$.
Let $\{V_i\}_{i=1}^{d^2}\subset \mathds{U}(\mathsf{A})$ be an orthonormal basis (O.N.B.) of unitary operators with respect to the trace inner product. Denote
\[\ket{\Phi_i}=\mathds{1}_{A'}\ten V_i\ket{\Phi} \pl,\;
\Phi_i=(\mathds{1}\ten V_i)\Phi (\mathds{1}\ten V_i)^\dagger= \Phi^{V_i} \pl.
\]
Then $\ket{\Phi_i}$ is an O.N.B.~of maximally entangled state and $\{\Phi_i\}_{i=1}^d$ forms a projective-valued measurement. Then, for any $V_i$ in the O.N.B.,
\begin{align} \label{eq:V_i}
\chi(\tau^{V_i^\dagger U})=d^2 \bra{\Phi} V_i^\dagger U\tau U^\dagger V_i \ket{\Phi} =d^2 \bra{\Phi^{V_i }} U\tau U^\dagger \ket{\Phi^{V_i}}\pl.
\end{align}
For each $U$, we use the pretty-good measurement
\begin{align}
\Pi_U &=\frac{\chi(\tau^U)}{\chi(\tau^U)+\bar{\tau}}
\\
&:= \left(\chi(\tau^U)+\bar{\tau} \right)^{-1/2} \chi(\tau^U) \left(\chi(\tau^U)+\bar{\tau} \right)^{-1/2},
\end{align}
where $\bar{\tau} = \mathbf{E}_U \chi(\tau^U)$ is the mean.
Then, using the variational formula of trace distance \cite[Theorem 9.1]{NC09}
\begin{align*}
&\frac{1}{2}\bE_U\norm{\chi(\tau^U)-\bar{\tau} }{1}
\\
&\ge  \bE_U\tr\left[(\chi(\tau^U)-\bar{\tau})\frac{\chi(\tau^U)}{\chi(\tau^U)+\bar{\tau}}\right]
\\
&\ge  \underbrace{\bE_U\tr\left[\chi(\tau^U)\frac{\chi(\tau^U)}{\chi(\tau^U)+\bar{\tau}}\right]}_{\text{(I)}}
-\underbrace{\bE_U\tr\left[\bar{\tau}\frac{\chi(\tau^U)}{\chi(\tau^U)+\bar{\tau}}\right]}_{\text{(II)}}.
\end{align*}
Note that the second term (II) can be written as
\begin{align*}
&\bE_U\tr\left[\bar{\tau}\frac{\chi(\tau^U)}{\chi(\tau^U)+\bar{\tau}}\right]
\\
&= \bE_U\tr\left[\chi(\tau^U)\frac{\bar{\tau}}{\chi(\tau^U)+\bar{\tau}}\right]
\\
&= 1- \bE_U\tr\left[\chi(\tau^U)\frac{\chi(\tau^U)}{\chi(\tau^U)+\bar{\tau}}\right]
\\
&=1- \text{(I)}
\end{align*}
Thus, it sufficient to lower bound the first term (I).
For the first term (I),
\begin{align*}
&\bE_U\tr\left[\chi(\tau^U)\frac{\chi(\tau^U)}{\chi(\tau^U)+\bar{\tau}}\right]
\\
&\overset{\textrm{(a)}}{=} \frac{1}{d^2}\sum_{i=1}^{d^2}\bE_U\tr\left[\chi(\tau^{V_i^\dagger U})\frac{\chi(\tau^{V_i^\dagger U})}{\chi(\tau^{V_i^\dagger U})+\bar{\tau}}\right]
\\
&\overset{\textrm{(b)}}{=} \sum_{i=1}^{d^2}\bE_U\tr\left[ \bra{\Phi_i}\tau^{U} \ket{\Phi_i}\frac{d^2 \bra{\Phi_i}\tau^{U} \ket{\Phi_i}}{d^2 \bra{\Phi_i}\tau^{U} \ket{\Phi_i}+\bar{\tau}}\right]
\\
&= \bE_U\sum_{i=1}^{d^2}\tr\left[ \tau^{U} \left(\Phi_i\ten \frac{d^2 \bra{\Phi_i}\tau^{U} \ket{\Phi_i}}{d^2 \bra{\Phi_i}\tau^{U} \ket{\Phi_i}+\bar{\tau}}\right)\right].
\end{align*}
Here, (a) uses the invariance of Haar measure,
and (b) follows from \eqref{eq:V_i}.
Note that by the direct sum structure
\begin{align*}
\sum_i\Phi_i\ten \frac{d^2 \bra{\Phi_i}\tau^{U} \ket{\Phi_i}}{d^2 \bra{\Phi_i}\tau^{U} \ket{\Phi_i}+\bar{\tau}}
&=\frac{ d^2\sum_i \Phi_i\ten \bra{\Phi_i}\tau^{U} \ket{\Phi_i} }{\sum_i \Phi_i\ten( d^2 \bra{\Phi_i}\tau^{U} \ket{\Phi_i} + \bar{\tau})}
\\
&=\frac{ d^2 \ccE(\tau^U) }{d^2  \ccE(\tau^U) + \mathds{1}_{AA'}\ten \bar{\tau}},
\end{align*}
where $\ccE(\tau)= \sum_{i=1}^{d^2}\Phi_i\tau \Phi_i$ is the Pinching map onto the basis of $\{\ket{\Phi_i}\}_{i=1}^{d^2}$. Using the order inequality $d^2\ccE(\tau)\ge \tau$ \cite{Hay02}. Thus, we have
\begin{align*}
&\bE_U\tr\left[\chi(\tau^U)\frac{\chi(\tau^U)}{\chi(\tau^U)+\bar{\tau}}\right]
\\
&=   \bE_U \tr\left[ \tau^{U} \frac{ d^2 \ccE(\tau^U) }{d^2  \ccE(\tau^U) + \mathds{1}_{AA'}\ten \bar{\tau}}\right]
\\
&=  1-\bE_U \tr\left[ \tau^{U} \frac{ \mathds{1}_{AA'}\ten \bar{\tau} }{ d^2  \ccE(\tau^U) + \mathds{1}_{AA'}\ten \bar{\tau}}\right]
\\
&\ge  1-\bE_U \tr\left[ \tau^{U} \frac{ \mathds{1}_{AA'} \ten \bar{\tau}+ d^2\ccE(\tau^U)-\tau^U}{ d^2  \ccE(\tau^U) + \mathds{1}_{AA'}\ten \bar{\tau}}\right]
\\
&\overset{\textrm{(c)}}{\ge}  1-\bE_U \tr\left[ \left(\tau^{U}\right)^{\alpha} \left( \mathds{1}_{AA'}\ten \bar{\tau}+ d^2\ccE(\tau^U)-\tau^U\right)^{1-\alpha}\right]
\\
&=  1-\bE_U \tr\left[ \tau^{\alpha} \left( \mathds{1}_{AA'}\ten \bar{\tau}+ d^2\ccE(\tau^U)^{U^\dagger}-\tau\right)^{1-\alpha} \right].
\end{align*}
Here, (c) uses the trace inequality (see \cite[Lemma 3]{shen2022strong}) that for positive $X$ and $Y$,
\[ \tr[Y(X+Y)^{-\frac{1}{2}}X (X+Y)^{-\frac{1}{2}}]\le \tr\left[X^\alpha Y^{1-\alpha}\right]\pl, \pl \alpha \in (0,1)\]
Note that
\begin{align*}
\bE_U \ccE(\tau^U)^{U^\dagger }
&= \sum_{i} \bE_UU^\dagger \Phi_i U\tau U^\dagger  \Phi_iU
\\
&{=} d^2 \bE_U\Phi^{U^\dagger }\tau \Phi^{U^\dagger }
\\
&=  d^2 \bE_U \tr_{\tilde{A}\tilde{A}'} \left[ (\mathds{1}_{AA'}\ten  \tau_{\tilde{A}\tilde{A}'CE})(\Phi_{AA'}^{U^\dagger}\ten \Phi_{\tilde{A}\tilde{A}'}^{U^\dagger}\ten \mathds{1}_{CE}) \right]
\\
&=  d^2 \tr_{\tilde{A}\tilde{A}'} \left[ (\mathds{1}_{AA'}\ten  \tau_{\tilde{A}\tilde{A}'CE})(\bE_U \Phi_{AA'}^{U^\dagger}\ten \Phi_{\tilde{A}\tilde{A}'}^{U^\dagger}\ten \mathds{1}_{CE}) \right].
\end{align*}
Note that by Lemma \ref{lemm:phi_twirling},
\begin{align*}
	\mathbf{E}_{\mathds{U}(\mathsf{A})} \Phi_{A'A}^{U^\dagger}\ten \Phi_{\tilde{A}\tilde{A}'}^{U^\dagger}
	=\frac{1}{d^2}\left( \frac{1}{d^2-1} \mathds{1}_{AA'\tilde{A}\tilde{A}'}-\frac{1}{d^3-d}F_A\ten \mathds{1}_{A'\tilde{A}'}-\frac{1}{d^3-d} F_{A'}\ten \mathds{1}_{A\tilde{A}}+\frac{1}{d^2-1}F_{AA'}\right),
\end{align*}
where $F_{A}$ is the swap operator on $\mathsf{A\ten \tilde{A}}$.
Using the short notation, $F_{A}\equiv F_{A}\ten \mathds{1}_{A'\tilde{A}'}$ and $F_{A'}\equiv F_{A'}\ten \mathds{1}_{A\tilde{A}}$
\begin{align*} d^2\bE_U \ccE(\tau^U)^{U^\dagger}
&=  \tr_{\tilde{A}\tilde{A}'} \left[ (\mathds{1}_{AA'}\ten  \tau_{\tilde{A}\tilde{A}'CE})\left(\frac{1}{d^2-1} \mathds{1}+ \frac{1}{d^2-1} F_{AA'}-\frac{1}{d(d^2-1)} F_{A}- \frac{1}{d(d^2-1)} F_{A'}\right) \right]
\\
&= \frac{1}{d^2-1} \mathds{1}_{AA'}\ten \tau_{CE}+ \frac{1}{d^2-1} \tau_{AA'CE} -\frac{1}{(d^2-1)d} \mathds{1}_{A'}\ten \tau_{ACE}- \frac{1}{(d^2-1)d} \mathds{1}_{A}\ten \tau_{A'CE}
\end{align*}
Then we have
\begin{align*} &\bE_U \left[ \mathds{1}_{AA'}\ten \bar{\tau}+ d^2\ccE(\tau^U)^{U^\dagger}-\tau\right]
\\
&= \left(1+\frac{1}{d^2-1 }\right) \mathds{1}_{A'A}\ten \tau_{CE}-\frac{1}{d(d^2-1)} \mathds{1}_{A'}\ten \tau_{ACE}- \frac{1}{d(d^2-1)} \mathds{1}_{A}\ten \tau_{A'CE}-\frac{2-d^2}{d^2-1} \tau
\\
&\le \frac{4}{3} \mathds{1}_{AA'}\ten \tau_{CE},
\end{align*}
where we have used the fact $\frac{1}{d^2-1}\le \frac{1}{3}$ for $d\ge 2$. Putting this back to our estimate, we have
\begin{align*}
\frac{1}{2}\bE_U\norm{\chi(\tau^U)-\bar{\tau} }{1}
&\ge   1-2\bE_U \tr\left[ \tau^{\alpha} \left( \mathds{1}_{AA'} \ten \bar{\tau}+ d^2\ccE(\tau^U)^{U^*}-\tau\right)^{1-\alpha} \right]
\\
&\ge   1-2\cdot \left(\frac{4}{3}\right)^{1-\alpha} \tr\left[ \tau^{\alpha} ( \mathds{1}_{AA'}\ten \tau_{CE})^{1-\alpha} \right]
\\
&=   1-2 \mathrm{e}^{ (1-\al) \left(H_{\alpha}^{\downarrow}(AA'|CE)_\tau + \log \frac{4}{3}\right)},
\end{align*}
That finishes the proof.
\end{proof}

It is clear from the definition of $H_{\alpha}^{\downarrow}$ that for product density $\tau_{A'ACE}=\omega_{A'C}\ten \rho_{AE}$,
\begin{align*}&H_{\alpha}^{\downarrow}(AA'|CE)_{\omega\ten \rho}=H_{\alpha}^{\downarrow}(A'|C)_{\omega}+H_{\alpha}^{\downarrow}(A|E)_{\rho}\qedhere \end{align*}
Then Theorem \ref{theo:converse} follows from this special case as a corollary, which we restate here.

\begin{coro}[c.f.~Theorem \ref{theo:converse}]
For any density operator $\rho_{AE} \in \mathcal{S}(\mathsf{A}\otimes \mathsf{E})$ and for any quantum channel $\mathcal{T}_{A\to C}$ with Choi state $\omega_{A'C}$,
\begin{align}
\frac12\bE_{\mathds{U}(\mathsf{A})}  \left\| \mathcal{T}\left( U_A \rho_{AE} U_A^\dagger \right) - \omega_C \otimes \rho_E \right\|_1 \geq 1-	2   \mathrm{e}^{ (1-\alpha) \left( H_{\alpha}^{\downarrow}(A' {\,|\,} C)_\omega + H_{\alpha}^{\downarrow}(A {\,|\,} E)_\rho + \log \frac{4}{3} \right) },
\quad \alpha \in (0,1).
\end{align}
In particular, the strong converse exponent $\sup_{\alpha\in(0,1)} (\alpha-1) \left( H_{\alpha}^{\downarrow}(A' {\,|\,} C)_\omega + H_{\alpha}^{\downarrow}(A {\,|\,} E)_\rho\right)>0$ is positive if and only if $ H(A'{\,|\,}C)_{\omega} + H(A{\,|\,}E)_{\rho} < 0$.
\end{coro}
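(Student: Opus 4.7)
The plan is to obtain this corollary as a direct specialization of Theorem \ref{thm:converse} to a tensor-product input state, followed by a short continuity argument for the threshold claim. Concretely, I would set $\tau_{A'ACE} := \omega_{A'C} \otimes \rho_{AE}$, which is a valid density operator since $\omega_{A'C}$ (the Choi state of the channel $\mathcal{T}_{A\to C}$) and $\rho_{AE}$ are both normalized. Plugging this product state into Theorem \ref{thm:converse} is the main computation.

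First I would use the identifications already established at the start of Section 3.2: for product input, $\chi(\omega_{A'C} \otimes \rho_{AE}^U) = \mathcal{T}_{A\to C}(\rho_{AE}^U) = \mathcal{T}(U_A \rho_{AE} U_A^\dagger)$, and the marginal $\tau_{CE} = \omega_C \otimes \rho_E$. These two identities convert the left-hand side of Theorem \ref{thm:converse} into the desired quantity $\tfrac12 \bE_{\mathds{U}(\mathsf{A})} \lVert \mathcal{T}(U_A \rho_{AE} U_A^\dagger) - \omega_C \otimes \rho_E \rVert_1$ with no further work.

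Next I would handle the exponent. Theorem \ref{thm:converse} gives the bound in terms of $H_{\alpha}^{\downarrow}(AA' \mid CE)_\tau$, so I must split this into the two advertised terms. The additivity
\[
H_{\alpha}^{\downarrow}(AA' \mid CE)_{\omega \otimes \rho} = H_{\alpha}^{\downarrow}(A' \mid C)_{\omega} + H_{\alpha}^{\downarrow}(A \mid E)_{\rho}
\]
is immediate from the definition \eqref{eq:conditional_Petz}: since $(\omega \otimes \rho)^{\alpha} = \omega^{\alpha} \otimes \rho^{\alpha}$ and $(\mathds{1}_{AA'} \otimes \omega_C \otimes \rho_E)^{1-\alpha}$ factorizes over the two subsystems, the trace splits as a product, and taking $\tfrac{1}{1-\alpha}\log$ converts the product into the claimed sum. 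This is precisely the identity displayed just before the statement of the corollary, so the first (inequality) part follows by substitution.

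For the ``in particular'' claim about positivity of the strong converse exponent, I would use two standard facts about Petz--R\'enyi conditional entropies: monotonic non-increase in $\alpha$, and continuity with $\lim_{\alpha \to 1} H_{\alpha}^{\downarrow}(A \mid B)_\rho = H(A \mid B)_\rho$, both recorded after \eqref{eq:conditional}. Write $S(\alpha) := H_{\alpha}^{\downarrow}(A' \mid C)_{\omega} + H_{\alpha}^{\downarrow}(A \mid E)_{\rho}$. Since $(\alpha - 1) < 0$ on $(0,1)$, the exponent $(\alpha - 1) S(\alpha)$ is positive for some $\alpha \in (0,1)$ iff $S(\alpha) < 0$ for some such $\alpha$. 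By continuity, if $S(1) = H(A' \mid C)_\omega + H(A \mid E)_\rho < 0$, then $S(\alpha) < 0$ for all $\alpha$ sufficiently close to $1$, giving a strictly positive exponent. Conversely, if $S(\alpha_0) < 0$ for some $\alpha_0 \in (0,1)$, then by monotonic non-increase in $\alpha$ we also have $S(1) \leq S(\alpha_0) < 0$.

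I do not anticipate any real obstacle here: the heavy lifting (pretty-good measurement, Haar integration, twirling of two maximally entangled states, and the trace inequality from \cite{shen2022strong}) is already done inside Theorem \ref{thm:converse}, and the corollary is essentially a specialization together with the elementary additivity and limit facts above.
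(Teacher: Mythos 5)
Your proposal is correct and matches the paper's own route: the paper likewise obtains the corollary by specializing Theorem~\ref{thm:converse} to the product state $\tau_{A'ACE}=\omega_{A'C}\otimes\rho_{AE}$, using the identities $\chi(\omega_{A'C}\otimes\rho_{AE}^U)=\mathcal{T}(U_A\rho_{AE}U_A^\dagger)$ and $\tau_{CE}=\omega_C\otimes\rho_E$ together with the additivity $H_\alpha^{\downarrow}(AA'|CE)_{\omega\otimes\rho}=H_\alpha^{\downarrow}(A'|C)_\omega+H_\alpha^{\downarrow}(A|E)_\rho$, which follows from the definition exactly as you argue. Your monotonicity-plus-continuity argument for the positivity threshold is the standard one the paper leaves implicit, and it is sound.
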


\begin{exam}
  {\rm For $\mathsf{A}=\mathsf{A}_1\otimes \mathsf{C}$ and $\cT=\id_{C}\ten \tr_{A_1}$ being the partial trace,
  \[\omega_{A'C}=\frac{\mathds{1}_{A_1'}}{|{A_1}|}\ten \Phi_{C'C}\]
  and the R\'enyi conditional entropy is
\begin{align*}
H_{\alpha}^{\downarrow}(A'|C)_{\omega}
&=H_{\alpha}^{\downarrow}(A_1'C'|C)_{\omega}
\\
&=H_{\alpha}(A_1')_{ \frac{\mathds{1}}{|{A_1}|}}+H_{\alpha}^{\downarrow}(C'|C)_{\Phi_{C'C}}
\\
&=\log |A_1|-\log |C|=\log \frac{|A|}{|C|}-\log |C|=\log \frac{|A|}{|C|^2}.
\end{align*}
}
\end{exam}

\begin{remark}\label{rem:sum}{\rm
Note that our converse exponent
\[H_{\alpha}^{\downarrow}(A'A|CE)_\tau\to H(A'A|CE)_\tau\pl,\]
 when $\al\to 1$. On the other hand, our achievability exponent is
 \[ H_\al^*(AA' {\,|\,} CE)_\tau\to H(A'A|CE)_\tau\pl, \]
 as $\al\to 1$.
In the case of $\tau=\omega_{A'C}\ten \rho_{CE}$,
\[ H(A'A|CE)_\tau=H(A'|C)_{\omega}+H(A|E)_{\rho}\pl,\]
This sum expression represents the counter forces in decoupling: if the state $\rho_{AE}$ is more entangled, then it is harder to be decoupled; if the channel $A_1$ is more entanglement-breaking, than it is easier to decouple a state $\rho_{AE}$. Two extreme examples: if $\rho_{AE}$ is a maximally entangled state and $\cT=\id_A$ is the perfect channel, then
\[\frac12\left\|\rho_{AE}^U-\frac{1}{d}\mathds{1}_{A}\ten \rho_E\right\|_{1}=\frac{d^2-1}{d^2}
\] for each $U$;
if $\cT=\tr_A$ is the fully depolarizing channel, the decoupling error is always zero.
From this perspective, our Theorem~\ref{thm:main} and Theorem \ref{thm:converse}~together gives a clear asymptotic achievable region for quantum decoupling, and strong converse outside the closure of it. That is, in the i.i.d.~asymptotic setting, the decoupling error of the channel $\cT^{\ten n}$ and the state $\rho_{AE}^{\ten n}$ using the twirling from $\mathds{U}(\mathsf{A}^n)$ will
\begin{enumerate}\item[(i)] decay exponentially to $0$ if
$H(A'|C)_{\omega}+H(A|E)_{\rho}>0$.
\item[(ii)] converges exponentially to $1$ if
$H(A'|C)_{\omega}+H(A|E)_{\rho}<0$.
It is worth emphasizing that both Theorem~\ref{thm:main} and Theorem \ref{thm:converse} hold for arbitrary copies $n$.
\end{enumerate}}
\end{remark}

\subsection{Randomness-assisted quantum decoupling} \label{sec:randomness-assistance}
A crucial conceptual tool in previous subsections is the joint state-channel decoupling map, in which the state and the channel take a joint and symmetric role.
The achievability and strong converse theorem of the usual quantum decoupling follows as a special case of an tensor product joint state $\omega_{A'C}\otimes \rho_{AE}$. This, in particular, generalizes one of our author's previous result in which the conditional entropy term of the Choi state $\omega_{A'C}$ has not been "R\'enyified".
Furthermore, our  joint state-channel decoupling approach allows us to handle a more general scenario where the underlying state $\rho$ may be correlated with the decoupling channel $\cT$.

Suppose that the state to be decoupled and the decoupling channel are \emph{statistically correlated}. That is, the system $AE$ is prepared as a random bipartite $\rho_{AE} = \sum_i p(i) \rho_{AE}^{(i)}$ for some probability distribution $p$, and moreover the randomness is shared to Alice such that for each preparation $\rho_{AE}^{(i)}$, Alice chooses to apply a decoupling map $\mathcal{T}^{(i)}$ accordingly. We term this setting as \textbf{randomness-assisted quantum decoupling} (see Figure~\ref{fig:decoupling} below).
Now, the question, again, is how well the system $A$ can be decoupled from $E$ in this scenario?

This scenario is the special case of Theorem \ref{thm:main} and \ref{thm:converse} applying to separable joint states.

\begin{coro}[One-shot randomness-assisted decoupling] \label{theo:joint}
	Let $\left\{p(i), \rho_{AE}^{(i)}, \mathcal{T}_{A\to C}^{(i)} \right\}_i$ be an ensemble of state-channel pairs. For any $\al\in[1,2]$ and $\beta \in (0,1)$
\begin{align}\label{eq:random}
1-2\mathrm{e}^{(1-\beta) (H_{\beta}^{\downarrow}(AA'|CE)_\tau-\log \frac{4}{3})}
&\le \frac{1}{2}\bE_{\mathds{U}(\mathsf{A})} \left\| \sum_i p(i)\mathcal{T}_{A\to C}^{(i)}\left(U_A \rho_{AE}^{(i)} U_A^\dagger \right) - \omega_{C}^{(i)} \otimes\rho_E^{(i)} \right\|_1 \nonumber
\\
&\le  \mathrm{e}^{ \frac{1-\alpha}{\alpha} (H_\al^*(AA' {\,|\,} CE)_\tau+\log 3 )},
\end{align}
where $\tau_{A'ACE}=\sum_{i}p(i)\omega_{A'C}^{(i)}\ten \rho_{AE}^{(i)}$ and $\omega_{A'C}^{(i)}$ is the Choi state of $\mathcal{T}^{(i)}$.
\end{coro}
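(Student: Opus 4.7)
The plan is to obtain this corollary as a direct specialization of Theorem~\ref{thm:main} and Theorem~\ref{thm:converse} applied to the separable classical-quantum joint state
\[
\tau_{A'ACE} \,:=\, \sum_i p(i)\,\omega^{(i)}_{A'C}\ten \rho^{(i)}_{AE}.
\]
The point is that the joint state-channel decoupling framework was developed for an \emph{arbitrary} operator $\tau\in \mathcal{B}(\mathsf{A}'\ten \mathsf{C}\ten \mathsf{A}\ten \mathsf{E})$, never assumed to be a tensor product $\omega\ten\rho$. Hence the randomness-assisted scenario should follow automatically once the correct $\tau$ is identified.

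The first step is to verify that $\tau$ does in fact encode the randomness-assisted protocol. By linearity of $\Theta$, and because $U_A$ acts trivially on the $A'$ factor,
\[
\Theta(\tau^U) \,=\, d^2 \bra{\Phi_{A'A}} U_A \tau U_A^\dagger \ket{\Phi_{A'A}} \,=\, \sum_i p(i)\, d^2 \bra{\Phi_{A'A}}\bigl(\omega^{(i)}_{A'C}\ten U_A \rho^{(i)}_{AE} U_A^\dagger\bigr)\ket{\Phi_{A'A}}.
\]
Applying identity \eqref{eq:tensor} term-by-term gives $\Theta(\tau^U)=\sum_i p(i)\,\mathcal{T}^{(i)}\bigl(U_A\rho^{(i)}_{AE}U_A^\dagger\bigr)$. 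Tracing out $AA'$ yields $\tau_{CE}=\sum_i p(i)\,\omega^{(i)}_C\ten \rho^{(i)}_E$, which by \eqref{eq:decouplemap} is $\mathbf{E}_{\mathds{U}(\mathsf{A})}\Theta(\tau^U)$. Therefore the decoupling error operator $\Theta(\tau^U)-\mathbf{E}_{\mathds{U}(\mathsf{A})}\Theta(\tau^U)$ coincides exactly with the quantity $\sum_i p(i)\bigl[\mathcal{T}^{(i)}(U_A \rho^{(i)}_{AE} U_A^\dagger) - \omega^{(i)}_C\ten \rho^{(i)}_E\bigr]$ whose trace norm is to be bounded.

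With the identification in place, the upper bound in \eqref{eq:random} follows by inserting this $\tau$ into Theorem~\ref{thm:main}, and the lower bound follows by inserting the same $\tau$ into Theorem~\ref{thm:converse}. No additional estimation is needed, since both theorems were stated for arbitrary density operators $\tau_{A'ACE}$, and a separable state of the above form is a particular case.

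The only conceptually notable point, rather than a technical obstacle, is the following: in the tensor-product case treated by Corollary~\ref{cor:product}, the sandwiched R\'enyi conditional entropy decomposed into a clean sum $H_\alpha^*(A'|C)_\omega + H_\alpha^*(A|E)_\rho$ through the additivity relation \eqref{eq:additivity_entropy}. For a merely separable $\tau=\sum_i p(i)\,\omega^{(i)}\ten\rho^{(i)}$ no such splitting is available in general, so the bound must be reported via the joint quantities $H_\alpha^*(AA'|CE)_\tau$ and $H_\beta^{\downarrow}(AA'|CE)_\tau$. The correlations introduced by the shared randomness are absorbed into a single entropic term, which is the best one can expect in this generality.
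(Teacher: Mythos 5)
Your proposal is correct and follows essentially the same route as the paper, which likewise obtains the corollary by inserting the separable state $\tau_{A'ACE}=\sum_i p(i)\,\omega^{(i)}_{A'C}\otimes\rho^{(i)}_{AE}$ into Theorem~\ref{thm:main} (achievability) and Theorem~\ref{thm:converse} (strong converse); your explicit verification that $\Theta(\tau^U)=\sum_i p(i)\,\mathcal{T}^{(i)}(U_A\rho^{(i)}_{AE}U_A^\dagger)$ and $\mathbf{E}_{\mathds{U}(\mathsf{A})}\Theta(\tau^U)=\tau_{CE}$ simply makes explicit what the paper leaves implicit. Your closing remark about the absence of an additivity-based splitting for merely separable $\tau$ matches the paper's own discussion following the corollary.
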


\begin{remark}{
A naive approach to the upper bound (i.e., achievability) above is to apply the triangle inequality for the trace norm to bound the decoupling error for each realization $i$ separately. Then, the trace distance will be dominated by exponential decay with the smallest error exponent:
\begin{align}
	\min_i \sup_{ \alpha \in [1,2]} \frac{\alpha-1}{\alpha} \left(  H_\alpha^*(A{\,|\,}E)_{\rho^{(i)}} + H_\alpha^*(A'{\,|\,}C)_{{\omega}^{(i)}} \right),
\end{align}
Asymptotically, in the i.i.d.~scenario where $\rho_{AE}^{(i)} \leftarrow (\rho_{AE}^{(i)})^{\otimes n}$
and
$\omega_{A'C}^{(i)} \leftarrow (\omega_{A'C}^{(i)})^{\otimes n}$,
 the decoupling is achievable if
\begin{align}
 \min_i \left\{	H(A{\,|\,}E)_{\rho^{(i)}} + H(A'{\,|\,}C)_{{\omega}^{(i)}} \right\} >0.
\end{align}
However, the Corollary \eqref{theo:joint} allows to harness the shared randomness to accomplish a much improved achievability criterion in terms of one single conditional entropy:
\begin{align}
	H(A'A{\,|\,}CE)_{\tau} >0,
\end{align}
where the joint state is $\tau_{A'ACE} := \sum_i p(i) \cdot \omega_{A' C}^{(i)} \otimes \rho_{AE}^{(i)}$.}
\end{remark}

\begin{figure}[h!]
	\centering
	\includegraphics[width = 0.8\linewidth]{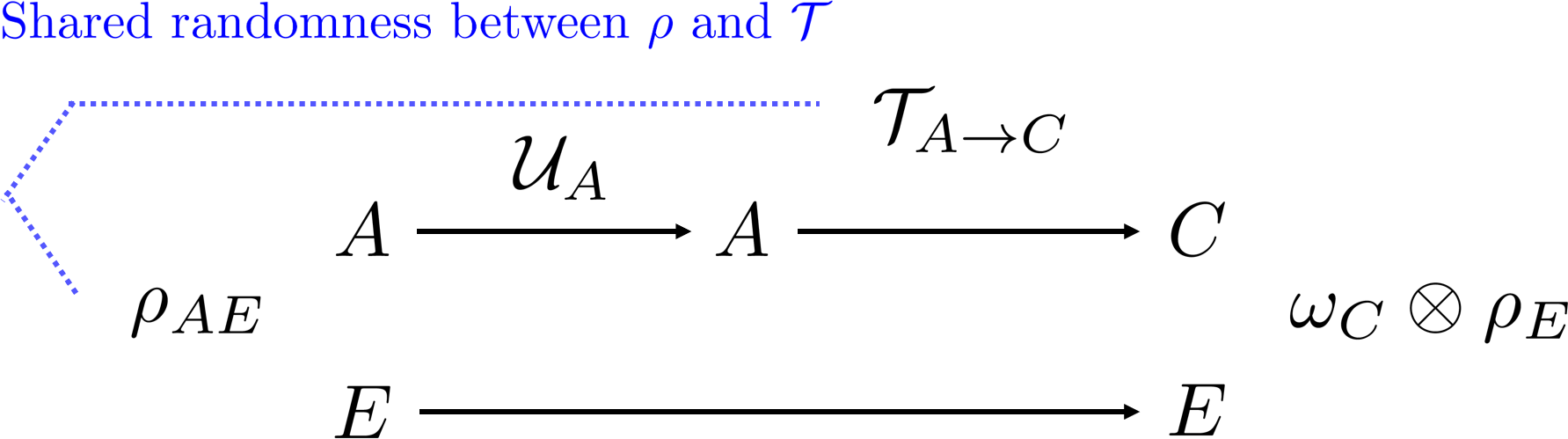}
	\caption {An illustration of randomness-assisted decoupling.}
	\label{fig:decoupling}
\end{figure}

\section{Applications: A One-Shot Quantum Coding Theorem} \label{sec:coding}
Our one shot quantum-decoupling Theorem~\ref{theo:exp} has an immediate application in quantum channel coding.
we consider the following problem:
Alice and Bob share a pure state $\psi_{ABR}$, where Alice holds $A$, Bob holds $B$, and $R$ is a reference system that purifies the state.
Alice aims to send her part of the state to Bob through a single use of the quantum channel $\mathcal{N}_{A'\to C}$.

The following theorem is a variation of \cite[Theorem 3.14]{Dup10} without smoothing.
\begin{shaded_theo}[One-shot quantum coding theorem] \label{theo:coding}
	Let $\psi_{ABR}$ be a pure state, $\mathcal{N}_{A'\to C}$ be any completely positive trace-preserving map with Stinespring dilation $\mathcal{U}_{A'\to CE}^{\mathcal{N}}$ and complementary channel $\bar{\mathcal{N}}_{A'\to E}$, let $\omega_{A''CE} = \mathcal{U}^{\mathcal{N}}\left(\sigma_{A''A'}\right)$, where $\sigma$ is any pure state  with $\mathsf{A}''\cong \mathsf{A}'$.
	Then, there exists an encoding partial isometry $\mathcal{V}_{A\to A'}$ and a decoding map $\mathcal{D}_{CB\to AB}$ such that
	\begin{align}\label{eq:coding}
		\begin{dcases}
			\frac12\left\| \bar{\mathcal{N}}\circ \mathcal{V}(\psi_{AR}) - \omega_E \otimes \psi_R \right\|_1 \leq \sqrt{2 \delta_1 } + \delta_2, \\
			\frac12\left\| \mathcal{D}\circ{\mathcal{N}}\circ \mathcal{V}(\psi_{ABR}) - \psi_{ABR} \right\|_1 \leq \sqrt{2\sqrt{2 \delta_1 } + 2\delta_2},
		\end{dcases}
	\end{align}
	where the error terms are
	\begin{align}
		\delta_1 &= 6 \, \mathrm{e}^{ - \sup_{\alpha\in[1,2]} \frac{\alpha-1}{\alpha}\left( H_{\alpha}^*(A'')_\omega - H^*_{\frac{\alpha}{2\alpha-1}}(A)_\psi\right)   }; \\
		\delta_2 &= 6 \, \mathrm{e}^{ - \sup_{\beta\in[1,2]} \frac{\beta-1}{\beta}\left( H_{\beta}^*(A''{\,|\,}E)_\omega + H^*_{\beta}(A{\,|\,}R)_\psi \right)   }. 	\end{align}
\end{shaded_theo}
\begin{proof}
We follow the argument of \cite[Theorem 3.14]{Dup10} using our improved one-shot decoupling Theorem~\ref{theo:exp}.
Let $W_{A\to A'}$ be a full rank isometry and define two CP map
\begin{align*}
&\mathcal{E}_{A'\to \mathbb{C}}(\cdot )=d_{A'} \tr_{A'}( \sigma_{A'}^{1/2}\cdot  \sigma_{A'}^{1/2})\pl, \\
&\mathcal{T}_{A'\to E}(\cdot  )= d_{A'} \bar{\mathcal{N}}( \sigma_{A'}^{1/2} \cdot  \sigma_{A'}^{1/2})\pl.
\end{align*}
Applying Theorem \ref{theo:exp} to $(\cT, W\psi_{AR}W^\dagger)$ and $(\tr, W\psi_{ABR}W^\dagger)$ respectively, we have the following deviation inequality
\begin{align}\label{eq:2}
\bE_{\mathds{U}(\mathsf{A})}\norm{d_{A'}\bar{\mathcal{N}}\left( \sigma_{A'}^{1/2} U (W\psi_{AR}W^\dagger)  U^\dagger  \sigma_{A'}^{1/2}\right)-\omega_E\ten \psi_R}{1}
&\le 2\cdot \mathrm{e}^{ \frac{\beta-1}{\beta}\left( H_\beta^*(A''{\,|\,}E)_\omega + H^*_{\beta}(A{\,|\,}R)_\psi \right)}
\\
\bE_{\mathds{U}(\mathsf{A})}\norm{d_{A'}\tr\left( \sigma_{A'}^{1/2} U (W\psi_{ABR}W^\dagger)  U^\dagger  \sigma_{A'}^{1/2}\right)-\psi_{BR}}{1}
&\le 2\cdot \mathrm{e}^{  \frac{{\alpha}-1}{{\alpha}}\left( H_{\alpha}^*(A'')_\omega + H^*_{{\alpha}}(A{\,|\,}BR)_\psi \right)}
\\
&\le 2\cdot \mathrm{e}^{  \frac{{\alpha}-1}{{\alpha}}\left( H_{\alpha}^*(A'')_\omega - H^*_{\frac{{\alpha}}{2{\alpha}-1}}(A)_\psi \right)}
\end{align}
where we use the duality of conditional $H_\al^*$ \cite{Bei13, MDS+13} that for a pure state $\phi_{ABC}$,
\[H_\al^*(A|B)_\phi=-H_{\frac{\al}{2\al-1}}^*(A|C)_\phi\pl, \pl \al\in [\sfrac{1}{2},\infty]\pl. \]
By Markov inequality, there exists a unitary $U\in \mathds{U}(\mathsf{A})$ such that
\begin{align}\label{eq:3}
\norm{d_{A'}\bar{\mathcal{N}}\left( \sigma_{A'}^{1/2} U (W\psi_{AR}W^\dagger)  U^\dagger  \sigma_{A'}^{1/2}\right)-\omega_E\ten \psi_R}{1}
&\le 6\cdot \mathrm{e}^{ \frac{\beta-1}{\beta}\left( H_\beta^*(A''{\,|\,}E)_\omega + H^*_{\beta}(A{\,|\,}R)_\psi \right)}
\\
\norm{d_{A'}\tr\left( \sigma_{A'}^{1/2} U (W\psi_{ABR}W^\dagger)  U^\dagger  \sigma_{A'}^{1/2}\right)-\psi_{BR}}{1}
&\le 6\cdot \mathrm{e}^{  \frac{\alpha-1}{\alpha}\left( H_\alpha^*(A'')_\omega - H^*_{\frac{\alpha}{2\alpha-1}}(A)_\psi \right)}\nonumber
\end{align}
By Uhlmann's Lemma (c.f. \cite[Theorem 3.1]{Dup10}), the second inequality implies that is an encoding isometry $V^{A\to A'}$ such that
\begin{align*}&\norm{d_{A'}\sigma_{A'}^{1/2} U (W\psi_{ABR}W^\dagger)  U^\dagger  \sigma_{A'}^{1/2}-V\psi_{ABR}V^\dagger}{1}\le 2\sqrt{6}\cdot \mathrm{e}^{  \frac{\alpha-1}{2\alpha}\left( H_\alpha^*(A'')_\omega - H^*_{\frac{\alpha}{2\alpha-1}}(A)_\psi \right)}=2\sqrt{\delta_1}
\end{align*}
By triangle inequality, this with \eqref{eq:3} implies
\begin{align*}&\norm{\bar{\mathcal{N}}(V\psi_{AR}V^\dagger)-\omega_E\ten \psi_{R}}{1}\le 2\sqrt{\delta_1}+\delta_2
\end{align*}
Using Ulhmann's Lemma again, we obtain a decoding partial isometry $D_{CB\to FAB}$ such that
\begin{align*}&\norm{D U_{\bar{\mathcal{N}}}(V\psi_{ABR}V^\dagger)U_{\bar{\mathcal{N}}}^\dagger D^\dagger-\xi_{EF}\ten \psi_{ABR}}{1}\le 2\sqrt{2\sqrt{\delta_1}+\delta_2}
\end{align*}
for some state $\xi_{EF}$. The second assertion in \eqref{eq:coding} follows from tracing over system $EF$.
\end{proof}
Theorem~\ref{theo:coding} leads to an one-shot error exponent for communicating quantum information with or without some entanglement assistance.
We illustrate the quantum communication task here.
Let the state $\psi_{ABR} = \Phi_{RM} \otimes \Phi_{\widetilde{A} B}$,
$M$ denotes the quantum system at Alice to be sent to Bob, and $\Phi_{\widetilde{A} B}$ is the shared entanglement between Alice and Bob.
We say that an encoder-decoder pair $(\mathcal{E}_{M \widetilde{A}\to A'}, \mathcal{D}_{C B \to M})$ is a $(\mathrm{Q}, \mathrm{E}, \eps)$-code for $\mathcal{N}_{A'\to C}$ if $|\mathsf{M}| = |\mathsf{R}|  \ge  \mathrm{2}^\mathrm{Q}$, $ |\tilde{\mathsf{A}}| =  |\mathsf{B}| \le \mathrm{2}^{\mathrm{E}}$, and
\begin{align}
	\frac12 \left\| \mathcal{D}\circ\mathcal{N}\circ \mathcal{E} \left( \Phi_{RM} \otimes \Phi_{\widetilde{A} B } \right) - \Phi_{RM} \right\|_1 \leq \eps.
\end{align}
A rate pair $(Q,E)$ is called achievable if for any $\eps>0$, there exists some $n\in \mathbb{N}^+$ and a $(n\mathrm{Q}, n\mathrm{E}, \eps)$-code for $\mathcal{N}^c$. It was proved in \cite[Theorem 3.15]{Dup10} that the following region is achievable
\begin{align}
		\mathrm{Q} + \mathrm{E} < H(A)_\sigma, \quad \text{and} \quad
		\mathrm{Q} - \mathrm{E} < I\left(A\rangle C\right)_{\mathcal{N}(\sigma)}.
	\end{align}
for arbitrary bipartite channel input $\sigma_{A'A}$.

Here, we obtain the following \emph{one-shot} achievable error exponent for quantum communication, which demonstrates exponential decays of the communication error for all rate pairs $(\mathrm{Q},\mathrm{E})$ in the achievable region \cite{Llo97, Dev05, HHW+08, BSS+02}.
\begin{shaded_theo}[One-shot error exponent for quantum communication] \label{theo:communication}
	For any quantum channel $\mathcal{N}_{A'\to C}$ and any pure state $\sigma_{AA'}$ with $\mathsf{A}'\cong \mathsf{A}$,
	there exists a $(\mathrm{Q}, \mathrm{E}, \eps)$-code for $\mathcal{N}_{A'\to C}$ satisfying
	\begin{align}
		\eps \leq \sqrt{ 2 \sqrt{ 2\delta_1 } + 2 \delta_2 },
	\end{align}
	where the error terms are
	\begin{align}\label{eq:4}
		\delta_1 &= 6 \, \mathrm{e}^{ - \sup_{\alpha\in[1,2]} \frac{\alpha-1}{\alpha}\left( H_{\alpha}^*(A)_{\sigma} - (\mathrm{Q}+\mathrm{E}) \right) }; \\
		\delta_2 &= 6 \, \mathrm{e}^{ - \sup_{\beta \in [\sfrac23, 1]} \frac{1-\beta}{\beta} \left( I_{\beta}^*(A\rangle C)_{\mathcal{N}(\sigma)} - (\mathrm{Q} - \mathrm{E}) \right) }.
	\end{align}
	Here, $I_{\beta}^*(A\rangle B)_\rho := \inf_{\sigma_B} D_\beta^*(\rho_{AB}\Vert \mathds{1}_A\otimes \sigma_B)=-H_\al^*(A|B)_\rho$ denotes the sandwiched R\'enyi coherent information
	\cite{KW20}.
	
	Moreover, the error exponent (for $\eps$) is positive if and only if
	\begin{align} 
		\mathrm{Q} + \mathrm{E} < H(A)_\sigma, \quad \text{and} \quad
		\mathrm{Q} - \mathrm{E} < I\left(A\rangle C\right)_{\mathcal{N}(\sigma)}.
	\end{align}
\end{shaded_theo}
\begin{proof}
	Let $\psi_{M\tilde{A}BR}=\Phi_{RM}\ten \Phi_{\tilde{A}B}$ be the maximally entangled state with $|R|=|M|=2^R$ and $|\tilde{A}|=|B|=2^{\mathrm{E}}$. Let $U_{\mathcal{N}}$ be the Stinespring dilation of channel $\mathcal{N}$. Applying Theorem~\ref{theo:coding} with the input state $\psi$, the channel $\mathcal{N}$, and $\omega_{ACE}=U_{\mathcal{N}}\sigma U_{\mathcal{N}}^\dagger$, we obtain an isometry $V_{M\tilde{A}\to A'}$ and decoder map $\mathcal{D}_{CB\to M}$ such that
\[ \norm{\mathcal{D}\circ \mathcal{N} (V \psi_{M\tilde{A}BR}V^\dagger)- \psi_{MR}}{1}\le 2\sqrt{2\sqrt{\delta_1}+\delta_2}\pl, \]
where
\begin{align}
		\delta_1 &= 6 \, \mathrm{e}^{ - \sup_{\alpha\in[1,2]} \frac{\alpha-1}{\alpha}\left( H_\alpha^*(A'')_\omega - H^*_{\frac{\alpha}{2\alpha-1}}(\tilde{A}M)_\psi\right)   }; \\
		\delta_2 &= 6 \, \mathrm{e}^{ - \sup_{\alpha\in[1,2]} \frac{\alpha-1}{\alpha}\left( H_\alpha^*(A''{\,|\,}E)_\omega + H^*_{\alpha}(\tilde{A}M{\,|\,}R)_\psi \right)   }.
	\end{align}
as $A=\tilde{A}M$.
Since $\psi=\Phi_{RM}\ten \Phi_{\tilde{A}B}$ is a maximally entangled statement,
we have
\begin{align*} &H^*_{\frac{\alpha}{2\alpha-1}}(\tilde{A}M)_\psi=H^*_{\frac{\alpha}{2\alpha-1}}\left(\frac{1}{d_{\tilde{A}M}} \mathds{1}_{{{\tilde{A}M}}} \right)
=\log|\tilde{A}|\cdot |M|=\mathrm{Q}+\mathrm{E},\\
& H^*_{\alpha}(\tilde{A}M{\,|\,}R)_\psi= H^*_{\alpha}(\tilde{A})_\psi + H^*_{\alpha}(M{\,|\,}R)_\psi=\mathrm{Q}-\mathrm{E},
\end{align*}
where for the second term we used Proposition \ref{lemm:multiplicative} for product state $\psi_{\tilde{A}MR}=\frac{1}{d_{\tilde{A}}}\mathds{1}_{\tilde{A}}\ten \Phi_{MR}$. The assertion follows from the duality of conditional $H_\al^*$ that
\[  H_\alpha^*(A''{\,|\,}C)_\omega=-H_{\beta}^*(A''{\,|\,}C)_\omega=I_{\beta}^*(A\rangle C)_{\mathcal{N}(\sigma)} \pl,\]
for $\beta=\frac{\al}{2\al-1}$. When $\al\in [1,2]$, $\beta\in [\frac{2}{3},1]$. That finished the proof.
\end{proof}
The first error term $\delta_1$ characterizes how well both the quantum information $\mathrm{Q}$ and Alice's share of entanglement $\mathrm{E}$ fits into the channel input and the resulting error exponent. The second error term $\delta_2$ corresponds how well the channel can be used to communicate quantum information without using entanglement, and we show that this is characterized by the sandwiched R\'enyi coherent information.
When $E=0$, this gives an one-shot error exponent for quantum communication \emph{without} assistance of entanglement. On the other hand, if we do not limit the rate of entanglement assistance, our result implies that as long as $\mathrm{Q} < H(A)_\sigma + I(A\rangle C)_{\mathcal{N}(\sigma)} = \frac12 I(A{\,:\,}C)_{\mathcal{N}(\sigma)}$, we can choose a proper $\mathrm{E}$ such that both the exponents of $\delta_1$ and $\delta_2$ are positive, hence recovers the direct coding of entanglement-assisted quantum communication.

\section{Discussion on Interpretation of the error exponent} \label{sec:interpretation}
In this section, we discuss the error exponent obtained for quantum decoupling.
Because $\alpha \mapsto H_\alpha^*$ is continuously decreasing, the positivity of the error exponent in Theorem~\ref{theo:exp} can be characterized as follows:
\begin{align}
	\sup_{ \alpha \in [1,2]} \frac{\alpha-1}{\alpha} \left( H_\alpha^*(A{\,|\,}E)_\rho + H_\alpha^*(A'{\,|\,}C)_{\omega} \right) > 0
	\quad \Longleftrightarrow \quad  H_1^*(A{\,|\,}E)_\rho + H_1^*(A'{\,|\,}C)_{\omega} >0.
\end{align}
The above exponent can be interpreted as a trade-off or balance between two error exponents associated with different tasks by using Fenchel's duality, which we illustrate below. For any bipartite density operator $\rho_{AB}$, we define the following error exponent functions:
\begin{align}
	{\color{BrickRed}
		E^*_\textnormal{\texttt{mother}}(R)_{\rho_{AE}}
	}
	&:= \sup_{\alpha\in[1,2]} \frac{\alpha-1}{\alpha} \left( H_\alpha^*(A{\,|\,}E)_\rho + \log |A| - 2R \right); \\
	{\color{RoyalBlue}
		E^*_\textnormal{\texttt{father}}(R)_{\omega_{A' C}}
	}
	&:= \sup_{\alpha\in[1,2]} \frac{\alpha-1}{\alpha} \left( 2R - I_\alpha^*(A'{\,:\,}C)_\omega\right),
\end{align}
where the order-$\alpha$ sandwiched R\'enyi conditional entropy $H_\alpha^*(A{\,|\,}B)_\rho$ and the sandwiched R\'enyi information are defined as
\begin{align}
	H_\alpha^*(A{\,|\,}E)_\rho := - \inf_{\sigma_E \in \mathcal{S}(\mathsf{E})} D_\alpha^*\left( \rho_{AE} \Vert \mathds{1}_{A} \otimes \sigma_E \right)\pl, \pl
	I_\alpha^*(A'{\,:\,}C)_\rho := \inf_{\sigma_C \in \mathcal{S}(\mathsf{C})} D_\alpha^*\left( \omega_{A'C} \Vert \omega_{A'} \otimes \sigma_C \right).
\end{align}
The term `{\color{BrickRed}\texttt{mother}}' stands for the mother protocol \cite{Horodecki2005}, while `{\color{RoyalBlue}\texttt{father}}' stands for the father protocol \cite{Proc465} or the quantum splitting protocol. The $E^*_\textnormal{\texttt{mother}}(R)_{\rho_{AE}}
 $ is motivated from Dupuis' result \cite{Dup23} that, for $\mathcal{T}_{A\to C} = \Tr_{A\backslash C}$ being a partial trace,
\begin{align} \label{eq:mother}
	\frac12\mathbf{E}_{\mathds{U}(\mathsf{A})} \left\| \Tr_{A\backslash C}\left(U_A \rho_{AE} U_A^\dagger \right) - \omega_{C}\otimes\rho_E \right\|_1 \leq \e^{ -
	{\color{BrickRed}
	E^*_\textnormal{\texttt{mother}}(\log |C|)_{\rho_{AE}}
	}	
},
\end{align}
where
\begin{align}
	E^*_\textnormal{\texttt{mother}}(\log |C|)_{\rho_{AE}} > 0
	\quad \Longleftrightarrow \quad
	\log |C| < \frac12 \left( H\left( A{\,|\,} E \right)_\rho + \log |A| \right).
\end{align}
The bound in \eqref{eq:mother} provides an exponential achievability for the standard decoupling (i.e.~$\mathcal{T}_{A\to C} = \Tr_{A\backslash C}$), which in turns provides the maximum (one-shot) distillable entanglement ($\log |C|$) in the mother protocol.
Also note that \eqref{eq:mother} can directly be derived from Theorem~\ref{theo:exp} (up to a constant) by noting that $H_\alpha^*(A'{\,|\,}C)_\omega  = \log \frac{|A|}{|C|^2}$.


Then, the other error exponent that competes with ${\color{BrickRed}
	E^*_\textnormal{\texttt{mother}}(R)_{\rho_{AE}}
}  $ should correspond to the father protocol i.e.~${\color{RoyalBlue}
E^*_\textnormal{\texttt{father}}(R)_{\omega_{A' C}}
}$.
Noting the Choi-Jamio{\l}kowski state $\omega_{A'C}$ in Theorem~\ref{theo:exp}, we have
\begin{align}
	H_\alpha^*(A'{\,|\,}C)_{\omega} - \log |A| = - I_\alpha^*(A'{\,:\,}C)_\omega
\end{align}
because $\omega_{A'} = \frac{ \mathds{1}_{A'} }{ |A'| }$ and $|A'| = |A|$ as the Choi state of a quantum channel. Hence, we obtain
\begin{align}
	&\frac{\alpha-1}{\alpha} \left( H_\alpha^*(A{\,|\,}E)_\rho + H_\alpha^*(A'{\,|\,}C)_{\omega} \right)\\
	&= \frac{\alpha-1}{\alpha} \left( H_\alpha^*(A{\,|\,}E)_\rho + \log |A| \right) + \frac{\alpha-1}{\alpha} \left( - I_\alpha^*(A'{\,:\,}C)_\omega \right) \\
	&= {\color{BrickRed}\frac{\alpha-1}{\alpha} \left( H_\alpha^*(A{\,|\,}E)_\rho + \log |A| - \log |C|^2 \right)}  +
	{\color{RoyalBlue}\frac{\alpha-1}{\alpha} \left( \log |C|^2 - I_\alpha^*(A'{\,:\,}C)_\omega \right) }.
\end{align}
This suggests that the above expression may be interpreted as the sum of two error exponent functions (by choosing the worst rate).
More precisely, by using the concavity in Lemma~\ref{lemm:concave}, we obtain the following expression of the error exponent function via Fenchel's duality.
It would be very interesting to find out what protocol the exponent ${\color{RoyalBlue}
	E^*_\textnormal{\texttt{father}}(\log |C|)_{\omega_{A' C}}
}$ corresponds to.
\begin{prop}[A Fenchel duality for exponent functions] \label{prop:Fenchel}
	Following the above notations, we have
	\begin{align}
	\sup_{\alpha\in[1,2]}	\frac{\alpha-1}{\alpha} \left( H_\alpha^*(A{\,|\,}E)_\rho + H_\alpha^*(A'{\,|\,}C)_{\omega} \right)
	=
	\inf_{R\geq 0} \left\{
	{\color{BrickRed}
		E^*_\textnormal{\texttt{mother}}(\log |C|)_{\rho_{AE}}
	}
	+
	{\color{RoyalBlue}
		E^*_\textnormal{\texttt{father}}(\log |C|)_{\omega_{A' C}}
	}
	\right\}.
	\end{align}
\end{prop}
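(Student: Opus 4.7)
The plan is to recognize the claimed identity as a Fenchel/minimax duality and push the argument through once the concavity furnished by Lemma~\ref{lemm:concave} is invoked. The first move is a convenient reparametrization. Setting $s := (\alpha-1)/\alpha \in [0,\tfrac12]$ for $\alpha\in[1,2]$, define
\[
h_1(s) := s\bigl[H_{\alpha(s)}^*(A|E)_\rho + \log|A|\bigr], \qquad h_2(s) := -s\, I_{\alpha(s)}^*(A'{:}C)_\omega,
\]
so that $E^*_{\texttt{mother}}(R)_{\rho_{AE}} = \sup_{s\in[0,1/2]}\{h_1(s)-2sR\}$ and $E^*_{\texttt{father}}(R)_{\omega_{A'C}} = \sup_{s\in[0,1/2]}\{h_2(s)+2sR\}$. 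Using $H_\alpha^*(A'|C)_\omega = \log|A| - I_\alpha^*(A'{:}C)_\omega$, the $R$-dependence cancels in the sum and one obtains $h_1(s)+h_2(s) = s\bigl[H_\alpha^*(A|E)_\rho + H_\alpha^*(A'|C)_\omega\bigr]$. In particular the left-hand side of the proposition is exactly $\sup_{s\in[0,1/2]}\{h_1(s)+h_2(s)\}$.

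The ``$\geq$'' inequality is then immediate by pointwise comparison: for every admissible $R$ and every $s$,
\[
E^*_{\texttt{mother}}(R)+E^*_{\texttt{father}}(R) \;\geq\; \bigl[h_1(s)-2sR\bigr]+\bigl[h_2(s)+2sR\bigr] \;=\; h_1(s)+h_2(s),
\]
and one infimizes over $R$ after supremizing over $s$. For the substantive direction I would write
\[
\inf_{R\geq 0}\bigl\{E^*_{\texttt{mother}}(R)+E^*_{\texttt{father}}(R)\bigr\} \;=\; \inf_{R\geq 0}\sup_{(s_1,s_2)\in[0,1/2]^2}\bigl\{h_1(s_1)+h_2(s_2)+2R(s_2-s_1)\bigr\}
\]
and then exchange the $\inf$ and $\sup$ via Sion's minimax theorem. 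This is legitimate because the integrand is affine (hence convex) in $R$ on $[0,\infty)$ and, by Lemma~\ref{lemm:concave}, jointly concave in $(s_1,s_2)$ on the compact convex set $[0,\tfrac12]^2$. After the swap, the inner infimum in $R$ forces the binding set to be the diagonal $s_1=s_2 =: s$, since otherwise the linear term in $R$ drives the expression to $-\infty$ (or is strictly lowered at the boundary $R=0$). On the diagonal the infimum equals $h_1(s)+h_2(s)$, and taking the supremum reproduces the LHS.

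The main obstacle is the technical handling of the one-sided constraint $R\geq 0$, since Sion's theorem is cleaner on the full line. My preferred workaround is to check that the unconstrained minimizer $R^\star$ of the convex map $R\mapsto E^*_{\texttt{mother}}(R)+E^*_{\texttt{father}}(R)$ is automatically nonnegative: $E^*_{\texttt{mother}}$ is nonincreasing and $E^*_{\texttt{father}}$ is nondecreasing in $R$ (since $s\geq 0$), while at $R=0$ the father term vanishes (attained at $\alpha=1$) and the mother term is nonnegative, so that balancing the subgradients forces $R^\star\geq 0$. A fallback would be to invoke Fenchel--Rockafellar duality directly for the conjugate pair $(-h_1,-h_2)$ on $[0,\tfrac12]$, which absorbs the constraint into the effective domain.
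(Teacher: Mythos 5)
Your overall strategy is sound but differs from the paper's: the paper simply identifies $f$, $g$, the variable $x=\frac{\alpha-1}{\alpha}$ and the dual variable $x^*=R$, and then cites Rockafellar's Fenchel duality theorem, with Lemma~\ref{lemm:concave} supplying the required concavity; you instead re-derive the duality by lifting to two variables $(s_1,s_2)$ and invoking Sion's minimax theorem. Your route is more self-contained, and it has the merit of explicitly confronting the one-sided constraint (reading the right-hand side as $\inf_{R\geq 0}\{E^*_\textnormal{\texttt{mother}}(R)+E^*_\textnormal{\texttt{father}}(R)\}$, as you do): your observation that the father term vanishes identically for $R\le 0$ while the mother term is nonincreasing in $R$ does show that the constrained and unconstrained infima coincide, a point the paper's one-line proof does not address. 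The reparametrization, the cancellation of the $R$-dependence giving $h_1(s)+h_2(s)$ on the left-hand side, the easy inequality, and the verification of Sion's hypotheses (affine in $R$, jointly concave in $(s_1,s_2)$ via Lemma~\ref{lemm:concave}, compactness of $[0,\tfrac12]^2$, continuity of $h_1,h_2$) are all correct.

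There is, however, one step whose justification as written does not hold: the collapse of the outer supremum to the diagonal. After the exchange you face $\sup_{(s_1,s_2)\in[0,1/2]^2}\inf_{R\ge 0}\{h_1(s_1)+h_2(s_2)+2R(s_2-s_1)\}$. Pairs with $s_2<s_1$ are indeed eliminated (the inner infimum is $-\infty$), but for $s_2>s_1$ the inner infimum is attained at $R=0$ and equals the finite number $h_1(s_1)+h_2(s_2)$; saying the expression ``is strictly lowered at the boundary $R=0$'' does not show that this value is dominated by the diagonal supremum. At this point you only have $\sup_{s_1\le s_2}\{h_1(s_1)+h_2(s_2)\}$, which for general concave $h_1,h_2$ vanishing at $0$ could exceed $\sup_s\{h_1(s)+h_2(s)\}$. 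The gap is closed by one extra observation specific to this setting: $h_2(0)=0$ and $h_2\le 0$ on $[0,\tfrac12]$ (because $I_\alpha^*(A'{\,:\,}C)_\omega\ge 0$), so concavity forces $h_2$ to be nonincreasing on $[0,\tfrac12]$; hence $h_1(s_1)+h_2(s_2)\le h_1(s_1)+h_2(s_1)$ whenever $s_1\le s_2$, and the off-diagonal points never beat the diagonal. With that line added, your argument is complete and establishes the stated identity.
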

\begin{proof}
	We first recall Fenchel's duality theorem \cite{Roc70}.
	Let $f$ (resp.~$g$) be proper convex function (resp.~proper concave function) from some Banach space to extended real lines.
	Then,
	\begin{align}
		\inf_x \left\{ f(x) - g(x)  \right\} = \sup_{x^*}\left\{ g_*(x^*) - f^*(x^*)   \right\},
	\end{align}
	where
	\begin{align}
		f^*(x^*) := \sup_{x} \left\{ \langle x^*, x \rangle - f(x)  \right\}
	\end{align}
	is the convex conjugate of $f$, $\langle \cdot, \cdot \rangle$ denotes an inner product, and  similarly for $g_*$.

	In view of the definitions given above, we let
	\begin{align}
		\begin{cases}
			x = \frac{\alpha-1}{\alpha} \in [-\sfrac{1}{2}, 0] \\
			-f(x) = \frac{\alpha-1}{\alpha} \left( H_\alpha^*(A{\,|\,}E)_\rho + \log |A| \right)\\
			g(x) = \frac{\alpha-1}{\alpha} \left(H_\alpha^*(A'{\,|\,}C)_{\omega} - \log |A| \right) = \frac{1-\alpha}{\alpha} I_\alpha^*(A'{\,:\,}C)_\omega \\
			x^* = R
		\end{cases}
	\end{align}
	The the claim follows from the concavity of $x\mapsto - f(x)$ and $x\mapsto g(x)$, which is delayed to Lemma~\ref{lemm:concave} by using similar idea in \cite{CGH18}.
\end{proof}

With Proposition~\ref{prop:Fenchel}, we can interpret the error exponent in Theorem~\ref{theo:exp} as in Figure~\ref{fig:exponent}.
\begin{figure}[h!]
	\centering
	\includegraphics[width = 0.8\linewidth]{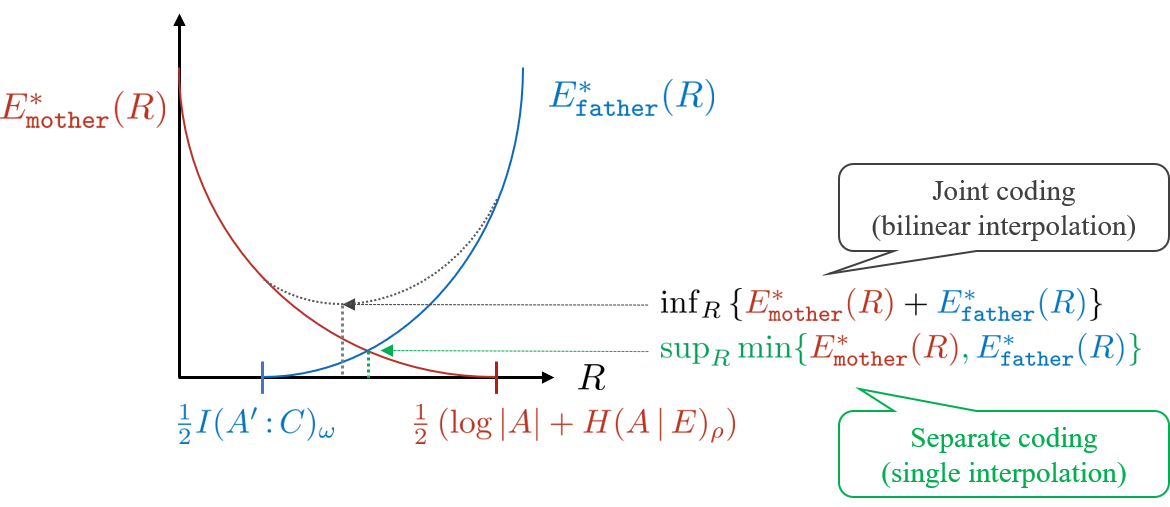}
	\caption {Interpretation of the error exponent.}
	\label{fig:exponent}
\end{figure}
The error exponent in Theorem~\ref{theo:exp} is positive if and only if
\begin{align}
	\quad H(A{\,|\,}E)_\rho + H(A'{\,|\,}C)_{\omega} \geq 0
	\Longleftrightarrow
	\frac12 \left( H\left( A{\,|\,} E \right)_\rho + \log |A| \right) > \frac12 I_\alpha^*(A'{\,:\,}C)_\tau.
\end{align}
This can be viewed as the quantum version of the \emph{joint source-channel coding with side information} in the classical Shannon theory. 

\section{Acknowledgement} LG was partially supported by NSF grant DMS-2154903. HC is supported by the National Science and Technology Council, Taiwan (R.O.C.) under Grants No.~NSTC 112-2636-E-002-009, No.~NSTC 113-2119-M-007-006, No.~NSTC 113-2119-M-001-006, No.~NSTC 113-2124-M-002-003, and No.~NSTC 113-2628-E-002-029 by the Yushan Young Scholar Program of the Ministry of Education, Taiwan (R.O.C.) under Grants No.~NTU-112V1904-4 and by the research project ``Pioneering Research in Forefront Quantum Computing, Learning and Engineering'' of National Taiwan University under Grant No. NTU-CC- 112L893405 and NTU-CC-113L891605. H.-C.~Cheng acknowledges the support from the “Center for Advanced Computing and Imaging in Biomedicine (NTU-113L900702)” through The Featured Areas Research Center Program within the framework of the Higher Education Sprout Project by the Ministry of Education (MOE) in Taiwan.

\appendix
\section{Miscellaneous Lemmas} \label{sec:lemmas}

\begin{lemm}[Concavity of scaled sandwiched R\'enyi conditional entropy] \label{lemm:concave}
	For any bipartite density operator $\rho_{AB}$, the following map
	\begin{align}
		s \mapsto -s H_{\frac{1}{1+s}}^*(A{\,|\,}B)_\rho
	\end{align}
	is concave on $s \in [0,-1]$.
\end{lemm}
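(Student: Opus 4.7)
\medskip

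\noindent\textbf{Proof proposal.}
My plan is to recognise the quantity $-sH^*_{1/(1+s)}(A{\,|\,}B)_\rho$ as the logarithm of a vector-valued noncommutative $L_p$ norm of $\rho_{AB}$, and then invoke complex interpolation of the Pisier spaces $S_1(\mathsf{B},S_\al(\mathsf{A}))$ (already introduced in Section~\ref{sec:preliminaries}) to conclude.  The concavity statement is simply the Hadamard three-line lemma for this family of norms, with the affine reparametrisation $s\leftrightarrow 1/\alpha$.

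First, set $\alpha=\frac{1}{1+s}$, so that the conjugate exponent satisfies $\frac{1}{\al'}=\frac{\al-1}{\al}=-s$.  Using the identification
\[
H_\al^*(A{\,|\,}B)_\rho \;=\; -\al'\log \norm{\rho}{S_1(\mathsf{B},\,S_\al(\mathsf{A}))}
\]
recalled in Section~\ref{sec:preliminaries}, one immediately obtains
\[
-sH^*_{1/(1+s)}(A{\,|\,}B)_\rho \;=\; \tfrac{1}{\al'}H_\al^*(A{\,|\,}B)_\rho \;=\; -\log \norm{\rho}{S_1(\mathsf{B},\,S_\al(\mathsf{A}))}.
\]
Thus it suffices to prove that $s\mapsto \log \norm{\rho}{S_1(\mathsf{B},\,S_\al(\mathsf{A}))}$ is convex on $s\in[-1,0]$, with $\al=1/(1+s)$.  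Since the change of variables $s\mapsto \tfrac{1}{\al}=1+s$ is affine, convexity in $s$ is equivalent to convexity in $1/\al$.

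For the latter, I would invoke the complex interpolation identity for the Pisier vector-valued Schatten spaces recorded in Section~\ref{sec:preliminaries},
\[
S_1\bigl(\mathsf{B},S_{\al_\theta}(\mathsf{A})\bigr) \;=\; \bigl[S_1(\mathsf{B},S_{\al_0}(\mathsf{A})),\;S_1(\mathsf{B},S_{\al_1}(\mathsf{A}))\bigr]_{\theta}, \qquad \tfrac{1}{\al_\theta}=\tfrac{1-\theta}{\al_0}+\tfrac{\theta}{\al_1}.
\]
The defining property of the complex interpolation functor yields
\[
\norm{\rho}{S_1(\mathsf{B},S_{\al_\theta}(\mathsf{A}))} \;\le\; \norm{\rho}{S_1(\mathsf{B},S_{\al_0}(\mathsf{A}))}^{1-\theta}\,\norm{\rho}{S_1(\mathsf{B},S_{\al_1}(\mathsf{A}))}^{\theta},
\]
which is exactly convexity of $1/\al\mapsto \log \norm{\rho}{S_1(\mathsf{B},S_\al(\mathsf{A}))}$.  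Substituting back $1/\al=1+s$ gives convexity in $s$, hence concavity of $s\mapsto -sH^*_{1/(1+s)}(A{\,|\,}B)_\rho$ on the claimed interval.

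The only mild obstacle I foresee is keeping the endpoint conventions consistent: the interval $s\in[-1,0]$ maps to $\al\in[1,\infty]$, so the $s=-1$ endpoint corresponds to the conditional min-entropy ($S_1(\mathsf{B},S_\infty(\mathsf{A}))$), and the $s=0$ endpoint to the trace class $S_1(\mathsf{A}\otimes\mathsf{B})$; both of these have already been identified in Section~\ref{sec:preliminaries}, so the interpolation is well-posed across the whole range.  Everything else is an essentially formal consequence of three-line-style interpolation applied to a fixed operator $\rho_{AB}$, so no density argument or dimension reduction should be needed.
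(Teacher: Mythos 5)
Your proposal is correct and follows essentially the same route as the paper: both identify $-sH^*_{1/(1+s)}(A{\,|\,}B)_\rho$ with $-\log \norm{\rho}{S_1(\mathsf{B},S_{1/(1+s)}(\mathsf{A}))}$ and then apply the Pisier-space complex interpolation inequality \eqref{eq:interpolation} under the affine change of variables $1/\al = 1+s$. No gaps to report.
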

\begin{proof}This follows from the interpolation relation of $S_1(\mathsf{B}, S_\frac{1}{s+1}(\mathsf{A}))$ norm proved by Pisier \cite{pisier1998non}. By the definition of $H_{\frac{1}{1+s}}^*(A{\,|\,}B)_\rho$,
\[f(s)=-s H_{\frac{1}{1+s}}^*(A{\,|\,}B)_\rho=-\log \norm{ \rho}{S_1(\mathsf{B}, S_\frac{1}{s+1}(\mathsf{A}))} \pl.\]
Take the convex combination $s=(1-\theta)s_0+\theta s_1$ of $s_0,s_1\in [0,-1], \theta[0,1] $. This implies that for $p=\frac{1}{1+s}$ and $p_i=\frac{1}{1+s_i}, i=1,2$,
\[ \frac{1}{p}=\frac{1-\theta}{p_0}+\frac{\theta}{p_1}\pl\]
By the interpolation inequality \eqref{eq:interpolation} in Appendix \ref{sec:interpolation},
\begin{align*}
f(s)=&-\log \norm{ \rho}{S_1(\mathsf{B}, S_{\frac{1}{s+1}}(\mathsf{A}))}
\\ =&-\log \norm{ \rho}{S_1(\mathsf{B}, S_p(\mathsf{A}))}
\\ \ge &-\log \norm{ \rho}{S_1(\mathsf{B}, S_{p_0}(\mathsf{A}))}^{1-\theta} \cdot \norm{ \rho}{S_1(\mathsf{B}, S_{p_1}(\mathsf{A}))}^{\theta}
\\ =& -(1-\theta)\log \norm{ \rho}{S_1(\mathsf{B}, S_{p_0}(\mathsf{A}))}- \theta\log \norm{ \rho}{S_1(\mathsf{B}, S_{p_1}(\mathsf{A}))}
\\ =& (1-\theta)f(s_0)+\theta f(s_1)\qedhere
\end{align*}
\end{proof}

\begin{lemm}[\hspace{1sp}{\cite[Lemmas 3.4 \& 3.5]{DBW+14}}] \label{lemm:swap}
	Let $F_A =\sum_{i,j} |i\rangle \langle j| \ten |j\rangle \langle i| \in \mathcal{B}(\mathsf{A}\otimes \mathsf{\tilde{A}})$ be the swap operator between $\mathsf{A}$ and its copy $\mathsf{\tilde{A}}$. Then
	for $M_A \in \mathcal{B}(\mathsf{A})$ and $N_{\tilde{A}} \in  \mathcal{B}(\mathsf{\tilde{A}}) \cong \mathcal{B}(\mathsf{A})$,
	\[
	\tr[M_A N_A]=\tr[(M_A\ten N_{\tilde{A}})F_A].
	\]
	For an operator $ Y_{A\tilde{A}} \in  \mathcal{B(\mathsf{A}\otimes \mathsf{\tilde{A}'})}$,
	\[
	\mathbf{E}_{ \mathds{U}(\mathsf{A}) } (U_A\ten U_{\tilde{A}}) Y_{A\tilde{A}} (U_A^\dagger\ten U_{\tilde{A}}^\dagger )
	=\al \mathds{1}_{A \tilde{A}} +\beta F_A
	\]
	with the scalars $\alpha$ and $\beta$ satisfying the following:
	\[
	\al d^2+\beta d=\tr[Y_{A\tilde{A}}]\pl, \quad
	\al d+\beta d^2=\tr[Y_{A\tilde{A}} F_A]\pl.
	\]
	where $d=|A|$.
\end{lemm}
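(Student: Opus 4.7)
The two identities are independent and I would treat them separately, both by very short direct computations.

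For the first identity, I would simply expand $F_A$ in the computational basis. Since $F_A = \sum_{i,j} \ket{i}\bra{j}_A \otimes \ket{j}\bra{i}_{\tilde{A}}$, the tensor structure of $M_A \otimes N_{\tilde{A}}$ immediately gives
\begin{align*}
\tr\bigl[(M_A \otimes N_{\tilde{A}}) F_A\bigr]
= \sum_{i,j} \tr\bigl[M \ket{i}\bra{j}\bigr]\,\tr\bigl[N \ket{j}\bra{i}\bigr]
= \sum_{i,j} \langle j|M|i\rangle\langle i|N|j\rangle = \tr[MN].
\end{align*}
This is the standard ``swap trick'' and there is no real obstacle; the only thing to keep straight is that the indices $i,j$ run over a basis of $\mathsf{A} \cong \mathsf{\tilde A}$, and the partial traces split because of the tensor form.

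For the second identity (the twirling formula), the cleanest route is Schur--Weyl duality applied to the diagonal representation $U \mapsto U \otimes U$ of $\mathds{U}(\mathsf{A})$ on $\mathsf{A} \otimes \mathsf{\tilde A}$. The commutant of this representation is spanned by the image of the symmetric group $S_2$, i.e.\ by $\{\mathds{1}_{A\tilde A}, F_A\}$. The Haar average $\mathbf{E}_{\mathds{U}(\mathsf{A})}(U \otimes U) Y (U^\dagger \otimes U^\dagger)$ commutes with every $U \otimes U$, hence must lie in this commutant, so it equals $\alpha \mathds{1} + \beta F$ for some scalars $\alpha, \beta$ depending linearly on $Y$. (If one wishes to avoid Schur--Weyl, the same conclusion follows from a direct Weingarten calculation of $\mathbf{E}_U U_{ij}U_{kl}\overline{U}_{i'j'}\overline{U}_{k'l'}$, which yields exactly a sum of the two permutation operators with the correct coefficients.)

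To pin down $\alpha$ and $\beta$, I would take traces of both sides against $\mathds{1}_{A\tilde A}$ and against $F_A$. Using $\tr[\mathds{1}] = d^2$, $\tr[F] = d$ (since $F$ has $d$ fixed basis elements $\ket{ii}$ on the diagonal up to permutation), and $F^2 = \mathds{1}$, we obtain
\begin{align*}
\tr\bigl[\mathbf{E}_U (U\otimes U) Y (U^\dagger \otimes U^\dagger)\bigr] &= \tr[Y] = \alpha d^2 + \beta d,\\
\tr\bigl[\mathbf{E}_U (U\otimes U) Y (U^\dagger \otimes U^\dagger)\, F\bigr] &= \tr[YF] = \alpha d + \beta d^2,
\end{align*}
where in the second line the cyclicity of the trace plus $(U \otimes U) F = F (U \otimes U)$ lets one pull the unitaries off $F$. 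This gives the two linear equations in the statement. The system is invertible for $d \geq 2$ and one could solve $\alpha = \frac{d\,\tr[Y] - \tr[YF]}{d(d^2-1)}$, $\beta = \frac{d\,\tr[YF] - \tr[Y]}{d(d^2-1)}$, but the statement only needs the two equations themselves. The main subtlety, such as it is, lies in justifying that the Haar average lands in $\mathrm{span}\{\mathds{1}, F\}$; everything else is one line of linear algebra.
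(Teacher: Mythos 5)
Your proposal is correct. Note that the paper itself does not prove this lemma at all — it is quoted directly from \cite{DBW+14} (Lemmas 3.4 \& 3.5) — and your argument is precisely the standard one used there: the swap-trick identity by expanding $F_A$ in a basis, and for the twirling formula the observation that the Haar average commutes with every $V\otimes V$ and hence lies in $\mathrm{span}\{\mathds{1}_{A\tilde{A}},F_A\}$ by Schur--Weyl duality, with the coefficients fixed by pairing against $\mathds{1}$ and $F_A$ using $\tr[\mathds{1}]=d^2$, $\tr[F_A]=d$, $F_A^2=\mathds{1}$ and the invariance $\tr[(U\otimes U)Y(U^\dagger\otimes U^\dagger)F_A]=\tr[YF_A]$.
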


\begin{lemm} \label{lemm:phi_twirling}
	Let $A$ be a Hilbert space with dimension $|A|=d$ and $\mathsf{A'},\mathsf{\tilde{A}},\mathsf{\tilde{A}'}$ be copies of $A$, Then
	\begin{align}
	\mathbf{E}_{\mathds{U}(\mathsf{A})}  \left[\tilde{\Phi}_{A'A}^{U^\dagger}\ten \tilde{\Phi}_{\tilde{A}\tilde{A}'}^{U^\dagger} \right]
	=\left( \frac{1}{d^2-1} \mathds{1}_{AA'\tilde{A}\tilde{A}'}-\frac{1}{d^3-d}F_A\ten \mathds{1}_{A'\tilde{A}'}-\frac{1}{d^3-d} F_{A'}\ten \mathds{1}_{A\tilde{A}}+\frac{1}{d^2-1}F_{AA'}\right),
	\end{align}
	where $F_A =\sum_{i,j} |i\rangle \langle j| \ten |j\rangle \langle i| \in \mathcal{B}(\mathsf{A}\otimes \mathsf{\tilde{A}})$ be the swap operator between $\mathsf{A}$ and its copy $\mathsf{\tilde{A}}$, $\tilde{\Phi}_{AA'} = \sum_{i,j} |i\rangle\langle j| \otimes |i\rangle \langle j|$ be the non-normalized maximally entangled state between $\mathsf{A}$ and $\mathsf{A}'$ and $\tilde{\Phi}_{AA'}^U=(U_A\ten \mathds{1}_{A'})\Phi_{AA'}(U_A\ten \mathds{1}_{A'})$ be the rotated version by $U$ acting on $A$ (resp. $\tilde{A}$) .
\end{lemm}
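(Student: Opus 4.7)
The strategy is to apply the Schur--Weyl-type identity of Lemma~\ref{lemm:swap}, promoted to the setting where the unitary acts only on a subset of the systems. Set $Y_{A'A\tilde A\tilde A'}:=\tilde\Phi_{A'A}\otimes\tilde\Phi_{\tilde A\tilde A'}$. Since the same Haar-distributed $U$ is inserted on both $\mathsf A$ and $\mathsf{\tilde A}$ (via $\tilde\Phi^{U^\dagger}$), and the Haar measure is invariant under $U\mapsto U^\dagger$, the integrand is unitarily conjugated by $U_A\otimes U_{\tilde A}$, with the systems $A'\tilde A'$ left untouched. By Lemma~\ref{lemm:swap}, the Haar average on $\mathsf A\otimes\mathsf{\tilde A}$ takes values in the commutant algebra $\mathrm{span}\{\mathds 1_{A\tilde A},F_A\}$. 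Allowing the coefficients to live in $\mathcal B(\mathsf A'\otimes\mathsf{\tilde A}')$ (this is legitimate because the twirl commutes with partial traces on the inactive systems), I conclude
\[
\mathbf E_{\mathds U(\mathsf A)}\!\left[\tilde\Phi_{A'A}^{U^\dagger}\otimes\tilde\Phi_{\tilde A\tilde A'}^{U^\dagger}\right]=\alpha_{A'\tilde A'}\otimes\mathds 1_{A\tilde A}+\beta_{A'\tilde A'}\otimes F_A,
\]
for some operators $\alpha,\beta$ on $\mathsf A'\otimes\mathsf{\tilde A}'$.

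To pin down $\alpha$ and $\beta$, I will exploit the fact that partial traces over $A\tilde A$ commute with the Haar integration. Using $\tr_A[\tilde\Phi_{A'A}]=\mathds 1_{A'}$ (and analogously on the tilded side), the first trace is immediate:
\[
\tr_{A\tilde A}[Y]=\mathds 1_{A'\tilde A'}.
\]
For the second trace, a direct index computation using $|\tilde\Phi_{A'A}\rangle=\sum_i|i\rangle_{A'}|i\rangle_A$ and the defining identity $\tr_A[M_A N_A]=\tr[(M_A\otimes N_{\tilde A})F_A]$ of Lemma~\ref{lemm:swap} yields
\[
\tr_{A\tilde A}\!\left[Y\, F_A\right]=F_{A'},
\]
where $F_{A'}$ is the swap on $\mathsf A'\otimes\mathsf{\tilde A}'$. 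Matching these against the $\alpha$--$\beta$ decomposition gives the linear system $\alpha d^2+\beta d=\mathds 1_{A'\tilde A'}$ and $\alpha d+\beta d^2=F_{A'}$, whose solution is
\[
\alpha=\frac{\mathds 1_{A'\tilde A'}}{d^2-1}-\frac{F_{A'}}{d^3-d},\qquad \beta=\frac{F_{A'}}{d^2-1}-\frac{\mathds 1_{A'\tilde A'}}{d^3-d}.
\]
Plugging back, and using $F_{A'}\otimes F_A=F_{AA'}$ (in the notation of the statement, where $F_{AA'}$ denotes the joint swap of $\mathsf A\otimes\mathsf{A}'$ with $\mathsf{\tilde A}\otimes\mathsf{\tilde A}'$), reproduces the four-term expression claimed in the lemma.

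The only step that is not purely formal is verifying $\tr_{A\tilde A}[Y F_A]=F_{A'}$, where one has to be careful about which systems the swap acts on and in what order the contractions are taken. I expect this to be the main (and essentially the only) obstacle; it is a bookkeeping computation that should be done in the orthonormal basis $\{|i\rangle\}$, contracting the four matrix-unit indices coming from $\tilde\Phi_{A'A}$, $\tilde\Phi_{\tilde A\tilde A'}$ and $F_A$ against one another. No tool beyond Lemma~\ref{lemm:swap} is needed.
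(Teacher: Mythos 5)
Your proposal is correct, and the final answer matches the lemma; I checked both trace identities ($\tr_{A\tilde A}[Y]=\mathds{1}_{A'\tilde A'}$ and $\tr_{A\tilde A}[Y F_A]=F_{A'}$, the step you flagged) and the resulting $2\times 2$ system, whose solution reproduces the four-term expression after using $F_{A'}\otimes F_A=F_{AA'}$. Your route is, however, genuinely different from the paper's: the paper expands $\tilde\Phi_{A'A}\otimes\tilde\Phi_{\tilde A\tilde A'}$ in matrix units $e_{ij}\otimes e_{kl}$, applies the scalar twirling formula of Lemma~\ref{lemm:swap} case by case according to the index pattern ($i=j=k=l$, $i=j\neq k=l$, $i=l\neq j=k$, else zero), and then resums, using $\frac{1}{d(d+1)}=\frac{1}{d^2-1}-\frac{1}{d^3-d}$ to merge the cases. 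You instead invoke the operator-valued form of the twirl, writing $\mathbf{E}_U[\cdot]=\alpha_{A'\tilde A'}\otimes\mathds{1}_{A\tilde A}+\beta_{A'\tilde A'}\otimes F_A$, and determine the operator coefficients from two partial-trace constraints. Your approach avoids the index casework and is arguably cleaner and closer to the structural reason the lemma holds (Schur--Weyl on $A\tilde A$ with $A'\tilde A'$ as spectators); the paper's computation is more elementary and fully explicit. Two small points to tighten: the stated justification for allowing operator coefficients (``the twirl commutes with partial traces'') is not quite the right reason --- what you need is that the twirl acts as $\mathcal{E}_{A\tilde A}\otimes\id_{A'\tilde A'}$, so expanding $Y=\sum_k X_k\otimes Z_k$ and applying Lemma~\ref{lemm:swap} to each $X_k$ gives the claimed form by linearity (this also yields your linear system directly, with operator right-hand sides $\tr_{A\tilde A}[Y]$ and $\tr_{A\tilde A}[YF_A]$); and if you instead derive the system by tracing the Haar average against $F_A$, you should note explicitly that $F_A$ commutes with $U_A\otimes U_{\tilde A}$, which is what lets you pull the conjugation out under the partial trace.
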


\begin{proof}
 Note that $\tilde{\Phi}_{A'A}\ten \tilde{\Phi}_{\tilde{A}\tilde{A}'}=\Big(\sum_{ij}e_{ij}\ten e_{ij}\Big)\ten \Big(\sum_{kl}e_{kl}\ten e_{kl}\Big)$ by denoting $e_{ij} := |i\rangle \langle j| $ for simplicity. Using the fact
\[ \tr[e_{ij}\ten e_{kl}]=
\begin{dcases}
	1, & \mbox{if } i=j,k=l \\
	0, & \mbox{otherwise}.
\end{dcases}\pl \text{ and } \pl
\tr[e_{ij}e_{kl}]
=\begin{dcases}
	1, & \mbox{if } j=k,i=l \\
	0, & \mbox{otherwise}.
\end{dcases}\]
we have by Lemma \ref{lemm:phi_twirling}, on $A\ten \tilde{A}$
\begin{align*}
	\int_{\mathds{U}(\mathsf{A})} U\ten U (e_{ij}\ten e_{kl})U^\dagger\ten U^\dagger \, \mathrm{d}U=
	\begin{dcases}
		\frac{1}{d(d+1)} \mathds{1}_{A\tilde{A}}+\frac{1}{d(d+1)}F_A, & \mbox{if } i=j=k=l \\
		\frac{1}{d^2-1}\mathds{1}_{A\tilde{A}}-\frac{1}{d^3-d}F_A, & \mbox{if } i=j\neq k=l \\
		-\frac{1}{d^3-d}\mathds{1}_{A\tilde{A}} +\frac{1}{d^2-1}F_A, & \mbox{if } i=l\neq j=k \\
		0, & \mbox{otherwise}.
	\end{dcases}
\end{align*}
Note that $\frac{1}{d(d+1)}=\frac{1}{d^2-1}-\frac{1}{d^3-d}$. Then,
\begin{align*}
	&\mathbf{E}_{\mathds{U}(\mathsf{A})}  \left[\tilde{\Phi}_{A'A}^{U^\dagger}\ten \tilde{\Phi}_{\tilde{A}\tilde{A}'}^{U^\dagger} \right] \\
	&= \sum_{i} e_{ii}\ten \left(\frac{1}{d(d+1)} \mathds{1}_{A\tilde{A}} +\frac{1}{d(d+1)}F_A \right)\ten e_{ii}+\sum_{i\neq k}e_{ii}\ten \left(\frac{1}{d^2-1} \mathds{1}_{A\tilde{A}} -\frac{1}{d^3-d} F_A \right)\ten e_{kk}\\
	&\quad +\sum_{i\neq j}e_{ij}\ten \left(-\frac{1}{d^3-d} \mathds{1}_{A\tilde{A}} +\frac{1}{d^2-1} F_A\right)\ten e_{ji}\\
	&=\sum_{i, k} e_{ii}\ten \left(\frac{1}{d^2-1}\mathds{1}_{A\tilde{A}}-\frac{1}{d^3-d}F_A\right)\ten e_{kk}+\sum_{i, j}e_{ij}\ten \left(-\frac{1}{d^3-d}\mathds{1}_{A\tilde{A}}+\frac{1}{d^2-1}F_A\right)\ten e_{ji}\\
	&=\mathds{1}_{A'\tilde{A}'} \ten \left(\frac{1}{d^2-1}\mathds{1}_{A\tilde{A}'}-\frac{1}{d^3-d}F_A \right)+F_{A'}\ten \left(-\frac{1}{d^3-d}\mathds{1}_{A\tilde{A}}+\frac{1}{d^2-1}F_A \right)\\
	&=\frac{1}{d^2-1}\mathds{1}_{AA'\tilde{A}\tilde{A}'}-\frac{1}{d^3-d}F_A\ten\mathds{1}_{A
'\tilde{A}'} -\frac{1}{d^3-d}F_{A'}\ten \mathds{1}_{A\tilde{A}}+\frac{1}{d^2-1}F_{AA'}.
\end{align*}
\end{proof}

\section{Complex Interpolation} \label{sec:interpolation}

In this section, we briefly review the definition of the complex interpolation. We refer to \cite{bergh2012interpolation} for a detailed account of interpolation spaces.  Let $X_0$ and $X_1$ be two Banach spaces. Assume that there exists a Hausdorff topological vector space $X$ such that $X_0, X_1\subset X$ as subspaces. Let $\mathcal{S}=\{z:0\le \texttt{Re} (z)\le 1\}$ be the unit vertical strip on the complex plane, and $\mathcal{S}_0=\{z : 0< \texttt{Re} (z)< 1\}$ be its open interior. Let $\F(X_0, X_1)$ be the space of all functions $f:\mathcal{S}\to X_0+X_1$, which are bounded and continuous on $\mathcal{S}$ and analytic on $\mathcal{S}_0$, and moreover
\[\{f(\mathrm{i}t)\pl : t\in \mathbb{R}\}\subset X_0\pl ,\pl \{f(1+\mathrm{i}t)\pl : t\in \mathbb{R}\}\subset X_1\pl.\]
$\F(X_0, X_1)$ is again a Banach space equipped with the norm
\[\norm{f}{\F} :=\max\left\{\pl \sup_{t\in \mathbb{R}} \norm{f(\mathrm{i}t)}{X_0}\pl,\pl \sup_{t\in \mathbb{R}}\norm{f(1+\mathrm{i}t)}{X_1}\right\}\pl. \]
The complex interpolation space $(X_0,X_1)_\theta$, for $0\le \theta\le 1$, is the quotient space of $\F(X_0,X_1)$ as follows,
\[(X_0, X_1)_\theta=\{\pl x\in X_0+X_1 :  x=f(\theta) \pl \text{for some }\pl  f\in \F(X_0, X_1)\pl\} \pl.\]
where quotient norm is
\begin{align}\label{def}\norm{x}{\theta}=\inf \{\pl\norm{f}{\F} \pl : f(\theta)=x \pl\}\pl .\end{align}
It is clear from the definition that $X_0=(X_0,X_1)_0,  X_1=(X_0,X_1)_1$. For all $0<\theta<1$,
$(X_0,X_1)_\theta$ are called interpolation space of $(X_0,X_1)$.

The most basic example is the $p$-integrable function spaces $L_p(\Omega,\mu)$ of a positive measure space $(\Omega,\mu)$.
For $1\le p\le \infty$, $L_p(\Omega,\mu)$  forms a family of interpolation spaces, i.e.
\begin{align}\label{eq:in} L_p(\Omega,\mu)\cong [L_{p_0}(\Omega,\mu),{L_{p_1}}(\Omega,\mu)]_\theta \end{align}
holds isometrically for all $1\le p_0,p_1,p\le \infty, 0\le \theta\le 1$ such that
$\frac{1}{p}=\frac{1-\theta}{p_0}+\frac{\theta}{p_1}$. For a von Neumann algebra $(\mathcal{M},\text{Tr})$ equipped with normal faithful semifinite trace $\text{Tr}$,
the noncommutative $L_p$-norm is defined as $\|x\|_p=\text{Tr}[|x|^p]^{\sfrac{1}{p}}$ and $L_p(\mathcal{M},\text{Tr})$ (or shortly $L_p(\mathcal{M})$) is the completion of $\{x\in \mathcal{M} : \|x\|_p<\infty\}$. The noncommutative analog of \eqref{eq:in} is that
\begin{align}\label{eq:in2} L_p(\mathcal{M},\text{Tr})\cong [L_{p_0}(\mathcal{M},\text{Tr}),{L_{p_1}}(\mathcal{M},\text{Tr})]_\theta .\end{align}
In particular, the Schatten $p$-class on a Hilbert space $\mathcal{H}$ are the $L_p$ spaces of $(\mathcal{B(H)},\tr)$ which satisfies
\[S_p(\mathcal{H}) \cong \left[S_{p_0}(\mathcal{H}), S_{p_1}(\mathcal{H}) \right]_{\theta}\pl.\]
Here $S_\infty(\mathcal{H})$ is identified with $\mathcal{B(H)}$.

The complex interpolation relation has been already used in many works in quantum information theory, e.g.~\cite{CGH18}. In this paper, we use the complex interpolation for vector valued $L_p$ space. The first one is a classical-quantum mixture of \eqref{eq:in} and \eqref{eq:in2}.
   For an operator-valued function $f:\Omega \to \mathcal{B(H)}$, its $L_p$ norm is given by
\begin{align} \label{eq:Lp} \norm{f}{L_p(\Omega, S_p(\mathcal{H}))}:=\left(\int_{\Omega} \norm{f(\omega)}{S_p(\mathcal{H})}^p \d\mu(\omega)\right)^{\sfrac{1}{p}}.
\end{align}
Note that $L_p(\Omega, S_p(\mathcal{H}))$ is exactly the $L_p$-space of classical-quantum system $L_\infty(\Omega,\mathcal{B(H)})$, which is a von Neumann algebra equipped with the trace $\tau(f)=\int_{\Omega}\tr[f(\omega)] \,\mathrm{d}\mu(\omega)$. Thus $L_p(\Omega, S_p(\mathcal{H}))$ satisfies complex interpolation by \eqref{eq:in2}. That is, for $\frac{1}{p}=\frac{1-\theta}{p_0}+\frac{\theta}{p_1}$ and $p_0,p_1\in [1,\infty]$.
\[ L_{p}(\Omega,S_{p}(\mathcal{H}))\cong [L_{p_0}(\Omega,S_{p_0}(\mathcal{H})),L_{p_1}(\Omega,S_{p_1}(\mathcal{H}))]_\theta\]
In particular,
$L_2(\Omega, S_2(\mathcal{H}))$ is a Hilbert space with inner product.
\[ \langle f, g\rangle = \int_{\Omega}\tr[f(\omega)^*g(\omega)]\,\d\mu(\omega).\]
Another more advanced interpolation spaces we use is Pisier's noncommutative vector-valued $L_p$-space introduced in \cite{pisier1998non} (see Section \ref{sec:preliminaries} for definitions). That is, for $\frac{1}{p}=\frac{1-\theta}{p_0}+\frac{\theta}{p_1}, \frac{1}{q}=\frac{1-\theta}{q_0}+\frac{\theta}{q_1}$, and $\theta\in[0,1], p_0,p_1,q_0,q_1\in [1,\infty]$,
\begin{align} \label{eq:complex} S_q(\mathsf{B},S_{p}(\mathsf{A}))=[S_{q_0}(\mathsf{B},S_{p_0}(\mathsf{A})),S_{q_1}(\mathsf{B},S_{p_1}(\mathsf{A}))]_{\theta}\ ,
\end{align}

For these interpolation spaces, we will use the following Riesz–Thorin interpolation theorem.

\begin{theo}[Riesz--Thorin interpolation theorem] \label{thm:interpolation} Let $(X_0, X_1)$ and $(Y_0, Y_1)$ be two compatible couples of Banach spaces and let $(X_0, X_1)_\theta$ and $(Y_0, Y_1)_\theta$ be the corresponding interpolation space of exponent $\theta$. Suppose $T:X_0 + X_1 \to Y_0 + Y_1$, is a linear operator bounded from $X_j$ to $Y_j$, $j=0, 1$. Then $T$ is bounded from $(X_0, X_1)_\theta$ to $(Y_0, Y_1)_\theta$, and moreover,
\[ \|T:(X_0, X_1)_\theta\to (Y_0, Y_1)_\theta\|_{}\leq \|T:X_0\to Y_0\|_{}^{1-\theta }\|T:X_1\to Y_1\|_{}^{\theta}.\]
\end{theo}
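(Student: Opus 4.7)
The plan is to reduce the operator norm bound to the definition of the interpolation norm by constructing an appropriate element of $\F(Y_0, Y_1)$ out of a representer of $x \in (X_0, X_1)_\theta$, weighted so that the boundary estimates match. Write $M_j := \|T: X_j \to Y_j\|$ for $j=0,1$, and assume first that $M_0, M_1 > 0$; the degenerate cases can be recovered at the end by replacing $M_j$ with $M_j + \delta$ and sending $\delta \to 0^+$.

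Fix $x \in (X_0, X_1)_\theta$ and $\epsilon > 0$. By the definition \eqref{def}, I would select $f \in \F(X_0, X_1)$ with $f(\theta) = x$ and $\|f\|_\F \leq \|x\|_\theta + \epsilon$. The central construction is the weighted composition
\[ g(z) := M_0^{\,z-1}\, M_1^{-z} \, T(f(z)), \quad z \in \mathcal{S}. \]
The scalar factors are entire and bounded on $\mathcal{S}$, so once I verify that $z \mapsto T(f(z))$ belongs to $\F(Y_0, Y_1)$, it follows that $g \in \F(Y_0, Y_1)$ as well. This verification has three ingredients: $T$ extends to a bounded linear operator $X_0 + X_1 \to Y_0 + Y_1$ (using consistency of $T$ on $X_0 \cap X_1$, which is part of being a morphism of compatible couples); a bounded linear operator between Banach spaces preserves continuity on the closed strip and analyticity on the open strip; and $T$ maps $X_j$ to $Y_j$, so the boundary values of $T(f(\cdot))$ lie in the correct spaces. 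The decisive observation is that $|M_j^{\pm it}| = 1$ for $t \in \mathbb{R}$, which yields on $\operatorname{Re}(z) = 0$,
\[ \|g(it)\|_{Y_0} = M_0^{-1}\, \|T(f(it))\|_{Y_0} \leq M_0^{-1}\cdot M_0 \|f(it)\|_{X_0} \leq \|f\|_\F, \]
and symmetrically $\|g(1+it)\|_{Y_1} \leq \|f\|_\F$. Hence $\|g\|_{\F(Y_0,Y_1)} \leq \|f\|_\F \leq \|x\|_\theta + \epsilon$.

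Evaluating at $z = \theta$ gives $g(\theta) = M_0^{\theta - 1} M_1^{-\theta}\, T(x)$, so by the definition of the interpolation norm,
\[ \|T(x)\|_{(Y_0,Y_1)_\theta} \;=\; M_0^{1-\theta} M_1^{\theta}\, \|g(\theta)\|_{(Y_0,Y_1)_\theta} \;\leq\; M_0^{1-\theta} M_1^{\theta}\, \|g\|_\F \;\leq\; M_0^{1-\theta} M_1^{\theta}\, (\|x\|_\theta + \epsilon). \]
Letting $\epsilon \to 0^+$ and taking the supremum over $x$ with $\|x\|_\theta \leq 1$ gives the stated norm bound. The main technical point — and where I expect the delicacy to lie — is the verification that $z \mapsto T(f(z))$ is in $\F(Y_0, Y_1)$: strong analyticity and continuity on the strip for a Banach-space-valued function are preserved by a bounded operator defined on the sum space, but this implicitly requires the consistency of $T$ between the two components, which is the only place where more than the two boundary norm estimates $M_0, M_1$ is used. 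Everything else is bookkeeping around the weight $M_0^{z-1} M_1^{-z}$, chosen precisely so that boundary values match on the two edges of the strip.
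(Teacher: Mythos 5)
Your proof is correct. The paper does not actually prove Theorem~\ref{thm:interpolation}; it is quoted as a standard fact with a pointer to \cite{bergh2012interpolation}, and your argument is precisely the classical Calder\'on-type proof given there: pick a near-optimal representer $f$ of $x$, form the weighted composition $g(z)=M_0^{\,z-1}M_1^{-z}\,T(f(z))$, check that the unimodular factors on the two boundary lines turn the hypotheses $\|Tf(it)\|_{Y_0}\le M_0\|f(it)\|_{X_0}$ and $\|Tf(1+it)\|_{Y_1}\le M_1\|f(1+it)\|_{X_1}$ into $\|g\|_{\F(Y_0,Y_1)}\le\|f\|_{\F}$, and evaluate at $z=\theta$. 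The points you flag as delicate are exactly the right ones and are easily discharged: $T$ is bounded on $X_0+X_1$ with norm at most $\max(M_0,M_1)$ (split $x=x_0+x_1$ and take the infimum over decompositions), so $z\mapsto T(f(z))$ remains bounded, continuous on $\mathcal{S}$ and analytic on $\mathcal{S}_0$ with values in $Y_0+Y_1$, and the scalar weight is bounded on the strip since $0\le\operatorname{Re}z\le 1$; the $\delta$-regularization disposes of the case $M_0M_1=0$. With the paper's definition \eqref{def} of the quotient norm, nothing further is needed, so the proposal is a complete and faithful substitute for the omitted proof.
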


In particular, for $S_1(\mathsf{B},S_{p}(\mathsf{A}))$, the complex interpolation \eqref{eq:complex}  implies the following interpolation inequality:
\begin{align}\label{eq:interpolation}
	\left\|X \right\|_{ S_1(\mathsf{B}, \, S_{p_\theta}(\mathsf{A})) }
	\leq \left\|X \right\|_{ S_1(\mathsf{B}, \, S_{p_0}(\mathsf{A})) }^{1-\theta}
	\left\|X \right\|_{ S_1(\mathsf{B}, \,S_{p_1}(\mathsf{A})) }^{\theta}.
\end{align}

\section{Multiplicativity of $S_1(S_\al)$-norm}\label{sec:multi}
\begin{prop}
 \label{lemm:multiplicative}
 Let $\alpha \in (0,1) \cup (1,\infty]$ be arbitrary.
For bounded operators $\omega\in \mathcal{B}(\mathsf{A'}\otimes \mathsf{C})$ and $\rho\in \mathcal{B}( \mathsf{A} \otimes \mathsf{E})$,
\begin{align}
	\norm{\omega\ten \rho}{S_1(\mathsf{C}\otimes \mathsf{E},S_\al(\mathsf{A}\otimes \mathsf{A'}))}=\norm{\omega}{S_1(\mathsf{C},S_\al(\mathsf{A'}))}\norm{\rho}{S_1(\mathsf{E},S_\al(\mathsf{A}))}.
\end{align}
In particular, for density operators $\omega\in \mathcal{S}(\mathsf{A'}\otimes \mathsf{C})$ and $\rho\in \mathcal{S}( \mathsf{A} \otimes \mathsf{E})$,
\begin{align} \label{eq:additivity_entropy}
	H_\al^*(A' A {\,|\,} C E)_{\omega \otimes \rho}
	=
	H_\al^*(A' {\,|\,} C)_\omega + H_\al^*(A {\,|\,} E)_\rho.
\end{align}

\end{prop}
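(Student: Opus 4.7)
The plan is to prove the multiplicativity assertion first, from which the additivity of the sandwiched conditional R\'enyi entropy follows immediately: as recorded in Section~\ref{sec:preliminaries}, for a density operator one has $H_\al^*(A'{\,|\,}C)_\omega = -\al' \log \norm{\omega}{S_1(\mathsf{C},S_\al(\mathsf{A'}))}$, so taking logarithms of the product formula yields the claimed entropy additivity. The task therefore reduces to establishing
\[
\norm{\omega \ten \rho}{S_1(\mathsf{C}\ten\mathsf{E},\,S_\al(\mathsf{A}\ten\mathsf{A'}))}
= \norm{\omega}{S_1(\mathsf{C},\,S_\al(\mathsf{A'}))}\cdot \norm{\rho}{S_1(\mathsf{E},\,S_\al(\mathsf{A}))}
\]
for general bounded operators, and I would split this into the two inequalities.

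The upper bound $\leq$ is the easy direction. Using the infimum-over-factorizations formula from Section~\ref{sec:preliminaries}, for any $\epsilon>0$ pick near-optimal factorizations $\omega = (\mathds{1}_{A'} \ten a_1) Y_1 (\mathds{1}_{A'} \ten b_1)$ and $\rho = (\mathds{1}_A \ten a_2) Y_2 (\mathds{1}_A \ten b_2)$ with $\norm{a_i}{2\al'}\norm{Y_i}{\al}\norm{b_i}{2\al'}$ within $\epsilon$ of the corresponding norm. Tensoring gives the valid factorization $\omega \ten \rho = (\mathds{1}_{A'A} \ten (a_1 \ten a_2)) (Y_1 \ten Y_2) (\mathds{1}_{A'A} \ten (b_1 \ten b_2))$, and since the Schatten $p$-norm is multiplicative under tensor products, $\norm{X \ten Y}{p} = \norm{X}{p}\norm{Y}{p}$, the product of the three bounds factorizes as desired. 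Sending $\epsilon \to 0$ completes this direction.

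The lower bound $\geq$ is the substantive direction and where I expect the main obstacle to sit. My plan is to invoke the duality $S_1(\mathsf{B},S_\al(\mathsf{A}))^* = S_\infty(\mathsf{B},S_{\al'}(\mathsf{A}))$ under the trace pairing. For near-optimizers $Z_1, Z_2$ in the corresponding dual unit balls, the pairing against $\omega \ten \rho$ satisfies $|\tr[(\omega \ten \rho)(Z_1 \ten Z_2)]| = |\tr[\omega Z_1]|\cdot|\tr[\rho Z_2]|$, so the lower bound on $\norm{\omega \ten \rho}{S_1(S_\al)}$ would follow once we verify that $Z_1 \ten Z_2$ remains in the unit ball of $S_\infty(\mathsf{C}\ten\mathsf{E},\,S_{\al'}(\mathsf{A'}\ten\mathsf{A}))$.

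This mirror multiplicativity in the dual space is where the difficulty concentrates, because the sup-over-contractions defining $\norm{Z_1 \ten Z_2}{S_\infty(S_{\al'})}$ ranges over \emph{all} test elements $a,b \in S_{2\al'}(\mathsf{C}\ten\mathsf{E})$, not merely product ones. To circumvent this I would restrict first to positive $\omega$ and $\rho$ and invoke a direct optimizer argument: the objective $\tau \mapsto \tr[((\mathds{1} \ten \tau^{-1/(2\al')})(\omega \ten \rho)(\mathds{1} \ten \tau^{-1/(2\al')}))^\al]$ is strictly convex in the density $\tau_{CE}$ on $\mathsf{C}\ten\mathsf{E}$, so it admits a unique minimizer. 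The separable structure of the objective forces the KKT fixed-point equation to be solved by $\tau_{CE}^* = \sigma_C^* \ten \gamma_E^*$, where $\sigma_C^*,\gamma_E^*$ are the individual optimizers for $\omega,\rho$; a direct trace computation using Schatten multiplicativity at this optimizer then closes the equality. The extension from positive operators to general bounded operators is then a standard approximation and polar-decomposition argument combined with continuity of the norm.
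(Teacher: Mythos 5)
Your easy direction ($\leq$, by tensoring near-optimal factorizations) is fine and matches the paper. The problem is the lower bound, and the gap sits exactly where you predicted it would: you correctly observe that the dual route requires showing $Z_1\ten Z_2$ stays in the unit ball of $S_\infty(\mathsf{C}\ten\mathsf{E},S_{\al'}(\mathsf{A}\ten\mathsf{A'}))$ even though the defining supremum runs over non-product contractions on $\mathsf{C}\ten\mathsf{E}$, but you then replace this with an unproven claim. The assertion that ``the separable structure of the objective forces the KKT fixed-point equation to be solved by $\tau^*_{CE}=\sigma^*_C\ten\gamma^*_E$'' is precisely the additivity statement in disguise: to make it a proof you must (i) establish convexity (indeed strict convexity, and only on full-rank densities, with the boundary handled separately) of $\tau\mapsto\Tr\bigl[\bigl((\mathds{1}\ten\tau^{-\frac{1}{2\al'}})(\omega\ten\rho)(\mathds{1}\ten\tau^{-\frac{1}{2\al'}})\bigr)^{\al}\bigr]$, which for $\al\in(0,1)$ flips to a concave maximization and for $\al=\infty$ degenerates entirely (your proposal never addresses $\al<1$ or $\al=\infty$, both included in the statement), and (ii) actually compute the Fr\'echet derivative of this non-commutative trace functional and verify that the product of the individual optimizers satisfies the joint stationarity condition. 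Neither step is carried out, and (ii) is not a routine consequence of ``separable structure''; it is the substantive computation. The paper avoids this entirely by proving the dual multiplicativity you abandoned: for positive elements the $S_\infty(\mathsf{E},S_1(\mathsf{A}))$ norm is shown to equal $\norm{Y_E}{\infty}$, giving multiplicativity at $p=1$; the general case at $p=1$ follows from the factorization formula \eqref{eq:factorization}; bilinear complex interpolation between $p=1$ and $p=\infty$ then gives contractivity of $(X,Y)\mapsto X\ten Y$ on $S_\infty(\cdot,S_{p}(\cdot))$ for all $p$, and the duality pairing closes the argument exactly as in your first paragraph of the hard direction.

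A second, independent gap is your reduction ``from positive operators to general bounded operators by a standard approximation and polar-decomposition argument.'' The vector-valued norm $\norm{\cdot}{S_1(\mathsf{B},S_\al(\mathsf{A}))}$ is not unitarily invariant on $\mathcal{B}(\mathsf{A}\ten\mathsf{B})$: it is defined through conjugation by operators of the form $\mathds{1}_A\ten\sigma_B^{-\frac{1}{2\al'}}$, so it treats the two tensor factors asymmetrically, and the partial isometry $u$ in the polar decomposition $X=u|X|$ acts on the whole space $\mathsf{A}\ten\mathsf{B}$. Consequently $\norm{X}{S_1(\mathsf{B},S_\al(\mathsf{A}))}$ and $\norm{|X|}{S_1(\mathsf{B},S_\al(\mathsf{A}))}$ need not coincide, and density/continuity does not bridge positive and general operators. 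This is exactly why the paper cannot stop at its Step 1 and must introduce the factorization formula $\norm{X}{S_\infty(\mathsf{C},S_1(\mathsf{A'}))}=\sup_{X=ab}\norm{aa^*}{S_\infty(\mathsf{C},S_1(\mathsf{A'}))}^{1/2}\norm{b^*b}{S_\infty(\mathsf{C},S_1(\mathsf{A'}))}^{1/2}$ to pass from positive to general elements before interpolating. So as written your proposal proves the upper bound, correctly locates the difficulty in the lower bound, but does not actually close it.
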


\begin{proof}
By definition, one direction is straightforward, i.e.~
\begin{align*} &\norm{\omega\ten \rho}{S_1(\mathsf{C}\otimes \mathsf{E},S_\al(\mathsf{A}\otimes \mathsf{A'}))}\\
= &\inf_{\sigma_{CE},\gamma_{CE}}\norm{ \sigma_{CE}^{-\frac{1}{2\al'}}(\omega\ten \rho)\gamma_{CE}^{-\frac{1}{2\al'}}}{\al}\\
\le &\inf_{\sigma_{C}\ten \sigma_{E},\gamma_{C}\ten \gamma_{E}}\norm{ (\sigma_{C}\ten \sigma_{E})^{-\frac{1}{2\al'}}(\omega\ten \rho)(\gamma_{C}\ten \gamma_{E})^{-\frac{1}{2\al'}}}{\al}\\
=&\inf_{\sigma_{C},\gamma_{C}}\norm{ \sigma_{C}^{-\frac{1}{2\al'}}\omega_{A'C}\gamma_{C}^{-\frac{1}{2\al'}}}{\al}
\inf_{\sigma_{E},\gamma_{E}}\norm{ \sigma_{E}^{-\frac{1}{2\al'}}\rho_{AE}\gamma_{E}^{-\frac{1}{2\al'}}}{\al}\\
=&\norm{\omega}{S_1(\mathsf{C},S_\al(\mathsf{A'}))}\norm{\rho}{S_1(\mathsf{E},S_\al(\mathsf{A}))},
\end{align*}
where the infimums are over all density operators respectively $\sigma_{C},\sigma_E$, $\sigma_{CE}$ and similarly for $\gamma$.
The converse direction needs several steps.

Step 1. We show  that for positive $X_{A'C}$ and $Y_{AE}$,
\begin{align*}
	\norm{X_{A'C}\ten Y_{AE}}{S_\infty( \mathsf{C}\otimes \mathsf{E},  S_1( \mathsf{A}\otimes \mathsf{A'}) )}=\norm{X_{A'C}}{S_\infty(\mathsf{C},  S_1(\mathsf{A'} ) )}\norm{Y_{AE}}{S_\infty( \mathsf{E},  S_1(\mathsf{A} ) )}
\end{align*}
Indeed, for a positive $Y_{AE}$,
\begin{align*}\norm{Y_{AE}}{S_\infty(  \mathsf{E},  S_1( \mathsf{A}) )}
&=\sup_{\sigma_{E},\gamma_E: \|\sigma_{E}\|_1=\|\gamma_{E}\|_1\leq 1  } \left\|\sigma^{1/2}_{E} Y_{AE} \gamma^{1/2}_{E}\right\|_1\\
&\le \left(\sup_{\sigma_{E}: \|\sigma_{E}\|_1\leq 1  } \left\|\sigma^{1/2}_{E} Y_{AE}^{1/2} \right\|_2\right)\cdot  \left(\sup_{\gamma_{E}: \|\gamma_{E}\|_1 \leq 1  } \left\|Y_{AE}^{1/2} \gamma^{1/2}_{E} \right\|_2\right)\\
&=\left(\sup_{\sigma_{E}: \|\sigma_{E}\|_1\leq 1  } \left\|\sigma^{1/2}_{E} Y_{AE}^{1/2} \right\|_2\right)^2
\\
&=\sup_{\sigma_{E}: \|\sigma_{E}\|_1\leq 1  } \left\|\sigma^{1/2}_{E} Y_{AE} \sigma^{1/2}_{E}\right\|_1
\\
&=\sup_{\sigma_{E}: \|\sigma_{E}\|_1\leq 1  } \tr_{AE}[\sigma_{E} Y_{AE}]
\\
&=\sup_{\sigma_{E}: \|\sigma_{E}\|_1\leq 1  } \tr_{E}[\sigma_{E} Y_{E}]
\\
&=\norm{Y_E}{\infty},
\end{align*}
where we apply the duality between the Schatten $1$-norm and operator norm.

Applying this property to $X_{A'C}\ten Y_{AE}$, we have
\begin{align*}\norm{X_{A'C}\ten Y_{AE}}{S_\infty( \mathsf{C}\otimes \mathsf{E},  S_1( \mathsf{A}\otimes \mathsf{A'}) )}
&=\norm{X_{C}\ten Y_E}{\infty} \\
&=\norm{X_C}{\infty}\norm{Y_E}{\infty} \\
&=\norm{X_{A'C}}{S_\infty(\mathsf{C},  S_1(\mathsf{A'} ) )}\cdot\norm{Y_{AE}}{S_\infty( \mathsf{E},  S_1(\mathsf{A} ) )}.
\end{align*}

Step 2. We show  that for  general $X_{A'C}$ and $Y_{AE}$, i.e.~
\begin{align*}\norm{X_{A'C}\ten Y_{AE}}{S_\infty( \mathsf{C}\otimes \mathsf{E},  S_1( \mathsf{A}\otimes \mathsf{A'}) )}=\norm{X_{A'C}}{S_\infty(\mathsf{C},  S_1(\mathsf{A'} ) )}\norm{Y_{AE}}{S_\infty( \mathsf{E},  S_1(\mathsf{A} ) )}.
\end{align*}
We need the following factorization formula
\begin{align}
	\norm{X_{A'C}}{S_\infty( \mathsf{C},S_1(\mathsf{A'}))}
=\sup_{X=ab} \norm{ aa^*}{S_\infty( \mathsf{C},S_1(\mathsf{A'}))}^{1/2}\norm{b^*b}{S_\infty( \mathsf{C},S_1(\mathsf{A'}))}^{1/2}. \label{eq:factorization}
\end{align}
One direction is easy, that for any factorization $X=ab$
\begin{align*}
	\norm{X_{A'C}}{S_\infty( \mathsf{C},S_1(\mathsf{A'}))}
	&=\sup_{\sigma,\gamma}\norm{\sigma_C^{1/2} X\gamma_C^{1/2}}{1}\\
	&\le \sup_{\sigma,\gamma}\norm{\sigma_C^{1/2} a}{2}\norm{b\gamma_C^{1/2}}{2}\\
&= \sup_{\sigma,\gamma}\norm{\sigma_C^{1/2} aa^*\sigma_C^{1/2}}{1}^{1/2}\norm{\gamma_C^{1/2}b^*b\gamma_C^{1/2}}{1}^{1/2}\\
&= \norm{ aa^*}{S_\infty( \mathsf{C},S_1(\mathsf{A'}))}^{1/2}\norm{b^*b}{S_\infty( \mathsf{C},S_1(\mathsf{A'}))}^{1/2}.
\end{align*}
To see the inequality is attained,
we first find optimal $\sigma$ and $\gamma$ in the definition of $\norm{X}{S_\infty( \mathsf{C},S_1(\mathsf{A'}))}$. Then \[\norm{X}{S_\infty( \mathsf{C},S_1(\mathsf{A'}))}=\norm{\sigma^{1/2} X\gamma^{1/2}}{1}=\norm{X_1}{2}\norm{X_2}{2}\] for some $X_1,X_2$ with $X_1X_2=\sigma^{1/2} X\gamma^{1/2}$. Then $a=\sigma^{-1/2}X_1, b=X_2\gamma^{-1/2}$ is the optimal factorization. Now choose $X=ab$ and $Y=cd$ with the above equality \eqref{eq:factorization} being attained. Then, for $X\ten Y=(a\ten c)(b\ten d)$,
\begin{align*}
	\norm{X_{A'C}\ten Y_{AE}}{S_\infty(\mathsf{C}\otimes \mathsf{E},S_1(\mathsf{A}\otimes\mathsf{A'}))}&\le \norm{aa^*\ten cc^*}{S_\infty(\mathsf{C}\otimes \mathsf{E},S_1(\mathsf{A}\otimes\mathsf{A'}))}^{1/2}\norm{b^*b\ten d^*d}{S_\infty(\mathsf{C}\otimes \mathsf{E},S_1(\mathsf{A}\otimes\mathsf{A'}))}^{1/2}\\
&\le \norm{aa^*}{S_\infty( \mathsf{C},S_1(\mathsf{A'}))}^{1/2}\norm{ cc^*}{S_\infty(E,S_1(A))}^{1/2}\norm{ d^*d}{S_\infty(E,S_1(A))}^{1/2}\norm{b^*b}{S_\infty( \mathsf{C},S_1(\mathsf{A'}))}^{1/2}\\
&=\norm{X_{A'C}}{S_\infty(\mathsf{C},  S_1(\mathsf{A'} ) )}\norm{Y_{AE}}{S_\infty( \mathsf{E},  S_1(\mathsf{A} ) )}.
\end{align*}
where in the second inequality we used Step 1. The other direction is easy by definition
\begin{align*}\norm{X_{A'C}\ten Y_{AE}}{S_\infty( \mathsf{C}\otimes \mathsf{E},  S_1( \mathsf{A}\otimes \mathsf{A'}) )}\ge \norm{X_{A'C}}{S_\infty(\mathsf{C},  S_1(\mathsf{A'} ) )}\norm{Y_{AE}}{S_\infty( \mathsf{E},  S_1(\mathsf{A} ) )}.
\end{align*}

Step 3. We show
\[ \norm{X_{A'C}\ten Y_{AE}}{S_\infty( \mathsf{C}\otimes \mathsf{E},S_p(\mathsf{A}\otimes \mathsf{A'}))}
=\norm{X_{A'C}}{S_\infty(\mathsf{C},  S_p(\mathsf{A' }) )}\norm{Y_{AE}}{S_\infty(\mathsf{E},  S_p(\mathsf{A} ) )}\]
for any $1\le p\le \infty$ by interpolation. Consider the map \begin{align*}
	&{t}: S_\infty(\mathsf{C},  S_p(\mathsf{A' }) )\times S_\infty(\mathsf{E},  S_p(\mathsf{A} ) )\to S_\infty( \mathsf{C}\otimes \mathsf{E},S_p(\mathsf{A}\otimes \mathsf{A'}))\\
& t(X,Y) :=X\ten Y
\end{align*}
We claim that this bilinear map is a contraction. The case of $p=1$ was proved above. As for, $p=\infty$, noting that $S_\infty( \mathsf{E},  S_\infty( \mathsf{A} ))=\mathcal{B}(\mathsf{E}\ten \mathsf{A})$, the claim  is obvious.
Then for general $p \in (1,\infty)$,
using complex interpolation
\begin{align}
	S_\infty(\mathsf{C},  S_p(\mathsf{A' }) ) &= \left[ S_\infty(\mathsf{C},  S_1(\mathsf{A' }) ), S_\infty(\mathsf{C},  S_\infty(\mathsf{A' }) ) \right]_\theta; \\
	S_\infty(\mathsf{E},  S_p(\mathsf{A} ) ) &= \left[ S_\infty(\mathsf{E},  S_1(\mathsf{A} ) ), S_\infty(\mathsf{E},  S_1(\mathsf{A} ) ) \right]_\theta; \\
	S_\infty( \mathsf{C}\otimes \mathsf{E},S_p(\mathsf{A}\otimes \mathsf{A'})) &= \left[ S_\infty( \mathsf{C}\otimes \mathsf{E},S_1(\mathsf{A}\otimes \mathsf{A'})), S_\infty( \mathsf{C}\otimes \mathsf{E},S_\infty(\mathsf{A}\otimes \mathsf{A'})) \right]_\theta,
\end{align}
for $\theta = \frac{p-1}{p} \in (0,1)$,
we have
\begin{align}
	&\left\|{t}: S_\infty(\mathsf{C},  S_p(\mathsf{A' }) )\times S_\infty(\mathsf{E},  S_p(\mathsf{A} ) )\to S_\infty( \mathsf{C}\otimes \mathsf{E},S_p(\mathsf{A}\otimes \mathsf{A'})) \right\| \\
	&\leq \left\|{t}: S_\infty(\mathsf{C},  S_1(\mathsf{A' }) )\times S_\infty(\mathsf{E},  S_1(\mathsf{A} ) )\to S_\infty( \mathsf{C}\otimes \mathsf{E},S_1(\mathsf{A}\otimes \mathsf{A'})) \right\|^{1-\theta}\\
	&\quad \cdot\left\|{t}: S_\infty(\mathsf{C},  S_\infty(\mathsf{A' }) )\ten S_\infty(\mathsf{E},  S_\infty(\mathsf{A} ) )\to S_\infty( \mathsf{C}\times \mathsf{E},S_\infty(\mathsf{A}\otimes \mathsf{A'})) \right\|^\theta \\
	&=1^{1-\theta} \cdot 1^{\theta} = 1,
\end{align}
which implies that
\begin{align*}\norm{X_{A'C}\ten Y_{AE}}{S_\infty( \mathsf{C}\otimes \mathsf{E},S_p(\mathsf{A}\otimes \mathsf{A'}))}
&\le
\norm{X_{A'C}}{S_\infty(\mathsf{C},  S_p(\mathsf{A' }) )}\norm{Y_{AE}}{S_\infty(\mathsf{E},  S_p(\mathsf{A} ) )}.
\end{align*}
The other direction follows from definition.

Step 4. Now we use the duality that for $\frac{1}{\al'}+\frac{1}{\al}=1$,
\[ S_1( \mathsf{C},  S_\al(\mathsf{A'} ) )^*
=S_\infty( \mathsf{C},  S_{\al'}(\mathsf{A'} ) ).\]
Then
\begin{align*}
\norm{\omega\ten \rho}{S_1(\mathsf{C}\otimes \mathsf{E},S_\al(\mathsf{A}\otimes \mathsf{A'}))}
=&\sup_{W_{AA'CE}}| \tr[W(\omega\ten \rho)]|\\
\ge&  \sup_{X_{A'C},Y_{AE}}| \tr[(X\ten Y)(\omega\ten \rho)]|\\
=&  \sup_{X_{A'C},Y_{AE}}| \tr [X\omega]|| \tr[Y\rho]|\\
=&\norm{\omega}{S_1(\mathsf{C},S_\al(\mathsf{A'}))}\norm{\rho}{S_1(\mathsf{E},S_\al(\mathsf{A}))},
\end{align*}
where the supremums above are over
\[  \norm{W_{A'CAE}}{S_\infty( \mathsf{C}\otimes \mathsf{E},S_{\al'}(\mathsf{A}\otimes \mathsf{A'}))}
=\norm{X_{A'C}}{S_\infty(\mathsf{C},  S_{\al'}(\mathsf{A' }) )}=\norm{Y_{AE}}{S_\infty(\mathsf{E},  S_{\al'}(\mathsf{A} ) )}=1\pl.   \]
That completes the proof.
\end{proof}

\end{document}